\documentclass[10pt,a4paper ]{article}
\usepackage{fullpage}
\usepackage[T1]{fontenc}
\usepackage[tbtags]{amsmath}
\usepackage{amsfonts,amssymb,amsthm,euscript,pifont}
\usepackage{mathrsfs}
\usepackage{textcomp}
\usepackage{euscript,pifont,boxedminipage}
\usepackage{float}
\usepackage[colorlinks=true]{hyperref}
\oddsidemargin = -7pt 


\newcommand{\Cc}{\mathcal C}

\newcommand{\Vv}{\mathcal V}

\newcommand{\setA}{\mathbb A}

\newcommand{\EE}{\mathbb E}
\newcommand{\NN}{\mathbb N}
\newcommand{\PP}{\mathbb P}
\newcommand{\RR}{\mathbb R}
\newcommand{\TT}{\mathbb T}
\newcommand{\ZZ}{\mathbb Z}

\newcommand{\er} {\mathbb R}
\newcommand{\Hh}{{\mathcal H}}
\newcommand{\tHh}{{\widetilde{\mathcal{H}}}}
\newcommand{\Hu}[1]{{\rm{($\text{H}_{\text{unif}({#1})}$)}}}
\newcommand{\Hw}{{\rm{($\text{H}_{\omega}$)}}}

\newcommand\1{\leavevmode\hbox{\rm \small1\kern-0.35em\normalsize1}}
\theoremstyle{plain}
\newtheorem{thm}{Theorem}[section]
\newtheorem{prop}[thm]{Proposition}

\newtheorem{remark}[thm]{Remark}
\newtheorem{lemma}[thm]{Lemma}

\newtheorem{corollary}[thm]{Corollary}




\numberwithin{equation}{section}
\title{Local  existence of  analytical solutions to an incompressible  Lagrangian stochastic model in a periodic domain}
\author{Mireille Bossy\thanks{INRIA Sophia Antipolis France, EPI TOSCA; Mireille.Bossy@inria.fr}
\and Joaquin Fontbona\thanks{DIM--CMM,  UMI 2807 UChile-CNRS, Universidad de Chile, Casilla 170-3, Correo 3, Santiago,
Chile ;  Partially supported by Fondecyt  Grant 1110923  and Basal-Conicyt-Chile; fontbona@dim.uchile.cl.}
\and Pierre-Emmanuel Jabin\thanks{CSCAMM--Department of Mathematics, University of Maryland, College Park; pjabin@cscamm.umd.edu}
\and Jean-Fran\c{c}ois Jabir\thanks{CIMFAV, Universidad de Valpara\'iso, Chile;
 Partially supported by Fondecyt Grant 3100132 and Conicyt  PAI/ACADEMIA $79090016$;  jean-francois.jabir@uv.cl.}}
\date\today
\begin{document}
\maketitle
\begin{abstract}
We consider an incompressible kinetic Fokker Planck equation in the flat torus, which is a simplified version of the Lagrangian stochastic models for turbulent flows introduced by S.B. Pope in the context of computational fluid dynamics.  The main  difficulties  in its treatment arise from a pressure type force that couples the Fokker Planck equation with a Poisson equation which strongly depends on the second order moments of the fluid velocity. In this paper we  prove short time existence of analytic solutions in the one-dimensional case, for which we are able to use techniques and functional norms that have been recently introduced    in the  study  of a related singular model.
\end{abstract}
\centerline{{\bf Keywords} Fluid particle model;  Incompressibility;   Singular kinetic equation;  Analytic solution.}
\smallskip

\centerline{{\bf AMS 2010 Subject Classification   } 35Q83,
35Q84  ,
82C31.}
\tableofcontents
\section{Introduction}

Let $\TT^{d}:=\er^d\slash \ZZ^{d}$  denote  the flat $d$-dimensional torus and  $\beta\geq 0$,  $\sigma \in \er$ and   $\alpha\in \{0,1\}$ be fixed constants.
 We consider the  following  partial differential equation  in $\TT^{d}\times\er^{d}$ with  (scalar)  unknown  functions  $f(t,x,u)$ and $P(t,x)$:
 \begin{subequations}\label{eq:CVFP}
\begin{align}
&\partial _{t}f(t,x,u) +  u\cdot \nabla_x f(t,x,u)  =
     \nabla_{u}f(t,x,u) \cdot  \left(\nabla_{x}P(t,x)+\beta \left( u-  \alpha \int_{\er^{d}}v f(t,x,v)\,dv \right)\right)   \nonumber  \\   & \hspace{4.8cm}  + \frac{\sigma^2}{2} \triangle_u f(t,x,u)  + \beta \, d \,  f(t,x,u)
\hspace{1cm}  \mbox{ on }(0,T]\times\TT^{d}\times\er^{d},    \label{eq:pde_Fokker-Planck} \\
&f(0,x,u)=f_{0}(x,u) \mbox{ on }\TT^{d}\times\er^{d} \mbox{ and }  \label{eq:pde_init_cond} \\
& \int_{\er^{d}}f(t,x,u)\,du =1\mbox{ on }[0,T]\times\TT^{d}.  \label{eq:pde_mass_constraint}
\end{align}
\end{subequations}
This ``constrained''  equation of  kinetic type can be understood as a the Fokker-Planck equation associated with the  stochastic differential equation   in $\TT^{d}\times\er^{d}$ :
  \begin{subequations}
\label{eq:generic_incompressible_lagrangian}
\begin{align}
&X_{t}=\left[ X_{0} +\int_{0}^{t} U_{s}\,ds\right]  , \quad  U_{t}=U_{0}+\sigma W_{t} -\int_{0}^{t}\nabla_{x}  P (s,X_{s})ds-
\beta\int_{0}^{t}(U_{s} -\alpha \EE(U_s |X_s))ds  \label{SDE}
\\
& law (X_0,U_0) =f_0(x,u)dx\, du  , \label{eq:sde_init_cond}
\\
&\PP(X_{t}\in dx)=dx, \mbox{ for all }t\in[0,T] ,  \label{eq:mass_constraint}
\end{align}
\end{subequations}
where  the drift term is unknown and where $x\mapsto [x]$ denotes the projection on the torus.    Equation \eqref{eq:generic_incompressible_lagrangian}  constitutes  a laboratory example of the class of Lagrangian stochastic models for incompressible turbulent flows, introduced mainly  by S.B. Pope in the eighties in order to  provide  a  fluid-particle description of turbulent flows and  develop probabilistic numerical methods  for their simulation.  We refer the reader to \cite{Pope-94}  for a general  presentation of this turbulent model approach in the framework of computational fluid dynamics, and to  \cite{jabir-10b},~\cite{jabir-10a} for a survey on mathematical problems raised by  the Lagrangian stochastic models.
 In physical terms, when $\alpha=0$, the process $U_t$ representing the velocity of a fluid particle reverts towards the origin like an Ornstein-Uhlenbeck  process with a  potential given by the standard kinetic energy $\EE |U_t|^2$. When $\alpha=1$,  reversion towards the origin in \eqref{SDE}  is replaced by  reversion towards the {\it averaged velocity} or {\it bulk-velocity}, $\EE(U_t |X_t=x) $,  which can be associated  to the local-in-space potential   $\EE(| U_t - \EE(U_t |X_t)|^2|X_t=x)$,  interpreted as the {\it turbulent} kinetic energy (notice  that under condition  \eqref{eq:pde_mass_constraint}  or \eqref{eq:mass_constraint}   $ \int_{\er^{d}} v f(t,x,v)dv$  is the conditional expectation $\EE(U_t |X_t=x)$). In both cases,
 the additional drift  term $\nabla_x P(t,x)$ is interpreted as the gradient  of a pressure field intended to accomplish the  homogeneous mass distribution  constraint specified by  equations  \eqref{eq:pde_mass_constraint} or \eqref{eq:mass_constraint},  in other words to force the particle position  $X_t$ to have a macroscopically uniform spacial distribution.

In spite of its relevance for the simulation of complex fluid dynamics (see e.g. \cite{minier-peirano-01}, \cite{Pope-03} and the references therein), a rigorous mathematical formulation of the Lagrangian stochastic models, and in particular of the  uniform mass distribution constraint, has not yet been given.  Indeed, equations \eqref{eq:CVFP}  and \eqref{eq:generic_incompressible_lagrangian} exhibit several conceptual and technical difficulties, and to our knowledge there is so far no direct strategy for its study or mathematical results  about it, neither in the field of
stochastic processes nor in that  of kinetic PDE. In \cite{jabir-10a}, first well-posedness results on a simpler kinetic model were obtained,  which featured nonlinearity of conditional type. From a probabilistic point of view, the conditional expectation was treated as  a McKean-Vlasov equation. This enabled the authors to also construct  a mean field stochastic particle approximation of the nonlinear model. Combined with  an heuristic  numerical procedure  to deal with the   constraint  \eqref{eq:mass_constraint}  and the pressure term, that particle scheme  gave rise to a stochastic numerical downscaling  method  studied and implemented in \cite{jabir-10b}.  Extensions of some of those results to a relevant instance  of   boundary
  value problem were obtained in \cite{jabir-11a} and \cite{jabir-11c}.
However,  in spite of the formal resemblance  of the uniform mass distribution   with the case of the incompressible Navier-Stokes equations,
there is so far no rigorous mathematical evidence that   \eqref{eq:mass_constraint}  can be satisfied  by adding  a force term of the form $\nabla_{x}P(t,x)$ in the linear  Langevin process (a trivial exception is the situation $\nabla_{x} P\equiv 0$  of the  stationary Langevin process, considered as a benchmark for the stochastic downscaling method in~\cite{jabir-10b}).

The aim of this paper is to address  for the first time the well-posedness of a relatively simple instance of  Lagrangian stochastic models, yet satisfying in a non trivial way the  uniform mass distribution constraint.  A first step in our study will be to  establish  an alternative  formulation of the previous equation.
In the Lagrangian modeling of turbulent flow, the constraint \eqref{eq:mass_constraint} is indeed formulated  heuristically by rather imposing some  divergence free property on the flow, which in the case of  system  \eqref{eq:generic_incompressible_lagrangian} would correspond to a divergence free condition on the bulk velocity field:
\begin{equation*}
\nabla_x  \cdot  \EE(U_t |X_t=x) =0.
\end{equation*}
By taking the divergence of a formal equation for the  bulk velocity derived from the Fokker-Planck equation,  and resorting to a classical projection argument on the space of divergence free fields, it is then assumed that the field $P$  verifies an elliptic PDE, which in our notation  is written as
\begin{equation}\label{ellipticPDE}
\triangle_x P(t,x) = -\sum_{i,j=1}^d \partial _{x_i x_j} \EE\left(U^{(i)}_{t}U^{(j)}_{t}|X_t=x\right)
\end{equation}
(see \cite{Pope-03} for a precise formulation and related numerical issues).
 Consistently with this heuristic point of view,  we will rigorously show below that, under natural assumptions on the initial data, any smooth  pair  $(f,P)$  that is a classical solution to \eqref{eq:CVFP}  must also be a solution to the system
\begin{equation}\label{eq:VFPalternative}
\left\{
\begin{aligned}
&\partial _{t}f(t,x,u) +  u\cdot \nabla_x f(t,x,u)  =
   \frac{\sigma^2}{2} \triangle_u f(t,x,u)  + \beta \, d \,  f(t,x,u)  + \beta u\cdot \nabla_u f(t,x,u)   \\
& \qquad +  \nabla_{u}f(t,x,u) \cdot  \left(\nabla_{x}P(t,x)  -\beta \alpha \int_{\er^{d}}v f(t,x,v)\,dv \right)=0\mbox{ on }(0,T]\times\TT^{d}\times\er^{d},   \\
 &
 f(0,x,u)=f_{0}(x,u)\mbox{ on }\TT^{d}\times\er^{d}, \\
&  \triangle_x P(t,x)=-\sum_{i,j=1}^{d}\partial_{x_i x_j} \int_{\er^d} v_i v_j  f(t,x,v)dv\mbox{ on }[0,T]\times\TT^{d},
\end{aligned}
\right.
\end{equation}
where, plainly, condition \eqref{eq:pde_mass_constraint} has been replaced by the above Poisson equation. The two systems  however seem not to be equivalent in general.  From the PDE  point of view, the  interest of formulation \eqref{eq:VFPalternative} is that it allows us to see the original problem as an instance of Vlasov-Fokker-Planck type equation, albeit  highly singular:  the gradient of the pressure field turns out to be the convolution of the derivative of the periodic Poisson kernel with the function $-\sum_{i,j=1}^{d}\partial_{x_i x_j} \int_{\er^d} v_i v_j  f(t,x,v)dv$.
 Existence of smooth solutions  to  nonlinear kinetic equations with singular potential has been addressed  in several situations, mainly recently,  see e.g.  \cite{Benachour-89}, \cite{Mouhot-Villani-11} for the  nonlinear Vlasov-Poisson equation and  \cite{Besse-al-09}, \cite{Besse-11}  and \cite{Ghendrih-al-09} for gyro-kinetic models. However,  the  simultaneous  regularity control of the solution and its second  moments,  required to rigorously formulate  equation \eqref{eq:VFPalternative},  does not fall into previous  mathematical frameworks.

In the case $d=1$, we can specify $P(t,x)$ on $[0,T]\times\er$ by $P(t,x) = - \int_\er u^2f(t,x,u)du$. Hence, in the present paper, we restrict ourselves to the simpler situation of the one-dimensional equation
\begin{equation}\label{eq:VFP_1D}
\left\{
\begin{aligned}
&\partial _{t}f(t,x,u) +  u\cdot \partial_x f(t,x,u)  =
\frac{\sigma^2}{2}\partial^2_u f(t,x,u)+\beta f(t,x,u)+\beta u\partial_u f(t,x,u)\\
& \qquad +  \partial_{u}f(t,x,u)\left(\partial_{x}P(t,x)  -\beta \alpha \int_{\er}v f(t,x,v)dv\right)=0\mbox{ on }(0,T]\times\TT\times\er,\\
&f(t,x,u)=f_{0}(x,u)\mbox{ on }\TT\times\er, \\
&P(t,x) = - \int_\er u^2f(t,x,u)du\mbox{ on }[0,T]\times\TT.
\end{aligned}
\right.
\end{equation}
 To tackle  the system \eqref{eq:VFP_1D}, we will follow new ideas introduced  in \cite{JabNou-11}, in order to obtain a local existence result of analytical solutions.
Our main results are valid irrespective of whether   $\sigma\neq 0$ or $\sigma=0$, and hold for any $\beta\in \RR$. We summarize them in the following  simplified  statement:
\begin{thm}\label{thm:mainmain} Let $\bar{\lambda}>0$ and $s\geq 4$ be an  even integer. There exist a constant
$\kappa_0 =\kappa_0(\bar{\lambda},s)$ and a positive function $r \mapsto \kappa_1(r,\bar{\lambda},s)$  such that if   $f_0:\TT\times \er \to \er$ of class ${\cal C}^{\infty}$ and $T>0$ satisfy:
\begin{itemize}
\item $\int_{\er}f_0(x,u)du=1$ and $\partial_x\int_{\er^d}u  f_0(x,u)du=0$ for all $x \in \TT$,
\item  $\Vert (1+u^2)^{\frac{s}{2}} \partial^{l}_x \partial^{k}_u f_0 \Vert_{\infty}\leq \frac{C_0(k+m)! (l +n)!}{\overline{\lambda}^{k+l  }}$ for some $n,m\in \NN$, all pair of indices $k,l \in \NN$ and a constant $C_0<\kappa_0(\bar{\lambda},s)$,  and
\item  $T<\kappa_1(C_0,\bar{\lambda},s)$,
\end{itemize}
then a  classic smooth solution  $f$  to equation \eqref{eq:VFP_1D} exists in  $[0,T]\times\TT\times\er$  and satisfies:
$\int_{\er}f(t,x,u)du=1$ and $\partial_x\int_{\er^d}u  f(t,x,u)du=0$ for all $(t,x) \in [0,T]\times \TT$ .
\end{thm}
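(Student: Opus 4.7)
The plan is to adapt the abstract Cauchy--Kovalevskaya/Ovsyannikov scheme developed in~\cite{JabNou-11} to the kinetic setting of~\eqref{eq:VFP_1D}, working in a weighted analytic norm that simultaneously controls $x$-analyticity and polynomial $u$-decay. Fix a time-dependent analyticity radius $\lambda(t)=\bar\lambda-Kt$ and introduce
\begin{equation*}
\|f\|_{\lambda,s}=\sup_{k,l\geq 0}\sup_{(x,u)\in\TT\times\er}\frac{\lambda^{k+l}\,(1+u^2)^{s/2}\,|\partial_x^l\partial_u^k f(x,u)|}{(k+m)!\,(l+n)!},
\end{equation*}
so that by hypothesis $\|f_0\|_{\bar\lambda,s}\leq C_0$. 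The natural iteration is: given $f^{(j)}$, set $P^{(j)}(t,x)=-\int_\er v^2 f^{(j)}(t,x,v)\,dv$ and $b^{(j)}(t,x)=\int_\er v f^{(j)}(t,x,v)\,dv$, and let $f^{(j+1)}$ be the classical solution of the \emph{linear} hypoelliptic equation obtained from~\eqref{eq:VFP_1D} by freezing the drift as $\partial_x P^{(j)}-\beta\alpha\,b^{(j)}$. The condition $s\geq 4$ is what makes the integrals defining $P^{(j)}$ and $b^{(j)}$ convergent, and it propagates the $x$-analyticity of $f^{(j)}$ to them in the form
\begin{equation*}
|\partial_x^{l+1}P^{(j)}(t,x)|\leq \Phi(s)\,\|f^{(j)}\|_{\lambda,s}\,\frac{(l+n+1)!}{\lambda^{l+1}},
\end{equation*}
so the frozen drift stays in the same analytic class modulo an unavoidable one-derivative shift in $x$.

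The core step is to propagate the bound on $\|f^{(j+1)}(t,\cdot,\cdot)\|_{\lambda(t),s}$. Differentiating the linearized equation $k$ times in $u$ and $l$ times in $x$, multiplying by $(1+u^2)^{s/2}$ and applying Leibniz produces discrete convolutions in $(k,l)$ over products of derivatives of $P^{(j)}$, $b^{(j)}$ and $f^{(j+1)}$. The auxiliary shifts $(k+m)$ and $(l+n)$ in the denominator of the norm are precisely what makes these convolutions summable in the Gevrey/analytic style, collapsing them into products of $\|\cdot\|_{\lambda,s}$ norms, plus one exceptional contribution of size $O\bigl(\tfrac{l+n+1}{\lambda}\|f^{(j)}\|_{\lambda,s}\,\|f^{(j+1)}\|_{\lambda,s}\bigr)$ coming from the full transfer of $l+1$ $x$-derivatives onto the pressure. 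The kinetic transport term $u\partial_x f$ is handled by observing that the weight $(1+u^2)^{s/2}$ absorbs the extra factor of $u$ at the cost of one unit of $s$, while $\beta u\partial_u f$ is treated analogously. The unavoidable one-derivative loss is absorbed by the Ovsyannikov majorant trick, i.e.\ by choosing $-\dot\lambda=K$ large enough: one derives a closed differential inequality for $\|f^{(j+1)}\|_{\lambda(t),s}$ that stays finite on $[0,T]$ as long as $T<\kappa_1(C_0,\bar\lambda,s)$ and $C_0<\kappa_0(\bar\lambda,s)$. Uniform bounds on $\{f^{(j)}\}$, together with a contractivity estimate for the differences $f^{(j+1)}-f^{(j)}$ in a slightly smaller analytic scale, deliver a classical $C^{\infty}$ limit $f$ of~\eqref{eq:VFP_1D}.

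Finally, the constraints $\int_\er f\,du=1$ and $\partial_x\!\int_\er uf\,du=0$ must be recovered. Integrating the equation against $1$ and $u\,du$ and using $P=-\int u^2 f\,du$ together with the polynomial $u$-decay provided by $\|f\|_{\lambda(t),s}<\infty$ yields a closed first-order system of nonlinear transport equations for the defects $\rho(t,x)=\int f du-1$ and $m(t,x)=\int u f du$, with analytic coefficients. A Grönwall-type uniqueness argument, applied in the same functional scale and using the initial data $\rho(0,\cdot)=0$ and $\partial_x m(0,\cdot)=0$, then forces $\rho\equiv 0$ and $\partial_x m\equiv 0$ throughout $[0,T]$. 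The main obstacle of the whole proof is precisely the tension between the $x$-analyticity, which is spoiled by a one-derivative loss through the non-local pressure term, and the polynomial $u$-decay of sufficiently high order needed to make the second moment of $f$ well-defined and analytic in $x$; balancing these two requirements is what forces the delicate joint calibration of the constants $\kappa_0(\bar\lambda,s)$ and $\kappa_1(C_0,\bar\lambda,s)$ appearing in the statement.
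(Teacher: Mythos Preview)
Your overall architecture---Picard iteration on the linearized problem, analytic norms with shrinking radius $\lambda(t)=\bar\lambda-Kt$, absorption of the one-derivative pressure loss by the Ovsyannikov mechanism, and a Gr\"onwall argument in the analytic scale for the defect pair $(\rho-1,\partial_x\!\int uf\,du)$---is exactly the paper's strategy. Two points deserve comment, one cosmetic and one a genuine gap.

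\medskip
\emph{Cosmetic.} The paper does not build the weight into the norm. It sets $g=\omega f$ with $\omega(u)=c(1+u^2)^{s/2}$, writes the equivalent equation for $g$ (which picks up bounded zero-order terms involving $\partial_u\ln\omega$ and $h=\tfrac{\omega''}{2\omega}-(\partial_u\ln\omega)^2$), and then works with the \emph{unweighted sum} norm $\|g\|_\lambda=\sum_{k,l}\frac{\lambda^{k+l}}{k!l!}\|\partial_x^k\partial_u^l g\|_\infty$ and its $\lambda$-derivatives. The advantage over your sup-norm with shifted factorials is a clean algebra property $\|\psi_1\psi_2\|_\lambda\le\|\psi_1\|_\lambda\|\psi_2\|_\lambda$, which makes the discrete Leibniz convolutions collapse immediately; with your norm this step would require extra combinatorics and possibly a drift in the shift indices $(m,n)$ along the iteration.

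\medskip
\emph{Substantive.} The sentence ``The kinetic transport term $u\partial_x f$ is handled by observing that the weight $(1+u^2)^{s/2}$ absorbs the extra factor of $u$ at the cost of one unit of $s$'' is where your argument would break. The norm lives at a \emph{fixed} $s$, so trading decay for powers of $u$ cannot close the estimate; you would lose one unit of $s$ at every step and run out. The paper never estimates $u\partial_x$ as a source term. Instead, after differentiating the equation $k$ times in $x$ and $l$ times in $u$, it invokes a \emph{maximum principle} for the full kinetic Fokker--Planck operator
\[
\partial_t+u\,\partial_x-\phi(t,x,u)\,\partial_u-\tfrac{\sigma^2}{2}\partial_u^2
\]
(proved in the appendix via the Feynman--Kac representation of the associated SDE). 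The transport and diffusion pieces thus sit on the operator side and contribute nothing to the $L^\infty$ growth of $\partial_x^k\partial_u^l g$; only the commutators---$l\,\partial_x^{k+1}\partial_u^{l-1}g$ and the Leibniz remainders from $\partial_x Q$, $\partial_u\ln\omega$, $h$---appear on the right, and all of these have \emph{bounded} coefficients. This maximum-principle step is the missing technical device in your write-up: it is precisely what neutralises the unbounded coefficient $u$ in the transport term, and without it the scheme does not close in any $L^\infty$-based analytic norm. The same remark applies to the term $\beta u\,\partial_u f$.
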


In the case $d=1$ and $\sigma\neq 0$, this result will yield the following   statement on the `` incompressible Langevin process''  \eqref{eq:generic_incompressible_lagrangian}:

 \begin{corollary} Under the assumptions Theorem \ref {thm:mainmain},  there exists in $[0,T]$ a solution $(X_t,U_t)\in \TT\times \RR$  of the singular McKean-Vlasov SDE
  \begin{equation*}
\begin{split}
&X_{t}=\left[ X_{0} +\int_{0}^{t} U_{s}\,ds\right]  \\
&   U_{t}=U_{0}+\sigma W_{t} + \int_{0}^{t} \partial_x  \left[  \int_{\er} u^2 p_s(\cdot,u)du\right]     (X_{s})ds-
\beta\int_{0}^{t}\left(U_{s} -\alpha   \int_{\er} u \ p_s( X_{s},u)du \right)ds
\\
& law (X_t,U_t) =p_t(x,u)dx\, du ,  \quad p_0(x,u)=f_0(x,u).
\end{split}
\end{equation*}
Moreover,  $law(X_t)=dx$ for all $t\in [0,T]$ and  $(X,U)$ is a solution of  the stochastic differential equation  \eqref{eq:generic_incompressible_lagrangian} with the pressure field $P(t,x) = - \int_\er u^2p_t(x,u)du$.

 \end{corollary}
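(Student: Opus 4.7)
\medskip

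\noindent\textbf{Plan of proof.} The strategy is to use the analytic solution $f$ supplied by Theorem~\ref{thm:mainmain} to freeze the coefficients of the SDE, solve a classical linear SDE with smooth coefficients, and then identify the resulting time-marginal law with $f$ via a uniqueness argument for the associated (linear) kinetic Fokker--Planck equation.

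\smallskip

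\noindent First, starting from the smooth solution $f$ of \eqref{eq:VFP_1D} on $[0,T]\times \TT\times \er$ given by Theorem~\ref{thm:mainmain}, define the bulk velocity $b(t,x):=\int_{\er} u\,f(t,x,u)\,du$ and the pressure field $P(t,x):=-\int_{\er} u^{2} f(t,x,u)\,du$. The weighted bound $\Vert (1+u^2)^{s/2}\partial_x^l\partial_u^k f\Vert_\infty\le C_0(k+m)!(l+n)!/\bar\lambda^{k+l}$ propagated by the analytic norm used in Theorem~\ref{thm:mainmain} (with $s\ge 4$) ensures that both $b$ and $\partial_x P$ are bounded, smooth, and Lipschitz in $x$ uniformly on $[0,T]\times\TT$. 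Next, I would solve the linear SDE
\begin{equation*}
X_t=\left[X_0+\int_0^t U_s\,ds\right],\qquad U_t=U_0+\sigma W_t+\int_0^t\partial_x P(s,X_s)\,ds-\beta\int_0^t\bigl(U_s-\alpha\,b(s,X_s)\bigr)\,ds,
\end{equation*}
driven by a Brownian motion $W$ independent of $(X_0,U_0)\sim f_0(x,u)\,dx\,du$. Since $\partial_x P$ and $b$ are smooth and bounded in $x$ and the drift is linear in $U$, standard theory yields a unique strong solution on $[0,T]$ with all moments finite.

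\smallskip

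\noindent Let $p_t(x,u)\,dx\,du$ denote the law of $(X_t,U_t)$. By Itô's formula applied to test functions $\varphi\in \mathcal{C}_c^\infty(\TT\times\er)$, $p$ is a distributional solution of the linear Fokker--Planck equation
\begin{equation*}
\partial_t p+u\,\partial_x p=\frac{\sigma^2}{2}\partial_u^2 p+\beta p+\beta u\,\partial_u p+\partial_u p\cdot\bigl(\partial_x P(t,x)-\beta\alpha\,b(t,x)\bigr),
\end{equation*}
with initial datum $f_0$. By construction, $f$ itself is another smooth solution of this same linear equation with the same initial datum. Because $\sigma\neq 0$, the operator is hypoelliptic with smooth bounded coefficients, and uniqueness for the Cauchy problem in the class of tempered (or rapidly decaying in $u$) distributions is classical: for instance by a standard duality argument against the smooth solution of the adjoint backward equation, or by a Gronwall estimate on $\int|p-f|^2$ weighted by a sufficiently large power of $(1+u^2)$. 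Hence $p_t(x,u)=f(t,x,u)$ for all $t\in[0,T]$.

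\smallskip

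\noindent Identifying $p$ with $f$ yields the three remaining claims. The constraint $\int_\er f(t,x,u)\,du=1$ from Theorem~\ref{thm:mainmain} translates into $\mathrm{law}(X_t)=dx$ on $\TT$. The conditional expectation $\EE(U_t\mid X_t=x)$ equals $\int_\er u\,p_t(x,u)\,du / \int_\er p_t(x,u)\,du = b(t,x)$, so substituting $b(s,X_s)=\EE(U_s\mid X_s)$ and $P(t,x)=-\int u^2 p_t(x,u)\,du$ into the linear SDE recovers exactly the McKean--Vlasov system \eqref{eq:generic_incompressible_lagrangian} (and the form stated in the corollary). The main delicate point I expect is the identification step: one must ensure enough decay of $p_t$ in $u$ (inherited from $f_0$ via the SDE's linear Ornstein--Uhlenbeck structure) to invoke a clean uniqueness statement for the linear kinetic equation and to justify that $\int u\,p_t(x,u)\,du$ and $\int u^2 p_t(x,u)\,du$ coincide with $b$ and $-P$ pointwise in $x$.
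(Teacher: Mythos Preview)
Your overall architecture is exactly that of the paper: freeze the coefficients using the analytic solution $f$ from Theorem~\ref{thm:mainmain}, solve the resulting linear SDE with smooth bounded (in $x$) coefficients, and then identify the time-marginal law $p_t$ with $f$ by a uniqueness argument for the linear kinetic Fokker--Planck equation; the remaining conclusions (uniform law of $X_t$, identification of $\EE(U_t\mid X_t)$ with $b$, and recovery of \eqref{eq:generic_incompressible_lagrangian}) then follow just as you say.

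The only place where the paper proceeds differently is the identification step you flagged as delicate. Rather than invoking a distributional uniqueness result via duality or a weighted $L^2$ Gronwall estimate, the paper first upgrades $p_t$ to a \emph{classical} $\mathcal{C}^{1,\infty}$ solution: it verifies H\"ormander's bracket condition for the vector fields $V_1=\sigma\partial_u$ and $V_0=u\partial_x-(\partial_xP_f+\beta(u-\alpha\mathcal{V}_f))\partial_u$ (one computes $[V_0,V_1]=-\sigma\partial_x-\beta\sigma\partial_u$, so $V_1,[V_0,V_1]$ span $\er^2$), uses Malliavin calculus to obtain a $W^{\infty,1}$ density, and then applies its own maximum principle (Theorem~\ref{thm:WellposedClassicLinearVFP}) directly to the difference $p-f$ of two classical solutions. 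This route avoids having to quantify the decay of $p_t$ in $u$ or to set up a duality pairing, at the price of invoking the H\"ormander/Malliavin machinery. Your suggested alternatives are viable as well, but would require you to make precise the uniqueness class and check that $p_t$ belongs to it; the paper's choice sidesteps that by working entirely with smooth bounded solutions and the $L^\infty$ maximum principle already established in the appendix.
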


The remainder of the paper is organized as follows:

\medskip

In  Subsection \ref{sect:Poisson} we briefly establish the  validity of system \eqref{eq:VFPalternative} for any solution to equation \eqref{eq:CVFP} in arbitrary space dimension, and  state additional conditions required  in order that, reciprocally,  a solution to the former also solves the latter.
From Section  \ref{1d_beta=0}  on,  we restrict ourselves to the one-dimensional case. We  recall therein the analytical norms and seminorms introduced in \cite{JabNou-11} and we state useful properties of them. Following  their strategy,  in the case $\beta=0$ we then  introduce  an equivalent formulation of equation  \eqref{eq:VFP_1D}, in order to deal with the integrability problems posed by the first and second order velocity moments  involved in the equation. We then show that solutions to \eqref{eq:VFPalternative}  in these particular spaces of analytical functions  actually do satisfy the conditions  required to be solutions of \eqref{eq:CVFP}. Using the fixed point argument of  \cite{JabNou-11}, we will then  prove a local existence result in these analytical spaces, which indeed is a slightly more general  version of  Theorem \ref{thm:mainmain} restricted to the case $\beta=0$.
In Section  \ref{sec:Kinet1d}, we extend the previous result to the case $\beta\geq0$.  In Section \ref{sec:SDE} we deduce from the previous sections a local existence result for the stochastic differential equation \eqref{eq:generic_incompressible_lagrangian}. Finally, some technical results are proved in the Appendix section.

\bigskip

We fix some notation to be used throughout:
\begin{itemize}
\item $T>0$ is a fixed time horizon.
\item Functions $[0,T]\times \TT^d \times\RR^d \ni (t,x,v)  \mapsto \phi(t,x,v)  \in \RR$ are identified with functions
 $ [0,T]\times \RR^d \times\RR^d \ni (t,x,v)  \mapsto \phi(t,x,v)  \in \RR$ that are $1$- periodic in the variable $x$. Similar identification are made for functions defined on $[0,T]\times \TT^d$ and $\TT^d$.
\item Given $T>0$ and $d \in \NN$, a  function  $\phi: [0,T] \times E \to \RR $ with $E=\RR^d\times \RR^d, \TT^d \times \RR^d$ or $E=\TT^d$   is said to be of class ${\cal C}^{k,l}$ for $k\in \{0,1\}$ and $l\in \NN\cup \{ \infty\}$ if it has continuous derivatives up to order $k$ in $t\in [0,T]$ and up to order $l$ in $y\in E$ (or of all order if $l=\infty$).
For functions $ \phi: E \to \RR $ the notation ${\cal C}^{l}$ is used analogously.
\end{itemize}
In order to lighten the notations,  the dependency in $(t,x,u)$  or $(t,x)$ of  functions appearing inside equations will be omitted when no ambiguity is possible.

\subsection{The  Lagrangian stochastic model coupled with a Poisson equation}\label{sect:Poisson}

We start by establishing connections  between conditions related to the homogeneous mass distribution constraint, which are valid in arbitrary dimension:
\begin{lemma}\label{lem:relatedconditions}
Assume that $f$ is  a classical solution to equations \eqref{eq:pde_Fokker-Planck}  and \eqref{eq:pde_init_cond} for some function  $P:[0,T]\times\TT^{d}
 \rightarrow\er^{d}$  of class ${\cal C}^{0,2}$. Moreover, assume that
$$\rho(t,x):= \int_{\er^d} f(t,x,u) du, \quad V(t,x): = \int_{\er^d}u  f(t,x,u) du$$
are  functions  of class ${\cal C}^{1,1}$ in $[0, T]\times  \TT^d$,  that $\int_{\er^d }u^2 | D^{m} f(t,x,u)|du<+\infty $  for each  multiindex $|m| \leq 2$ and each $(t,x)\in [0,T]\times \TT^d$ (where $D$ is the derivative operator),
and further that  for all  $t\in [0,T]$ the function
$$x\mapsto \int_{\er^d} v_i v_j  f(t,x,v)dv$$
is of class ${\cal C}^2$.  Then,  the following system of equations is satisfied for  $(t,x)\in (0,T]\times \TT^d$:
\begin{equation*}
\left\{
\begin{aligned}
& \partial_{t}\rho + \nabla_{x}\cdot V=0,\\
&\partial_{t}(\nabla_{x}\cdot  V) +\beta \nabla_x\cdot V + \nabla_x\cdot \big( \rho \left(\nabla_x P -\beta\alpha V\right) \big) + \sum_{i,j=1}^{d} \partial_{x_i x_j} \int_{\er^d} v_i v_j  f(t,x,v)dv =0 \\
\end{aligned}
\right.
\end{equation*}
We deduce:
\begin{itemize}
\item[a)]$\rho(t,x)=\rho(0,x)$ for every $(t,x) \in [0,T]\times \TT^d$ if and only if $\nabla_x\cdot V(t,x)=0$  for every $(t,x) \in [0,T]\times  \TT^d$.
\item[b)]$\nabla_x\cdot V(t,x)=e^{-\beta t} \nabla_x\cdot V(0,x)$  for every $(t,x) \in [0,T]\times \TT^d$ if and only if  $P$ satisfies the  equation of elliptic type:
$$\nabla_x\cdot \big( \rho(t,x) \left(\nabla_x P(t,x) -\beta\alpha V(t,x)\right) \big) =-\sum_{i,j=1}^{d} \partial_{x_i x_j} \int_{\er^d} v_i v_j  f(t,x,v)dv \,  , \quad   (t,x)\in (0,T]\times  \TT^d.$$
\item[c)]If in  addition to \eqref{eq:pde_Fokker-Planck} and \eqref{eq:pde_init_cond},   condition \eqref{eq:pde_mass_constraint} is verified, then $P(t,x)$ is a solution to the Poisson equation
\begin{equation*}
\triangle_x P(t,x)= -  \sum_{i,j=1}^{d}\partial_{x_i x_j} \int_{\er^d} v_i v_j  f(t,x,v)dv, \quad(t,x)\in (0,T]\times \TT^d.
\end{equation*}
\item[d)]Set $\bar{\rho}(t,x):=\rho(t,x)-1$. If in addition to \eqref{eq:pde_Fokker-Planck}  and \eqref{eq:pde_init_cond}  we assume that the Poisson equation in part c) holds,  we have:
\begin{description}
\item when $\alpha=1$, $\partial_{t}(\nabla_{x}\cdot  V)   + \nabla_x\cdot \big(\bar{ \rho}  \left(\nabla_x P-\beta V(t,x) \right) \big) =0$;
\item when $\alpha=0$, $\partial_{t}(\nabla_{x}\cdot  V)   + \nabla_x\cdot \big(\bar{ \rho} (t,x)  \nabla_x P  \big)+ \beta\nabla_x \cdot V =0$.
\end{description}
\end{itemize}
\end{lemma}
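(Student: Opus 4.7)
The plan is to derive the two displayed identities by taking moments of the Fokker--Planck equation \eqref{eq:pde_Fokker-Planck} in the velocity variable $u$, and then read off parts (a)--(d) as essentially algebraic consequences. The hypotheses on regularity and on the second $u$-moments of $f$ and its spatial derivatives are there precisely to legitimize the differentiation under the integral sign and the integrations by parts in $u$ (boundary terms vanish because integrability together with $u^2$-integrability forces enough decay).

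First I would integrate \eqref{eq:pde_Fokker-Planck} against $du$ on $\er^d$. The diffusion term $\tfrac{\sigma^2}{2}\triangle_u f$ integrates to zero, and for the drift terms I use $\int \nabla_u f\cdot g(t,x)\,du=0$ for any $g$ independent of $u$ (covering the $\nabla_x P$ and $\alpha V$ parts), while $\int \nabla_u f\cdot \beta u\,du=-\beta d\rho$ by integration by parts. The residual $\beta d f$ term then cancels exactly against $-\beta d\rho$, leaving the continuity equation
\begin{equation*}
\partial_t\rho+\nabla_x\cdot V=0.
\end{equation*}
Next, for the second identity I would multiply \eqref{eq:pde_Fokker-Planck} by $u_k$ and integrate in $u$. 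The transport term gives $-\sum_j \partial_{x_j}\int u_k u_j f\,du$; the $\nabla_x P$ piece yields $-\rho\,\partial_{x_k}P$ after one integration by parts; the $\beta u$ term gives $-\beta(d+1)V_k$ (using $\sum_j\partial_{u_j}(u_k u_j)=(d+1)u_k$); the $-\beta\alpha V$ term gives $+\beta\alpha\rho V_k$; the Laplacian term vanishes after two integrations by parts; and the $\beta d f$ term contributes $\beta d V_k$. Summing and combining $-\beta(d+1)V_k+\beta d V_k=-\beta V_k$, one obtains componentwise
\begin{equation*}
\partial_t V_k+\beta V_k+\sum_j\partial_{x_j}\!\!\int u_k u_j f\,du+\rho\bigl(\partial_{x_k}P-\beta\alpha V_k\bigr)=0,
\end{equation*}
and taking the divergence in $x$ yields the second displayed identity.

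With those two identities in hand, the rest is bookkeeping. Part (a) is an immediate equivalence via $\partial_t\rho=-\nabla_x\cdot V$. For part (b), substituting the ansatz $\nabla_x\cdot V(t,x)=e^{-\beta t}\nabla_x\cdot V(0,x)$ into the divergence identity makes $\partial_t(\nabla_x\cdot V)+\beta\nabla_x\cdot V$ vanish, leaving exactly the claimed elliptic equation; conversely, the elliptic equation reduces the divergence identity to the scalar linear ODE $\partial_t(\nabla_x\cdot V)+\beta(\nabla_x\cdot V)=0$, which is integrated pointwise in $x$. Part (c) is a direct consequence of (a) (since $\rho\equiv 1$) combined with (b) applied to the trivial solution $\nabla_x\cdot V\equiv 0$, inserting $\rho=1$ and using $\nabla_x\cdot V=0$ to drop the $\beta\alpha V$ term.

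For part (d) I would rewrite $\sum_{i,j}\partial_{x_i x_j}\!\int v_iv_j f\,dv$ as $-\triangle_x P=-\nabla_x\cdot\nabla_x P$ using the Poisson equation, and substitute into the divergence identity to get
\begin{equation*}
\partial_t(\nabla_x\cdot V)+\beta\nabla_x\cdot V+\nabla_x\cdot\bigl[\bar\rho\,\nabla_x P-\beta\alpha\rho V\bigr]=0.
\end{equation*}
When $\alpha=0$ this is already the stated form. When $\alpha=1$, I would split $\rho V=V+\bar\rho V$ so that $\nabla_x\cdot(\beta\rho V)=\beta\nabla_x\cdot V+\beta\nabla_x\cdot(\bar\rho V)$; the $\beta\nabla_x\cdot V$ contribution then cancels, producing $\partial_t(\nabla_x\cdot V)+\nabla_x\cdot[\bar\rho(\nabla_x P-\beta V)]=0$, as claimed. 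I do not anticipate any serious obstacle: the only delicate points are justifying the exchange of derivative and integral in $u$ and the vanishing of boundary terms at infinity, both of which follow from the stated $\mathcal{C}^{1,1}$ regularity of $\rho,V$ and the uniform $u^2$-integrability of the derivatives of $f$ up to order two.
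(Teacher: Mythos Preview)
Your proof is correct and follows essentially the same approach as the paper: take velocity moments of \eqref{eq:pde_Fokker-Planck} to obtain the continuity and momentum equations, then read off (a)--(d) algebraically. The only cosmetic difference is that for the second identity the paper first applies $\partial_{x_i}$, then multiplies by $u_i$, sums, and integrates, whereas you multiply by $u_k$, integrate, and then take the spatial divergence; the two orderings are equivalent and lead to the same result.
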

\begin{proof} The first equation is obtained by integrating equation \eqref{eq:pde_Fokker-Planck} with respect to $u\in \er^d$, and using the  assumptions in order  to  integrate   by parts and get rid of integrals of divergence type terms. To get the second equation, we first take the derivative with respect to the variable $x_i$ in equation \eqref{eq:pde_Fokker-Planck}, then multiply it by $u_i$ and sum over $i=1,\dots, d$, before integrating and proceeding as before. Statements a),b),c) and d) are then easily deduced.
\end{proof}

\begin{remark}
\begin{itemize}
\item[a)]According to Lemma \ref{lem:relatedconditions} part c), finding a solution to equation \eqref{eq:CVFP} requires in particular to find a solution to the highly singular Vlasov-Fokker-Planck equation \eqref{eq:VFPalternative}.
%

\item[b)]  If conditions  \eqref{eq:pde_Fokker-Planck}  and \eqref{eq:pde_init_cond} hold,  and the Poisson equation in Lemma \ref{lem:relatedconditions} part c) is satisfied, the
equation obtained in Lemma \ref{lem:relatedconditions} part d) together with the continuity equation $$\partial_{t}\bar{\rho}(t,x) + \nabla_{x}\cdot V(t,x)=0$$
furnish a system of two equations    that  the pair $(\bar{\rho},V)$ must satisfy. Thus, a strategy to prove in that situation that
  \eqref{eq:pde_mass_constraint} also holds    is  to prove that  such a system has the unique solution  $\bar{\rho}(t,x)= \nabla_{x}\cdot V(t,x) \equiv 0$ when starting from $(0,0)$.  We will be able to do this in the functional setting that we will consider, deducing thus a solution to \eqref{eq:CVFP} from a solution to  \eqref{eq:VFPalternative}.
\end{itemize}
\end{remark}
\section{Local analytic well-posedness in the vanishing kinetic potential case $(\beta = 0)$}\label{1d_beta=0}
In this section, we construct an analytical solution to the nonlinear Vlasov-Fokker-Planck equation associated with the incompressible Lagrangian stochastic model up to some small time horizon $T$, in the case $\beta=0$. Using the  weighted analytical functional space  introduced in \cite{JabNou-11} and a fixed point argument developed therein, we shall give in Theorem \ref{thm:ExistenceNonlinear} below a local-in-time well-posedness result for  the nonlinear Vlasov-Fokker-Planck equation:

\renewcommand{\theequation}{{VFP}}
\begin{equation}\label{eq:VFP}
\left\{
\begin{aligned}
&\partial_{t}f+u\partial_{x}f- \partial_{x}P\partial_{u}f -\frac{\sigma^2}{2}\partial^{2}_{u}f=0\mbox{ on }(0,T]\times\er^2,\\
&f(0,x,u)=f_{0}(x,u)\mbox{ on }\er^2,
\end{aligned}
\right.
\end{equation}
\renewcommand{\theequation}{\thesection.\arabic{equation}}
where
\begin{equation*}
 P(t,x)= - \int_{\er}u^{2}f(t,x,u)du, ~(t,x)\in [0, T]\times \er .
\end{equation*}

Notice that  periodicity is not  yet imposed.
Then, we will show in Corollary \ref{coro:UniformMass} that if  the obtained local  solution $f(t,x,u)$ of \eqref{eq:VFP} satisfies at $t=0$  the condition
\begin{description}
\item[\Hu{t}]:
\begin{enumerate}
\item[] $f(t,x,u) \mbox{ is }1-\mbox{periodic in }x$ for all $u\in \er$,
\item[]$\displaystyle \int_{\er}f(t,x,u)du=1$ for all $x\in\TT$ (\textit{Uniform  mass repartition in }$\TT$),
\item[]$\partial_{x}\int_{\er}uf (t,x,u)du=0$ for all $x\in\TT$ (\textit{Mean incompressibility in }$\TT$),
\end{enumerate}
\end{description}
it then satisfies {\it a fortiori}  the same properties for all $t\in [0,T]$.  To establish the latter result, the choice of analytical functional spaces and the use of the analytic norms in \cite{JabNou-11} will also be fundamental.

\subsection{The nonlinear Vlasov-Fokker-Planck equation in analytic spaces}
 We start by defining  the functional spaces where an equivalent version of equation \eqref{eq:VFP} will be studied.

 A function  $\er^2\ni (x,u)\mapsto \psi(x,u) \in \er$ having bounded derivatives   of all order   is said to be analytic  if there exists $C>0$ and some $\bar{\lambda}>0$ such that for all $k,l\in \mathbb{N}$,

  $$\Vert\partial^{k}_{x}\partial^{l}_{u}\psi\Vert_{\infty}\leq C\frac{  k!l!}{\bar{\lambda}^{k+l}}, $$
  where $\Vert\cdot\Vert_{\infty}$ is the uniform norm on $\er^2$, and where the convention $0!=1$ is used. For such  functions and a general $\lambda>0$, we introduce the  analytic norm:
\begin{equation*}
\Vert \psi\Vert_{\lambda}:=\sum_{k,l\in\NN}\frac{\lambda^{k+l}}{k!l!}\left\|\partial^{k}_{x}\partial^{l}_{u}\psi\right\|_{\infty}
\end{equation*}
 and observe that  $\|\psi\|_\lambda$ is finite whenever $\lambda <\bar{\lambda}$.
We further introduce the $\lambda$-derivatives of these norms for each order $a\in\NN$,
\begin{align*}
\Vert \psi\Vert_{\lambda,a}:=\frac{d^{a}}{d\lambda^{a}}\Vert \psi\Vert_{\lambda} =
\sum_{k+l\geq a}\frac{(k+l)!}{(k+l-a)!}\frac{\lambda^{k+l-a}}{k!l!}\left\|\partial^{k}_{x}\partial^{l}_{u}\psi\right\|_{\infty}.
\end{align*}
Notice that $\Vert\psi\Vert_{\lambda,0}=\Vert \psi\Vert_{\lambda}$. We then define   respectively a norm and a seminorm by
\begin{equation*}
\Vert \psi\Vert_{\Hh,\lambda}:=\sum_{a\in\NN}\frac{1}{(a!)^{2}}\Vert \psi\Vert_{\lambda,a},
\quad\quad
\Vert \psi\Vert_{\tHh,\lambda}:=\sum_{a \geq 1}\frac{a^{2}}{(a!)^{2}}\Vert \psi\Vert_{\lambda,a}.
\end{equation*}
Last, we define the functional spaces associated with  $\Vert~\Vert_{\Hh,\lambda}$ and $\Vert ~\Vert_{\tHh,\lambda}$ :
\begin{subequations}
\begin{align}
\label{formula:SolutionSpacesPreliminary}
&\Hh(\lambda):=\left\{\psi \in\Cc^{\infty}(\er^2)\,\mbox{such that}\,\Vert \psi \Vert_{\Hh,\lambda}<+ \infty\right\},\\
&\widetilde\Hh(\lambda):=\left\{\psi \in\Cc^{\infty}(\er^2)\,\mbox{such that}\,\Vert \psi\Vert_{\tHh,\lambda}<+\infty\right\}.
\end{align}
\end{subequations}
The next two lemmas giving some insight about these (semi)norms  and will be useful later on:
\begin{lemma}\label{lem:NormCriterion}
Let $v:\er^2\to \er $ of class ${\cal C}^{\infty}$ be
  such that  $\Vert \partial^{l}_x   \partial^{k}_u v \Vert_{\infty}\leq \frac{C(k+m)! (l+n)!}{\overline{\lambda}^{k+l}}$   for some  $C,\bar{\lambda}>0$, some $m,n,j\in \NN$ and all $k,l \geq j$.
\begin{itemize}
\item[a)]If the previous holds for $j=0$,  then $v\in \Hh(\lambda)$  for all $\lambda\in [0,\overline{\lambda})$.
\item[b)]If the previous holds for $j=1$,  then $v\in \tHh(\lambda)$ for all $\lambda\in [0,\overline{\lambda})$.
\end{itemize}
\end{lemma}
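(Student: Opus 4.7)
The plan is to read the relevant quantities as successive derivatives, at $\mu=\lambda$, of the single scalar function
\begin{equation*}
g(\mu):=\|v\|_\mu=\sum_{k,l\ge 0}\frac{\mu^{k+l}}{k!\,l!}\,\|\partial^k_x\partial^l_u v\|_\infty,
\end{equation*}
and to deduce the finiteness of $\|v\|_{\Hh,\lambda}$ and $\|v\|_{\tHh,\lambda}$ from a Cauchy-type estimate on the derivatives $g^{(a)}(\lambda)=\|v\|_{\lambda,a}$. The viewpoint is that $g$ is a power series in $\mu$ with non-negative coefficients, so its real-analyticity on $[0,\bar\lambda)$ automatically upgrades finiteness at one point to pointwise control of all derivatives.

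First, for part (a) (the case $j=0$), I check that $g(\mu)<\infty$ for every $\mu\in[0,\bar\lambda)$. Plugging the pointwise bound into the double sum and factorizing,
\begin{equation*}
g(\mu)\le C\Big(\sum_{k\ge 0}\tfrac{(k+n)!}{k!}(\mu/\bar\lambda)^k\Big)\Big(\sum_{l\ge 0}\tfrac{(l+m)!}{l!}(\mu/\bar\lambda)^l\Big),
\end{equation*}
both factors converge because $(k+n)!/k!$ grows only polynomially in $k$. Hence $g$ is real-analytic on $[0,\bar\lambda)$ with non-negative Taylor coefficients. Fixing $\lambda\in[0,\bar\lambda)$ and choosing $r\in(0,\bar\lambda-\lambda)$, each summand in the expansion $g(\lambda+r)=\sum_{a\ge 0}(r^a/a!)\,g^{(a)}(\lambda)$ is non-negative, so the standard Cauchy-style inequality
\begin{equation*}
g^{(a)}(\lambda)\le \frac{a!}{r^a}\,g(\lambda+r)\qquad\text{for every }a\in\NN
\end{equation*}
is immediate. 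Plugging this into the definition of the $\Hh$-norm yields
\begin{equation*}
\|v\|_{\Hh,\lambda}=\sum_{a\ge 0}\frac{g^{(a)}(\lambda)}{(a!)^{2}}\le g(\lambda+r)\sum_{a\ge 0}\frac{1}{a!\,r^a}=g(\lambda+r)\,e^{1/r}<\infty,
\end{equation*}
and the identical argument applied to the $\tHh$-seminorm gives the bound $\|v\|_{\tHh,\lambda}\le g(\lambda+r)\sum_{a\ge 1}a^{2}/(a!\,r^a)$, which is finite because the exponential series dominates the polynomial weight $a^2$.

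For part (b) (the case $j=1$), only mixed derivatives with both indices at least one are controlled by the hypothesis. I would reduce to part (a) by working with $w:=\partial_x\partial_u v$: the index shift
\begin{equation*}
\|\partial^k_x\partial^l_u w\|_\infty=\|\partial^{k+1}_x\partial^{l+1}_u v\|_\infty\le \frac{C\,(l+1+m)!\,(k+1+n)!}{\bar\lambda^{k+l+2}}
\end{equation*}
places $w$ under the hypothesis of part (a), with $\bar\lambda$ unchanged and $(m,n)$ enlarged to $(m+1,n+1)$; hence $w\in\Hh(\lambda)$ for every $\lambda<\bar\lambda$. A direct comparison of the combinatorial factors $(k+l)!/((k+l-a)!\,k!\,l!)$ under the reindexing $(k,l)\mapsto(k+1,l+1)$, $a\mapsto a+2$ shows that the extra weight $a^{2}$ in the $\tHh$-seminorm exactly absorbs the prefactors produced by the shift, so one gets an estimate of the form $\|v\|_{\tHh,\lambda}\lesssim \|w\|_{\Hh,\lambda}$ (the contributions from indices with $k=0$ or $l=0$ are handled by the same Cauchy estimate applied to the relevant one-variable restrictions, once rewritten in terms of $w$). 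The main technical obstacle I anticipate is precisely this combinatorial matching: the conspiracy between the $a^2/(a!)^2$ weight and the $(k+l)!/((k+l-a)!\,k!\,l!)$ factor is the reason the seminorm $\tHh$ was designed the way it is, but verifying the inequality requires some careful bookkeeping. The underlying analytic engine—the Cauchy estimate for a power series with non-negative coefficients—is otherwise standard.
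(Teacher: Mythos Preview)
Your argument for part~(a) is correct and in fact more economical than the paper's. The paper bounds each $\|v\|_{\lambda,a}$ term by term, reindexes, and compares with successive derivatives of $(1-r)^{-2}$ at $r=\lambda/\bar\lambda$; you instead observe that $g(\mu)=\|v\|_\mu$ is a power series in $\mu$ with non-negative coefficients and radius of convergence at least $\bar\lambda$, and read the Cauchy estimate $g^{(a)}(\lambda)\le (a!/r^a)\,g(\lambda+r)$ directly off the Taylor expansion at $\lambda$. Both approaches yield a bound of the shape $\|v\|_{\lambda,a}\le C_\lambda\,a!$, and the summability of $1/(a!\,r^a)$ (resp.\ $a^2/(a!\,r^a)$) finishes.

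Your treatment of part~(b), however, does not go through. First, a reading issue: the paper's proof plugs the assumed bound into \emph{every} term of $\|v\|_{\lambda,a}$ for $a\ge j$, i.e.\ into every pair $(k,l)$ with $k+l\ge a$. Thus the phrase ``$k,l\ge j$'' is being used in the sense $k+l\ge j$; for $j=1$ the pairs $(k,0)$ and $(0,l)$ are covered by the hypothesis. Under your reading ($k\ge 1$ \emph{and} $l\ge 1$) no proof can work, because $\|v\|_{\tHh,\lambda}$ genuinely involves those boundary terms. Second, and independently of the reading, the claimed inequality $\|v\|_{\tHh,\lambda}\lesssim\|w\|_{\Hh,\lambda}$ with $w=\partial_x\partial_u v$ is false: if $v$ depends only on $x$ then $w\equiv 0$ while $\|v\|_{\tHh,\lambda}=\|\partial_x v\|_{\Hh,\lambda}$ can be arbitrary. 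The ``combinatorial matching'' you sketch fails for the same reason: after the shift $(k,l,a)\mapsto(k+1,l+1,a+2)$ the ratio of the weights in the two norms equals $(a-1)^{-2}(k+l)(k+l-1)/(kl)$, which is unbounded as $k/l\to 0$ or $\infty$. The sentence about handling the $k=0$ or $l=0$ contributions ``once rewritten in terms of $w$'' has no content, since $w$ carries no information about the pure $x$- or pure $u$-derivatives of $v$.

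The clean way to recover (b) with your own method is to use the identity $\|v\|_{\tHh,\lambda}=\|\partial_x v\|_{\Hh,\lambda}+\|\partial_u v\|_{\Hh,\lambda}$ (Lemma~\ref{lem:normprop}(i)). Under the intended reading of the hypothesis, each of $\partial_x v$ and $\partial_u v$ satisfies the assumption of part~(a) with $(n,m)$ replaced by $(n+1,m)$ resp.\ $(n,m+1)$ and $C$ by $C/\bar\lambda$, so your Cauchy-estimate argument applies verbatim to each and gives $\partial_x v,\partial_u v\in\Hh(\lambda)$. Equivalently, apply your argument to $g(\mu)-\|v\|_\infty$ (whose $a$-th derivative for $a\ge 1$ coincides with $\|v\|_{\lambda,a}$) in place of $g$.
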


\begin{proof} For $a\geq j$ and $\lambda\in[0,\overline{\lambda})$ we obtain from the assumption that
\begin{align*}
 \frac{d^{a}}{d\lambda^{a}}\Vert v\Vert_{\lambda,0}\leq &
 \frac{C}{\overline{\lambda}^{a}} \sum_{k+l\geq a}
\frac{(k+l)!(k+m)!(l+n)!}{k! l! (k+l -a )!}\left(\lambda/\overline{\lambda}\right)^{(k+l-a)}\\
&=\frac{C}{\overline{\lambda}^{a}} \sum_{k+l\geq a+m+n ,k\geq m,l\geq n}
\frac{(k+l-(m+n))!k!l !}{(k-m)! (l-n)! (k+l -(a+m+n) )!} \left(\lambda/\overline{\lambda}\right)^{(k+l-(a+m+n))}
\end{align*}
changing indexes $k+m$ to $k$ and $l+n$ to $l$.  Since
$\frac{(k+l-(m+n))!k!l !}{(k-m)! (l-n)! (k+l)!}\leq 1$  for $k\geq m, l\geq n$, we deduce that
\begin{align*}
\frac{d^{a}}{d\lambda^{a}}\Vert v\Vert_{\lambda,0}
\leq & \frac{C}{\overline{\lambda}^{a}} \sum_{k+l\geq a+m+n}
\frac{(k+l)!}{(k+l -(a+m+n) )!}\left(\lambda/\overline{\lambda}\right)^{(k+l-(a+m+n))}
=\frac{C}{\overline{\lambda}^{a}}  \left[\frac{d^{a+m+n}}{dr^{a+m+n}}\left(  \sum_{k,l\in \NN} r^{k+l} \right) \right]\bigg{\vert}_{r=\lambda/\overline{\lambda}}.
\end{align*}
Observing that for $r\in[0,1)$, $\sum_{k,l\in\NN}r^{k+l} =(\sum_{j\in \NN}r^{j})^2 =\frac{1}{(1-r)^2}$ and $|\frac{d^{a}}{dr^{a}}\frac{1}{(1-r)^2}| \leq (1+a)!$, we conclude that
\begin{align*}
\| v\|_{ \Hh, \lambda}= &
\sum_{a=0}^{\infty}\frac{1}{(a!)^{2}}\frac{d^{a}}{d\lambda^{a}}\Vert v \Vert_{\lambda,0}\leq C\sum_{a=0}^{\infty}\frac{1}{\overline{\lambda}^{a}}\frac{(a+1)\cdots(a+m+n+1)}{a!}<+\infty\mbox{  and}\\
\|v\|_{ \tHh, \lambda}= &
\sum_{a=1}^{\infty}\frac{a^2}{(a!)^{2}}\frac{d^{a}}{d\lambda^{a}}\Vert v \Vert_{\lambda,0}\leq \frac{C}{\overline{\lambda}} \sum_{a=0}^{\infty} \frac{1}{\overline{\lambda}^{a}}\frac{(a+1)\cdots (a+m+n+2 )}{a!} <+\infty.
\end{align*}
\end{proof}

\begin{remark}

Our main results below deal with initial data in the spaces  $\Hh(\lambda)$ and $\tilde{\Hh}(\lambda)$. By part a) of Lemma \ref{lem:NormCriterion}, for each analytic function $f$ such that $\|f\|_{\lambda}<+\infty$  for some  $\lambda>0$, one can find some $\lambda'>0$ such that $f\in \Hh(\lambda')$. Similarly, part b) of Lemma \ref{lem:NormCriterion} shows that if $\|f\|_{\lambda,1}<+\infty$  for some  $\lambda>0$,  then $f\in \tilde{\Hh}(\lambda'')$ for some other $\lambda''>0$.  Therefore, our results  will cover a large class of  analytic initial data.  The reason why spaces $\Hh(\lambda)$ and $\tilde{\Hh}(\lambda)$ are useful  here is that they will offer a more precise control on the convergence near the radius of analyticity as time varies.

\end{remark}

\begin{lemma}\label{lem:normprop} Let $\psi$ be an analytic function defined on $\er^2$. Then:
\begin{itemize}
\item[(i)]For each $a\in \NN$ one has $\Vert \psi\Vert_{\lambda,a+1}=\Vert\partial_{x}\psi\Vert_{\lambda,a}+\Vert\partial_{u}\psi\Vert_{\lambda,a}.$ We deduce that
\begin{equation*}
\Vert \psi\Vert_{\tHh,\lambda}=\Vert \partial_{x} \psi\Vert_{\Hh,\lambda}+\Vert \partial_{u} \psi\Vert_{\Hh,\lambda}.
\end{equation*}
\item[(ii)]Moreover,
\begin{equation*}
\frac{d}{d\lambda}\Vert \psi\Vert_{\Hh,\lambda}=\Vert \psi\Vert_{\tHh,\lambda}.
\end{equation*}
\item[(iii)]Last, for any pair  $\psi_{1},\psi_{2}$  of  analytic functions defined on $\er^2$\begin{equation*}\label{lem:normprop_bis}
\Vert\psi_{1} \psi_{2}\Vert_{\lambda}\leq \Vert\psi_{1}\Vert_{\lambda}\Vert\psi_{2}\Vert_{\lambda}.
\end{equation*}
\end{itemize}
\end{lemma}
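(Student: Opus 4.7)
The plan is to prove all three items by direct manipulation of the defining series, using index shifts in (i) and (ii) and the Leibniz rule in (iii). No delicate analytic input is needed; since all terms in every series are nonnegative, Fubini/Tonelli applies freely throughout and convergence is controlled by the standing assumption that $\psi$ is analytic (so $\|\psi\|_\lambda$, and hence all the expressions involved, are finite for $\lambda<\bar\lambda$).

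For (i), I would expand $\|\partial_x\psi\|_{\lambda,a}$ and $\|\partial_u\psi\|_{\lambda,a}$ from the definition, then substitute $k\mapsto k-1$ in the first sum and $l\mapsto l-1$ in the second, so that both become sums over $\{(k,l):k+l\geq a+1\}$ of terms containing $\|\partial_x^k\partial_u^l\psi\|_\infty$. After the shifts the coefficient of $\|\partial_x^k\partial_u^l\psi\|_\infty$ picks up an extra factor $k$ in one sum and $l$ in the other; their sum is $k+l$, which combines with the prefactor $(k+l-1)!/(k+l-a-1)!$ to produce $(k+l)!/(k+l-a-1)!$, the precise coefficient appearing in $\|\psi\|_{\lambda,a+1}$. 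The second identity of (i) is then obtained by multiplying the pointwise identity by $1/(a!)^2$, summing over $a\geq 0$, reindexing $a\mapsto a-1$ on the right, and noting that $1/((a-1)!)^2=a^2/(a!)^2$.

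For (ii), I would read $\|\psi\|_{\Hh,\lambda}=\sum_{a\geq 0}\frac{1}{(a!)^2}\|\psi\|_{\lambda,a}$ as a power series in $\lambda$ with nonnegative coefficients, convergent for $\lambda<\bar\lambda$ (an application of Lemma \ref{lem:NormCriterion} or a direct majoration), so that termwise differentiation in $\lambda$ is legitimate. Because $\frac{d}{d\lambda}\|\psi\|_{\lambda,a}=\|\psi\|_{\lambda,a+1}$ by the very definition of $\|\cdot\|_{\lambda,a}$, the same reindexing $a\mapsto a-1$ already used in (i) converts $\frac{d}{d\lambda}\|\psi\|_{\Hh,\lambda}$ into $\sum_{a\geq 1}\frac{a^2}{(a!)^2}\|\psi\|_{\lambda,a}=\|\psi\|_{\tHh,\lambda}$.

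Finally, for (iii) I would invoke the Leibniz rule $\partial_x^k\partial_u^l(\psi_1\psi_2)=\sum_{i\leq k,\,j\leq l}\binom{k}{i}\binom{l}{j}\,\partial_x^i\partial_u^j\psi_1\cdot\partial_x^{k-i}\partial_u^{l-j}\psi_2$, take uniform norms on both sides, multiply by $\lambda^{k+l}/(k!l!)$, and sum over $k,l\geq 0$. The binomial factors absorb $1/(k!l!)$ into $1/(i!j!(k-i)!(l-j)!)$, so after splitting $\lambda^{k+l}=\lambda^{i+j}\lambda^{(k-i)+(l-j)}$ and swapping the order of summation the double sum factors as the Cauchy product of the series defining $\|\psi_1\|_\lambda$ and $\|\psi_2\|_\lambda$. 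The only mildly delicate point in the whole lemma is keeping track of the index shifts in (i); everything else is automatic once nonnegativity and analyticity are in place.
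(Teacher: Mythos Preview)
Your proposal is correct and follows essentially the same approach as the paper: the same index shifts and Leibniz/Cauchy-product factorization are used throughout. The only cosmetic difference is that in (i) you start from $\|\partial_x\psi\|_{\lambda,a}+\|\partial_u\psi\|_{\lambda,a}$ and combine, while the paper starts from $\|\psi\|_{\lambda,a+1}$ and splits via $(k+l)/(k!\,l!)=1/((k-1)!\,l!)+1/(k!\,(l-1)!)$; this is the same identity read in the opposite direction.
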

\begin{proof} (i). The first identity follows from
\begin{align*}
\Vert\psi\Vert_{\lambda,a+1}=\frac{d^{a+1}}{d\lambda^{a+1}}\Vert \psi\Vert_{\lambda, 0}&=\sum_{m+l\geq a+1}\frac{(m+l)\cdots (m+l-a-1)\lambda^{m+l -a-1}}{m!l!}\Vert \partial^{m}_{x}\partial^{l}_{u}\psi\Vert_{\infty}\\
&=\sum_{m+l\geq a+1 ,l\geq 1}\frac{ (m+l-1)\cdots (m-a+l -1)\lambda^{m -a+l -1}}{m!(l-1)!}\Vert \partial^{m}_{x}\partial^{l}_{u}\psi\Vert_{\infty} \\
& \quad +
\sum_{m+l\geq a+1 ,m \geq 1 }\frac{ (m- 1 +l)\cdots (l-a+ m -1)\lambda^{l -a+m -1} }{(m-1)!l!}\Vert \partial^{m}_{x}\partial^{l}_{u}\psi\Vert_{\infty},
\end{align*}
by respectively changing  the indexes $l$ to $l+1$ and  $m$ to $m+1$ in the first and second sums  in the last expression. Multiplying   by $\frac{a^2}{(a!)^2}$  both sides of the previously established identity and summing the resulting expressions over $a\geq 1$ yields the identity for $\Vert\psi\Vert_{\tHh,\lambda}$. (ii) readily  follows from
\begin{align*}
\frac{d}{d\lambda}\Vert \psi\Vert_{\Hh,\lambda} =
\frac{d}{d\lambda}\sum_{a\in\NN}\frac{1}{(a!)^{2}}\|\psi\|_{\lambda,0} =\sum_{a\in\NN}\frac{1}{(a!)^{2}}\|\psi\|_{\lambda,a+1} = \sum_{a\geq 1}\frac{1}{((a-1)!)^{2}}\|\psi\|_{\lambda,a} = \sum_{a\geq 1}\frac{a^2}{(a!)^{2}}\|\psi\|_{\lambda,a}  = \Vert\psi\Vert_{\tHh,\lambda}.
\end{align*}
Finally, since $ \Vert\partial^{k}_{x}\partial^{l}_{u}\left(\psi_{1}\psi_{2}\right)\Vert_{\infty}\leq \sum_{r=0}^{k}\sum_{n=0}^{l}C^{r}_{k}C^{n}_{l}
\Vert\partial^{r}_{x}\partial^{n}_{u}\psi_{1}\Vert_{\infty}\Vert\partial^{k-r}_{x}\partial^{l-n}_{u}\psi_{2}\Vert_{\infty}$, we have
\begin{align*}
\Vert\psi_{1}\psi_{2}\Vert_{\lambda,0}
=\sum_{k,l\in\NN}\frac{\lambda^{k+l}}{k!l!}\Vert\partial^{k}_{x}\partial^{l}_{u} \left(\psi_{1}\psi_{2}\right)\Vert_{\infty}
&\leq
\sum_{r,n\in\NN}\Vert\partial^{r}_{x}\partial^{n}_{u}\psi_{1}\Vert_{\infty}\sum_{k\geq r}\sum_{l \geq n}\frac{C^{r}_{k}C^{n}_{l}\lambda^{k+l}}{k!l!}
\Vert\partial^{k-r}_{x}\partial^{l-n}_{u}\psi_{2}\Vert_{\infty}\\
&\leq \sum_{r,n\in\NN}\frac{\lambda^{r+n}}{r!n!}\Vert\partial^{r}_{x}\partial^{n}_{u}\psi_{1}\Vert_{\infty}\sum_{k\geq r}
\sum_{l\geq n}\frac{\lambda^{(k-r)+(l-n)}}{(k-r)!(l-n)!}
\Vert\partial^{k-r}_{x}\partial^{l-n}_{u}\psi_{2}\Vert_{\infty}
\end{align*}
which provides (iii) by  changing  the indexes $k$ to $k+r$ and $l$ to $l+n$ in the inner sums.
\end{proof}

We  now observe that finiteness of the analytical norm of a solution $f$ to \eqref{eq:VFP} is not enough to  provide a control of the function $(t,x)\mapsto\int_{\er}u^2 f(t,x,u)du$.
This is the reason why  we  introduce  a weight function intended to truncate  the velocity state space in a suitable sense.  More precisely, assume that  $f:[0,T]\times \er^2 \to \er$ is a ${\cal C}^{1,\infty}$  solution of equation \eqref{eq:VFP} with bounded derivatives of all order, and set
\begin{equation}\label{formula:TruncatedSolution}
g(t,x,u):=\omega(u)f(t,x,u),
\end{equation}
where $\omega:\er\to(0,+\infty)$ is a  weight function such that $\int_{\er}\frac{u^{2}}{\omega(u)}du<+\infty.$
Then, the regularity of velocity moments of $f$  is easily controlled in terms of the regularity  of $g$:
\begin{align*}
\sup_{(t,x)\in[0,T]\times\er }\left|\partial^{k}_{x} \int_{\er}u^2f(t,x,u)\,du\right|
=&\sup_{(t,x)\in[0,T]\times\er }\left|\int_{\er}\frac{u^2}{\omega(u)}\partial^{k}_{x}g(t,x,u)\,du\right|\\
&\leq\sup_{(t,x,u)\in[0,T] \times\er^2}\left|\partial^{k}_{x}g(t,x,u)\right|\int_{\er} \frac{u^{2}}{\omega(u)}du.
\end{align*}
Moreover, since $\partial_{u}f=\partial_{u}g-g(\partial_{u}\ln\omega)$ and
$\omega\partial^{2}_{u}f=\partial^{2}_{u}g -2(\partial_{u}\ln(\omega))(\partial_{u}g)+g(\frac{2|\partial_{u}\omega|^{2}}{\omega^{2}}
-\frac{\partial^{2}_{u}\omega}{\omega})$, the function $g$  defined in \eqref{formula:TruncatedSolution} is  seen to satisfy the equation
\renewcommand{\theequation}{{VFP$\omega$}}
\begin{equation}\label{eq:VFPw}
\left\{
\begin{aligned}
&\partial_{t}g+ u \partial_{x}g - \left[
\partial_{x}P-\partial_{u}\ln\omega\right]\partial_{u}g -\frac{\sigma^2}{2}\partial^{2}_{u}g=g\partial_{x}P\partial_{u}\ln \omega -g h\mbox{ on }(0,T]\times\er^2,\\
&P(t,x)=-\int_{\er}\frac{u^{2}}{\omega(u)}g(t,x,u)du,  \\
&g(0,x,u)=g_{0}(x,u)\mbox{ on }\er^2,
\end{aligned}\right.
\end{equation}
\renewcommand{\theequation}{\thesection.\arabic{equation}}

%
 where
$$h(u):=\frac{\partial^{2}_{u}\omega(u)}{2\omega(u)}-|\partial_{u}\ln(\omega(u))|^{2} \, ; $$
reciprocally, given a solution  $g$ to \eqref{eq:VFPw}, the function $f$ defined by \eqref{formula:TruncatedSolution} is a solution to  \eqref{eq:VFP}.

In all the sequel, we shall assume that  $\omega:\er \to  (0,+\infty)$ is a  function of class ${\cal C}^{\infty}$ such that
\begin{description}
\item[\Hw]
\begin{enumerate}
\item[]$\displaystyle\lim_{|u|\to+\infty}  \frac{ \omega (u)}{|u|} =  +\infty$ and $\displaystyle\int_{\er}\frac{u^{2}}{\omega(u)}du=1$,
\item[]$\displaystyle{\limsup_{|u|\to\infty} \big| \frac{\omega'(u)}{\omega(u)} \big|<\infty, \,  \limsup_{|u|\to\infty} \big|\frac{\omega''(u)}{\omega'(u)}\big|<\infty}$,
\item[]Moreover, for some $\lambda_{0}>0$ we have  $ \ln(\omega)\in \tHh(\lambda_{0})$ and $h \in \Hh(\lambda_{0})$ .
\end{enumerate}
\end{description}
   The following result provides examples of such functions $\omega$, as well as tractable conditions on the initial condition  $f_0$  ensuring the type of  bounds on   $g_0$  required by our results on  equation \eqref{eq:VFPw}.  Its proof relies on  Lemma \ref{lem:NormCriterion} and is given in  Appendix \ref{sec:WeightAnalycity}.

\begin{lemma}\label{analyticw}
\begin{itemize}
\item[i)]  Let  $s\geq 4$ be  a positive integer.  Then, condition  \Hw  \, holds for the  weight function $\omega(u): = c(s) (1+u^2)^{\frac{s}{2}} $ for all value  $\lambda_0\in (0,\frac{1}{4})$,  where $c(s)>0$ is  such that $  \int_{\er}\frac{u^{2}}{\omega(u)}\,du=1$ .
\item[ii)] Let $f_0:\RR^2\to \RR$ be a function of class ${\cal C}^{\infty}$ such that  for some  even integer $s\geq 4 $, constants $C_0,\bar{\lambda}>0$, some $m,n,j\in \NN$ and all $k,l \geq j$, one has
\begin{equation}\label{condf_0d1}
 \Vert  (1+u^2)^{\frac{s}{2}} \partial^{l}_x   \partial^{k}_u f_0  \Vert_{\infty}\leq \frac{C_0(k+m)! (l+n)!}{\overline{\lambda}^{k+l}} .
 \end{equation}

 Then,  the function $g_0(x,u):=\omega (u) f_0(x,u)$ with $\omega (u) $ as in i) satisfies the assumptions of Lemma  \ref{lem:NormCriterion}  with $C:=C_0'= C_0 \kappa(s)e^{\bar{\lambda}}$ and    $\kappa(s)>0$  a bound for the absolute values of the coefficients of the polynomials $\omega,\partial_u\omega,\dots,\partial_u^s \omega$.  In particular,  if  for  some $n,m \in \NN$  condition \eqref{condf_0d1} holds for all $k,l\geq 0$, then for all  $\lambda\in [0,\overline{\lambda})$ one has
 \begin{equation*}
 \begin{split}
\| g\|_{ \Hh, \lambda} \leq &  C_0 \kappa(s)e^{\bar{\lambda}}   \mu(\overline{\lambda},m+n+1)\mbox{  and }\\
\|g\|_{ \tHh, \lambda}  \leq  & C_0 \kappa(s)\frac{e^{\bar{\lambda}}}{ \overline{\lambda} }  \mu(\overline{\lambda},m+n+2)
\end{split}
\end{equation*}
where
 $ \mu(\overline{\lambda}, p):= \sum_{a=0}^{\infty}\frac{1}{\overline{\lambda}^{a}}\frac{(a+1)\cdots(a+p)}{a!}<+\infty$ for all $p\in \NN, p\geq 1$.
 \end{itemize}
\end{lemma}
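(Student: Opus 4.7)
For part (i), I will verify each item of condition (H$_{\omega}$) separately. The integrability $\int u^{2}/\omega\,du<\infty$ (which defines $c(s)$ and explains why $s\geq 4$ is required), the growth $\omega(u)/|u|\to\infty$, and the asymptotic boundedness of $\omega'/\omega=su/(1+u^{2})$ and of $\omega''/\omega'=[1+(s-1)u^{2}]/[u(1+u^{2})]$ are all immediate from explicit computation with $\omega(u)=c(s)(1+u^{2})^{s/2}$. For the two analyticity statements, the key observation is that both $\ln\omega$ and $h$ depend only on $u$, so every $\partial_{x}$-derivative vanishes identically. For $\ln\omega=\ln c(s)+(s/2)\ln(1+u^{2})$, the partial-fraction identity
\[
\partial_{u}^{k}\ln(1+u^{2}) = (-1)^{k-1}(k-1)!\left[\frac{1}{(u-i)^{k}}+\frac{1}{(u+i)^{k}}\right],\qquad k\ge 1,
\]
yields the uniform bound $|\partial_{u}^{k}\ln\omega(u)|\le s(k-1)!$, from which a direct computation of the seminorm (using $\|\ln\omega\|_{\lambda,a}=\sum_{l\ge a}\lambda^{l-a}/(l-a)!\,\|\partial_{u}^{l}\ln\omega\|_{\infty}$ and the generating identity $\sum_{l'\ge 0}\binom{l'+a-1}{a-1}\lambda^{l'}=(1-\lambda)^{-a}$) gives
\[
\|\ln\omega\|_{\tHh,\lambda}\le s\sum_{a\ge 1}\frac{1}{(a-1)!(1-\lambda)^{a}} = \frac{s}{1-\lambda}\,e^{1/(1-\lambda)}<\infty \qquad\text{for all }\lambda<1.
\]
For $h$, the identities $\partial_{u}\ln\omega=\omega'/\omega$ and $\partial_{u}^{2}\ln\omega=\omega''/\omega-(\omega'/\omega)^{2}$ produce the explicit rational expression $h(u)=s[1-(s+1)u^{2}]/[2(1+u^{2})^{2}]$, a meromorphic function on $\CC$ with poles only at $u=\pm i$. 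Cauchy's integral formula applied on disks of radius $R<1$ centred on $\RR$ gives $|\partial_{u}^{k}h(u)|\le k!\,M_{R}/R^{k}$ uniformly in $u$, and choosing for example $R=1/2$ yields $|\partial_{u}^{k}h|\le C(s)\,2^{k}k!$; Lemma \ref{lem:NormCriterion}(a) then places $h$ in $\Hh(\lambda_{0})$ for any $\lambda_{0}<1/2$. Both conditions therefore hold simultaneously for every $\lambda_{0}\in(0,1/4)$, the value $1/4$ being merely a convenient safety margin rather than a tight threshold.

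For part (ii), I apply Leibniz's formula to $g_{0}=\omega f_{0}$. Since $\omega$ is a polynomial of degree $s$,
\[
\partial_{x}^{l}\partial_{u}^{k}g_{0} = \sum_{j=\max(0,k-s)}^{k}\binom{k}{j}\,\partial_{u}^{k-j}\omega(u)\,\partial_{x}^{l}\partial_{u}^{j}f_{0}(x,u),
\]
and each $\partial_{u}^{k-j}\omega$ is a polynomial of degree $s-(k-j)\le s$ whose coefficients are bounded in absolute value by $\kappa(s)$, so that $|\partial_{u}^{k-j}\omega(u)|\le \kappa(s)(1+u^{2})^{s/2}$. The hypothesis $(1+u^{2})^{s/2}|\partial_{x}^{l}\partial_{u}^{j}f_{0}|\le C_{0}(j+m)!(l+n)!/\bar\lambda^{j+l}$ then makes the weight $(1+u^{2})^{s/2}$ cancel termwise, leaving the pointwise bound $\kappa(s)C_{0}(j+m)!(l+n)!/\bar\lambda^{j+l}$ for each summand. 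Changing indices $i=k-j$ and using the elementary inequality
\[
\binom{k}{i}(k-i+m)! \le \frac{(k+m)!}{i!},
\]
which follows from noting that the ratio $\frac{k(k-1)\cdots(k-i+1)}{(k+m)(k-1+m)\cdots(k-i+1+m)}\le 1$ factor-by-factor, the Leibniz sum is controlled by
\[
C_{0}\,\kappa(s)\,\frac{(k+m)!(l+n)!}{\bar\lambda^{k+l}}\sum_{i=0}^{\min(k,s)}\frac{\bar\lambda^{i}}{i!} \;\le\; C_{0}\,\kappa(s)\,e^{\bar\lambda}\,\frac{(k+m)!(l+n)!}{\bar\lambda^{k+l}},
\]
which is exactly the hypothesis of Lemma \ref{lem:NormCriterion} with $C=C_{0}'=C_{0}\kappa(s)e^{\bar\lambda}$ and indices $m,n$ unchanged. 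The ``moreover'' estimates on $\|g_{0}\|_{\Hh,\lambda}$ and $\|g_{0}\|_{\tHh,\lambda}$ then follow by substituting this explicit value of $C$ into the final bounds derived at the end of the proof of Lemma \ref{lem:NormCriterion}, producing precisely the advertised expressions involving $\mu(\bar\lambda,m+n+1)$ and $\mu(\bar\lambda,m+n+2)$.

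I expect the main subtlety to lie in the $\tHh$-control of $\ln\omega$ in part (i): although $\ln\omega$ is explicit, it is itself unbounded, which prevents a direct application of Lemma \ref{lem:NormCriterion}(a); one must instead exploit the seminorm structure of $\tHh$ (whose defining sum starts at $a\ge 1$, so that the problematic $\|\ln\omega\|_{\infty}$ never enters) and carry out the $\tHh$-estimate by hand, as sketched above. The Cauchy-type bound $|\partial_{u}^{k}h|\le C(s)\,2^{k}k!$ in the complex-analytic step is essentially optimal given the location of the singularities at $\pm i$. By contrast, the combinatorial bound on $\binom{k}{i}(k-i+m)!$ in part (ii) is elementary but essential, since it is precisely what collapses the Leibniz sum into the exponential constant $e^{\bar\lambda}$ (rather than producing an unmanageable superexponential factor).
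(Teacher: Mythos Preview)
Your proof is correct. Part (ii) follows essentially the same line as the paper: Leibniz expansion, the pointwise bound $|\partial_u^{k-j}\omega|\le \kappa(s)(1+u^2)^{s/2}$, and the combinatorial inequality $\binom{k}{i}(k-i+m)!\le (k+m)!/i!$, which the paper states in the equivalent form $l!(l-j+n)\cdots(l-j+1)\le (l+n)!$.

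Part (i) is where you take a genuinely different route. The paper writes $\partial_u^{l+1}\ln\omega = q_{l+1}(u)/(1+u^2)^{l+1}$ and $\partial_u^l h = r_{l+2}(u)/[2(1+u^2)^{l+2}]$ for polynomials $q_{l},r_{l}$ of degree $l$, then sets up recursions for their coefficients and obtains the crude bound $a^{(l)}\le \tfrac{s}{4}4^l l!$ (and similarly for $r_l$), yielding $|\partial_u^{l+1}\ln\omega|\le s\,4^l(l+2)!$ and $|\partial_u^l h|\le \tfrac{s+s^2}{4}4^l(l+3)!$. This is self-contained and real-variable, but the recursive bound loses a factor roughly $(l+2)!/l!$ and forces the threshold $\lambda_0<1/4$. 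Your partial-fraction identity for $\partial_u^k\ln(1+u^2)$ and the Cauchy estimate for $h$ on a strip $|\mathrm{Im}\,z|\le R<1$ are sharper: you get $|\partial_u^k\ln\omega|\le s(k-1)!$ and $|\partial_u^k h|\le C(s)R^{-k}k!$, so the analyticity statements actually hold for all $\lambda_0<1$ (for $\ln\omega$) and all $\lambda_0<1$ (for $h$, letting $R\uparrow 1$), confirming your remark that $1/4$ is only a safety margin. Your observation that $\ln\omega$ is unbounded and hence cannot be handled via Lemma~\ref{lem:NormCriterion}(a) is also well taken; the paper circumvents this by first showing $\partial_u\ln\omega\in\Hh(\lambda)$ and then invoking Lemma~\ref{lem:normprop}(i) to pass to $\ln\omega\in\tHh(\lambda)$, whereas you compute the $\tHh$-seminorm directly. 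Both routes are valid; yours gives tighter constants, the paper's avoids any appeal to complex analysis.
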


\subsection{Main results}

Given $K, T$ and $\lambda_{0}$ strictly positive real numbers such that $\lambda_0>T(1+K)$,
 and the function $$\lambda(t):=\lambda_{0}-(1+K)t,$$
we now define the spaces
\begin{align*}
&\Hh_{\lambda_0,K,T}:=\left\{\psi\in\Cc^{1,\infty}([0,T]\times\er^2)\mbox{ such that } \sup_{t\in[0,T]}\Vert \psi(t)\Vert_{\Hh,\lambda(t)}<+\infty \right\},\\
&\tHh_{\lambda_0,K,T}:=\left\{\psi\in\Cc^{1,\infty}([0,T]\times\er^2)\mbox{ such that }\int_{0}^{T}\Vert \psi(t)\Vert_{\tHh,\lambda(t)}dt< +\infty \right\}
\end{align*}
and their subsets defined for a positive constant $M$:
\begin{align*}
&{\cal B}^{M}_{\lambda_0,K,T}:=\left\{\psi\in \Hh_{\lambda_0,K,T} \mbox{ such that } \sup_{t\in[0,T]}\Vert \psi(t)\Vert_{\Hh,\lambda(t)}  \leq M\right\}, \\
&\widetilde{{\cal B}}^{M}_{\lambda_0,K,T}:=\left\{\psi\in \tHh_{\lambda_0,K,T}\mbox{ such that } \int_{0}^{T}\Vert \psi(t)\Vert_{\tHh,\lambda(t)}dt\leq M\right\}.
\end{align*}
We are ready to state the main result  of this section:
\begin{thm}\label{thm:ExistenceNonlinear}
Let $M, T$ be positive constants and $\omega:\er\to (0,+\infty)$ be a function of class ${\cal C}^{\infty}$  satisfying \Hw\, for some $\lambda_0>0$.  Introduce the finite constants  $\gamma_{0}:= \Vert\ln(\omega)\Vert_{\tHh,\lambda_{0}} $ and $\gamma_{1}:= \Vert h\Vert_{\Hh,\lambda_{0}} $ and  assume that
\begin{itemize}
\item[a)]$T<\frac{\lambda_0}{ 2 + \lambda_0 +4\gamma_0 }$,
\item[b)]$M \leq \frac{1}{16}(K - \lambda_{0}- 4\gamma_{0}- 1)$ for some $K$ in the nonempty interval $(1 + \lambda_0 +4\gamma_0 , \frac{\lambda_0}{T}  -1)$   and
\item[c)]$M(1+\gamma_{0})\exp\{(M\gamma_{0}+\gamma_{1})T\}<1$.
\end{itemize}
Assume moreover that $f_0:\er^2 \to \er$ is a function of class ${\cal C}^{\infty}$ and that $g_0(x,u):=\omega(u)f_0(x,u)$ satisfies
\begin{itemize}
\item[d)]$\max\{ \Vert g_{0}\Vert_{\Hh,\lambda_{0}} , T\Vert g_{0}\Vert_{\tHh,\lambda_{0}}\} \leq M$  and
\item[e)]$\Vert g_{0}\Vert_{\Hh,\lambda_{0}}\exp(T(\gamma_1+16\gamma_0)) \leq M \exp(-(16+\gamma_0)M)$.
\end{itemize}
Then,  equation \eqref{eq:VFPw} has a unique smooth solution  $g \in {\cal B}^{M}_{\lambda_0,K,T}  \cap \widetilde{{\cal B}}^{M}_{\lambda_0,K,T} $. In particular, under the previous assumptions, a solution $f\in {\cal C}^{1,\infty}$ to \eqref{eq:VFP} with initial condition $f_0$ exists.
\end{thm}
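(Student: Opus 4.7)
The plan is to follow the fixed-point scheme from \cite{JabNou-11}, linearizing around a given $\tilde g$ and exploiting the shrinking radius of analyticity $\lambda(t)=\lambda_0-(1+K)t$ to absorb the loss of derivatives. Given $\tilde g\in\mathcal B^M_{\lambda_0,K,T}\cap\widetilde{\mathcal B}^M_{\lambda_0,K,T}$, one defines $\tilde P(t,x):=-\int_\er \frac{u^2}{\omega(u)}\tilde g(t,x,u)\,du$ and lets $g=\Phi(\tilde g)$ be the unique smooth solution of the \emph{linear} parabolic equation obtained by substituting $\tilde P$ for $P$ in \eqref{eq:VFPw}, with initial datum $g_0$. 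Existence and $\mathcal C^{1,\infty}$-smoothness of $g$ follow from standard theory for linear kinetic equations with smooth coefficients. The aim is to show that $\Phi$ leaves $\mathcal B^M\cap\widetilde{\mathcal B}^M$ invariant and is a contraction for a suitable equivalent norm, after which the Banach fixed-point theorem produces the unique solution to \eqref{eq:VFPw}, and $f:=g/\omega$ solves \eqref{eq:VFP}.

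The core a priori estimate starts by differentiating in time and invoking Lemma \ref{lem:normprop}(ii):
$$\frac{d}{dt}\|g(t)\|_{\mathcal H,\lambda(t)}\le \|\partial_t g(t)\|_{\mathcal H,\lambda(t)}-(1+K)\|g(t)\|_{\widetilde{\mathcal H},\lambda(t)}.$$
The negative \emph{dissipation} term, equal by Lemma \ref{lem:normprop}(i) to $-(1+K)(\|\partial_x g\|_{\mathcal H,\lambda(t)}+\|\partial_u g\|_{\mathcal H,\lambda(t)})$, provides the budget to absorb the four derivative-losing contributions on the right-hand side of \eqref{eq:VFPw}: the kinetic transport $u\partial_x g$, the friction $\partial_u\ln\omega\,\partial_u g$, the pressure drift $\partial_x\tilde P\,\partial_u g$, and the diffusion $\partial_u^2 g$. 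Using Lemma \ref{lem:normprop}(iii) on the products, the elementary bound $\|\partial_x^{k+1}\tilde P\|_{\infty}\le\|\partial_x^{k+1}\tilde g\|_{\infty}$ coming from $\int u^2/\omega\,du=1$ in \Hw, and the finite constants $\gamma_0=\|\ln\omega\|_{\widetilde{\mathcal H},\lambda_0}$, $\gamma_1=\|h\|_{\mathcal H,\lambda_0}$, a careful accounting of $\|\cdot\|_{\mathcal H,\lambda(t)}$-norms term by term produces
$$\|\partial_t g\|_{\mathcal H,\lambda(t)}\le \bigl(1+\lambda(t)+4\gamma_0+16\|\tilde g\|_{\mathcal H,\lambda(t)}\bigr)\|g\|_{\widetilde{\mathcal H},\lambda(t)}+\bigl(\gamma_0\|\tilde g\|_{\widetilde{\mathcal H},\lambda(t)}+\gamma_1\bigr)\|g\|_{\mathcal H,\lambda(t)}.$$
Condition (b), together with $\tilde g\in\mathcal B^M$, is exactly what makes the coefficient of $\|g\|_{\widetilde{\mathcal H},\lambda(t)}$ at most $K$, so the loss is absorbed by the dissipation. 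A Gronwall argument combined with $\int_0^T\|\tilde g\|_{\widetilde{\mathcal H},\lambda(t)}dt\le M$ and condition (e) delivers $\sup_{t\in[0,T]}\|g(t)\|_{\mathcal H,\lambda(t)}\le M$; integrating the full differential inequality over $[0,T]$ and using condition (d) yields $\int_0^T\|g\|_{\widetilde{\mathcal H},\lambda(t)}dt\le M$. Thus $\Phi(\tilde g)\in\mathcal B^M\cap\widetilde{\mathcal B}^M$.

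For the contraction step, setting $\delta g:=\Phi(\tilde g_1)-\Phi(\tilde g_2)$ and $\delta\tilde g:=\tilde g_1-\tilde g_2$ and subtracting the two linearized equations gives a linear equation for $\delta g$ with zero initial datum and source terms proportional to $\delta\tilde P$ and $\delta\tilde g$. The same dissipative scheme yields
$$\sup_t\|\delta g(t)\|_{\mathcal H,\lambda(t)}+\int_0^T\|\delta g\|_{\widetilde{\mathcal H},\lambda(t)}dt\le M(1+\gamma_0)e^{(M\gamma_0+\gamma_1)T}\Bigl(\sup_t\|\delta\tilde g\|_{\mathcal H,\lambda(t)}+\int_0^T\|\delta\tilde g\|_{\widetilde{\mathcal H},\lambda(t)}dt\Bigr),$$
and condition (c) makes the prefactor strictly less than one. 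Banach's theorem then gives existence and uniqueness of a fixed point $g\in\mathcal B^M\cap\widetilde{\mathcal B}^M$, and reverting $f:=g/\omega$ produces the claimed classical solution to \eqref{eq:VFP}.

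The main technical obstacle is the \emph{simultaneous} control of the four derivative-losing terms by the single dissipation budget $1+K$, in presence of an unbounded transport coefficient $u$ and a pressure drift nonlinearly depending on the second velocity moment of $g$. This is precisely what forces the constraint $16M\le K-1-\lambda_0-4\gamma_0$ in (b) and the exponential-smallness condition (c), and what singles out the polynomial weight $\omega\sim(1+u^2)^{s/2}$ with $s\ge 4$: large enough to make $\int u^2/\omega<\infty$ (controlling the pressure), yet smooth enough that $\ln\omega\in\widetilde{\mathcal H}(\lambda_0)$ and $h\in\mathcal H(\lambda_0)$ so that $\gamma_0,\gamma_1$ are finite and the product estimates of Lemma \ref{lem:normprop}(iii) deliver usable bounds.
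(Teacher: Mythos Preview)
Your overall scheme—linearize, show $\Phi$ maps $\mathcal B^M\cap\widetilde{\mathcal B}^M$ to itself, then prove contraction via Banach—is exactly the paper's strategy. However, your derivation of the core a priori estimate has a genuine gap. You propose to bound
\[
\frac{d}{dt}\|g(t)\|_{\Hh,\lambda(t)}\le \|\partial_t g(t)\|_{\Hh,\lambda(t)}+\lambda'(t)\|g(t)\|_{\tHh,\lambda(t)}
\]
and then estimate $\|\partial_t g\|_{\Hh,\lambda(t)}$ term by term. This fails for two reasons. First, the transport term $u\,\partial_x g$ has an \emph{unbounded} coefficient, so $\|u\,\partial_x g\|_{\Hh,\lambda}$ is in general infinite and cannot be controlled by $\|g\|_{\tHh,\lambda}$. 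Second, the diffusion $\tfrac{\sigma^2}{2}\partial_u^2 g$ loses \emph{two} derivatives in $u$, while your dissipation $-(1+K)\|g\|_{\tHh,\lambda}$ buys back only one; there is no way to bound $\|\partial_u^2 g\|_{\Hh,\lambda}$ by $\|g\|_{\tHh,\lambda}$.

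The paper avoids both obstacles by never estimating $\|\partial_t g\|$ directly. Instead it applies the differential operator $\partial_x^k\partial_u^l$ to the linearized equation and observes that $\partial_x^k\partial_u^l g$ is itself a classical solution of a linear kinetic Fokker--Planck equation whose principal part is $\partial_t+u\,\partial_x-(\partial_x Q-\partial_u\ln\omega)\partial_u-\tfrac{\sigma^2}{2}\partial_u^2$. A parabolic maximum principle (Theorem~\ref{thm:WellposedClassicLinearVFP}) then yields a bound on $\tfrac{d}{dt}\|\partial_x^k\partial_u^l g(t)\|_\infty$ in which the transport $u\,\partial_x(\cdot)$ and the diffusion $\tfrac{\sigma^2}{2}\partial_u^2(\cdot)$ have \emph{disappeared}: they sit on the operator side and contribute nothing to the right-hand source. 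What remains are only the commutator terms (e.g.\ $l\,\partial_x^{k+1}\partial_u^{l-1}g$ from $[u\partial_x,\partial_u^l]$), which lose at most one derivative and are what the shrinking analyticity radius absorbs. Summing these $\|\cdot\|_\infty$ estimates against the analytic weights (Lemma~\ref{lem:AnalyticalEstim1d} and Proposition~\ref{prop:TimeEvolutionNormSolution}) is what actually produces the differential inequality you wrote down.

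A secondary discrepancy: for the contraction step the paper does \emph{not} work with the full $\Hh,\tHh$ norms but with the simpler pair $\|\cdot\|_{\lambda(t),0}$ and $\|\cdot\|_{\lambda(t),1}$ (see Theorem~\ref{thm:FixedPointMethod}); the constant $M(1+\gamma_0)e^{(M\gamma_0+\gamma_1)T}$ in condition (c) is tuned to that simpler norm, and repeating the stability estimate at the $\Hh$-level would pick up extra numerical factors.
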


\begin{corollary}\label{coro:UniformMass}
Let  $f$ be the  solution to \eqref{eq:VFP} given in Theorem \ref{thm:ExistenceNonlinear} and assume that \Hu{0} holds. Then, $f(t,x,u)$ satisfies \Hu{t}, for all $t$ in $[0,T]$.
In particular, if   the assumptions of Theorem \ref{thm:ExistenceNonlinear} and condition \Hu{0} hold, then a solution to \eqref{eq:CVFP} with $\beta=0$ exists.
\end{corollary}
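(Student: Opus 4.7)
The argument has three components. First, periodicity of $f(t,\cdot,u)$ is inherited from the initial datum: since \eqref{eq:VFP} is invariant under the shift $x\mapsto x+1$ and the analytic (semi)norms $\Vert\cdot\Vert_{\Hh,\lambda}$, $\Vert\cdot\Vert_{\tHh,\lambda}$ are translation invariant, the function $\tilde f(t,x,u):=f(t,x+1,u)$ lies in ${\cal B}^{M}_{\lambda_0,K,T}\cap \widetilde{{\cal B}}^{M}_{\lambda_0,K,T}$ and solves the same Cauchy problem as $f$. The uniqueness part of Theorem~\ref{thm:ExistenceNonlinear} then forces $\tilde f\equiv f$.

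The two remaining conditions will be obtained jointly from Lemma~\ref{lem:relatedconditions}. Set $\bar\rho(t,x):=\int_\er f(t,x,u)\,du-1$ and $W(t,x):=\partial_x\!\int_\er u\,f(t,x,u)\,du$, so that \Hu{0} reads $\bar\rho(0,\cdot)\equiv W(0,\cdot)\equiv 0$. In dimension one the formula $P(t,x)=-\int u^2 f\,du$ trivially satisfies the Poisson identity of part c) of the lemma, since both sides reduce to $\partial_x^2 P$. The integrability $\int u^2/\omega\,du<\infty$ combined with the analytic bounds on $g=\omega f$ legitimizes the differentiations under the integral required by the lemma. Parts a) and d) with $\beta=0$ then produce a closed linear system
\begin{equation*}
\partial_t \bar\rho + W=0,\qquad \partial_t W + \partial_x\bigl(\bar\rho\,\partial_x P\bigr)=0,\qquad \bar\rho(0,\cdot)=W(0,\cdot)\equiv 0,
\end{equation*}
in which $\partial_x P$ is treated as a given coefficient, analytic in $x$ by the analyticity of $g$.

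The heart of the proof is to show that this system admits only the zero solution. I would work in the $x$-only analogues of the norms of Subsection~\ref{1d_beta=0}, for which the identities of Lemma~\ref{lem:normprop} remain valid: $\frac{d}{d\lambda}\|\cdot\|_{\Hh,\lambda}=\|\cdot\|_{\tHh,\lambda}=\|\partial_x\cdot\|_{\Hh,\lambda}$, together with the product estimate. Setting $F(t):=\|\bar\rho(t)\|_{\Hh,\lambda(t)}$ and $G(t):=\|W(t)\|_{\Hh,\lambda(t)}$ with $\lambda(t)=\lambda_0-(1+K)t$, differentiation of $F+G$ and use of Lemma~\ref{lem:normprop}(i)--(iii) applied to the product $\bar\rho\,\partial_x P$ yield
\begin{equation*}
\frac{d}{dt}(F+G)\le -(1+K)\bigl(\|\bar\rho\|_{\tHh,\lambda(t)}+\|W\|_{\tHh,\lambda(t)}\bigr)+G+C_P(t)\bigl(F+\|\bar\rho\|_{\tHh,\lambda(t)}\bigr),
\end{equation*}
where $C_P(t)$ is controlled by the analytic norms of $\partial_x P$ and $\partial_x^2 P$. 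Through $P=-\int u^2 f\,du$ and the integrability of $u^2/\omega$, $C_P(t)$ reduces to the $\Hh$- and $\tHh$-norms of $g$ supplied by Theorem~\ref{thm:ExistenceNonlinear}; in particular $\int_0^T C_P(s)\,ds<\infty$. The smallness of $g$ in these norms (bounded by $M$) together with the lower bound on $K$ imposed in Theorem~\ref{thm:ExistenceNonlinear} imply $C_P(t)\le 1+K$, so the $\|\bar\rho\|_{\tHh}$ contribution is absorbed and one obtains $\frac{d}{dt}(F+G)\le C_P(t)(F+G)$. Grönwall and $F(0)=G(0)=0$ then give $\bar\rho\equiv W\equiv 0$, which combined with periodicity yields \Hu{t} for every $t\in[0,T]$.

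The main obstacle is precisely this last step: the term $\partial_x(\bar\rho\,\partial_x P)$ loses one $x$-derivative of $\bar\rho$, so no plain $L^\infty$ or $L^2$ Grönwall closes. The shrinking analyticity radius $\lambda(t)$ of the framework of \cite{JabNou-11}, already at the core of Theorem~\ref{thm:ExistenceNonlinear}, converts this derivative loss into the negative term $-(1+K)\|\bar\rho\|_{\tHh,\lambda(t)}$, which can be absorbed exactly because of the smallness conditions b) and d) built into Theorem~\ref{thm:ExistenceNonlinear}.
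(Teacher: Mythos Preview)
Your overall plan---derive the closed system for $(\bar\rho,\partial_x V)$ from Lemma~\ref{lem:relatedconditions}, exploit a shrinking analyticity radius to absorb the derivative loss in $\partial_x(\bar\rho\,\partial_xP)$, and conclude by Gr\"onwall from zero initial data---is exactly the paper's strategy, and your periodicity argument via uniqueness is fine (the paper obtains it instead from the fixed-point iterates, Remark~\ref{rem:periodicity}).

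The gap is in the choice of norm. You work with $\Vert\cdot\Vert_{\Hh,\lambda(t)}$ and invoke Lemma~\ref{lem:normprop}(iii), but that product estimate is only for the basic norm $\Vert\cdot\Vert_{\lambda}$, not for $\Vert\cdot\Vert_{\Hh,\lambda}$. To close your inequality you would need to control $\Vert\partial_x(\bar\rho\,\partial_xP)\Vert_{\Hh,\lambda}=\Vert\bar\rho\,\partial_xP\Vert_{\tHh,\lambda}$, and any product decomposition of this quantity produces a factor comparable to $\Vert\partial_x^2P\Vert_{\Hh,\lambda}$ or $\Vert\partial_xP\Vert_{\tHh,\lambda}$, i.e.\ $\sum_a\tfrac{1}{(a!)^2}\Vert P\Vert_{\lambda,a+2}$, whose weights grow like $a^4/(a!)^2$. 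This is \emph{not} bounded by $\Vert g\Vert_{\Hh,\lambda}$ or $\Vert g\Vert_{\tHh,\lambda}$ (weights $1/(a!)^2$ and $a^2/(a!)^2$), so the estimate does not close with the information supplied by Theorem~\ref{thm:ExistenceNonlinear}. The refined product Lemmas~\ref{lemma:NormProductThreeFunctions}--\ref{lemma:NormProductTwoFunctions} do not cover this case either.

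The paper avoids this by dropping one level in the norm hierarchy: it runs the argument with the \emph{basic} norm $\Vert\cdot\Vert_{\lambda}$, for which Lemma~\ref{lem:normprop}(iii) applies directly and $\Vert\partial_x^2P(t)\Vert_{\lambda}\le 4\Vert P(t)\Vert_{\Hh,\lambda(t)}\le 4M$ holds for every $\lambda\le\lambda(t)$. The derivative loss then appears as $\Vert\partial_x\bar\rho\Vert_{\lambda}=\partial_\lambda\Vert\bar\rho\Vert_{\lambda}$, and is absorbed not by reusing the curve $\lambda(t)$ but by introducing a \emph{fresh} characteristic $\gamma(t)=\lambda(T)-bMt$ with a free parameter $b$; Gr\"onwall along $t\mapsto\Vert\bar\rho(t)\Vert_{\gamma(t)}+b\Vert\partial_xV(t)\Vert_{\gamma(t)}$ then gives zero. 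This decouples the argument from the particular value of $K$ and from the delicate $\Hh$--$\tHh$ product calculus.
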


 \begin{remark}\label{rem:Tsmall}
For instance, let   $f_0:(\RR^d)^2\to \RR$   be a function of class $   {\cal C}^{\infty}$ and  $C_0,\bar{\lambda}>0$ , $n,m \in \NN$ be numbers satisfying  condition \eqref{condf_0d1}  for  every $k,l\geq 0$. Suppose moreover that for some $\lambda_0<\min\{\bar{\lambda},\frac{1}{4}\}$ one has
$$C_0< \kappa_0(\bar{\lambda},s):=\frac{1}{2\kappa(s)e^{\bar{\lambda}} \mu (\bar{\lambda}, m+n+1)}\frac{\ln 2}{(16+  \Vert\ln(\omega)\Vert_{\tHh,\lambda_{0}})  }$$
for $\omega$ as in Lemma   \ref{analyticw} i).
Setting $M:= 2C_0  \kappa(s)e^{\bar{\lambda}} \mu (\bar{\lambda}, m+n+1)  $ and  $\gamma_{0}= \Vert\ln(\omega)\Vert_{\tHh,\lambda_{0}} $, we then have
$$\kappa_1(C_0,\bar{\lambda},s):=  \min \left\{
\frac{ \lambda_0}{16 M  + \lambda_0 +4\gamma_0  +2} ,  \frac{2 \bar{\lambda} \mu(\overline{\lambda},m+n+1)}{ \mu(\overline{\lambda},m+n+2)} ,
-\frac{\ln ( M  (1+\gamma_0))   }{ M   \gamma_0 +\gamma_1}, \frac{\ln  2 - M (16+\gamma_0)}{\gamma_1+16\gamma_0}   \right\}>0. $$
Taking $T<\kappa_1(C_0,\bar{\lambda},s)$,  conditions  a)  and c) of Theorem \ref{thm:ExistenceNonlinear} are trivially satisfied, condition b) is satisfied (with equality) for   $K:= 16 M  + \lambda_0 +4\gamma_0  +1$, and conditions d) and e) hold because of  the estimates in Lemma  \ref{analyticw} ii).\end{remark}

The steps of the proof   of  Theorem \ref{thm:ExistenceNonlinear}  are the following: first we
will establish in Section \ref{sec:LinearApproximation} the existence  of an analytic solution to a suitable linear version of \eqref{eq:VFP} in a small time interval, along with useful estimates.
Then, under additional constraints we construct  in  Section \ref{sec:ConstructionSolutionNonlinear} a solution to the nonlinear equation \eqref{eq:VFP} by means of a fixed point argument.

Before proceeding, let us prove Corollary \ref{coro:UniformMass}: 
\begin{proof}[Proof of Corollary \ref{coro:UniformMass}] Periodicity of the solution is an easy consequence of the fixed point method
employed in the proof of Theorem \ref{thm:ExistenceNonlinear} (see remark \ref{rem:periodicity} in Section  \ref{sec:ConstructionSolutionNonlinear}).

Now, thanks to the assumptions on $\omega$ and the fact that $f(t,x,u)\omega(u)$ belongs to $\Hh(\lambda(t))$ for each $t\in [0,T]$, the assumptions of Lemma \ref{lem:relatedconditions} are satisfied (in particular the  integrals $\int_{\er}u\partial^2_u f(t,x,u)du=\int_{\er}\partial_u f(t,x,u)du=\int_{\er}\partial^2_u f(t,x,u)du$ exist and vanish;  moreover,  we have $\int_{\er}u\partial_u f(t,x,u)du =- \int_{\er} f(t,x,u)du$). Therefore, thanks to Lemma \ref{lem:relatedconditions} b),  the functions $\overline{\rho}(t,x):=\rho(t,x)-1=\int_{\er} f(t,x,u)du -1 $,  $V(t,x):=\int_{\er}u f(t,x,u)\,du$ and  $  P(t,x)= -   \int_{\er}u^{2}f(t,x,u)\,du$ satisfy the following system of equations: for all
$(t,x) \in (0,T]\times \er$,
\begin{equation*}
\left\{
\begin{aligned}
&\partial_{t}\overline{\rho}(t,x)=- \partial_{x}V(t,x),\\
&\partial_{t}(\partial_{x} V(t,x))=- \partial_{x}(\overline{\rho}(t,x)\partial_{x}P(t,x)).
\end{aligned}
\right.
\end{equation*}
 From the latter and from  Lemma \ref{lem:normprop}-(iii), we obtain, for each $\lambda\in [0,\lambda_0)$,
\begin{equation}
\label{proofst1:LocalProperties}
\left\{
\begin{aligned}
\partial_{t}\Vert\overline{\rho}(t)\Vert_{\lambda}&\leq \Vert  \partial_{x} V(t)\Vert_{\lambda},\\
\partial_{t}\Vert \partial_{x} V(t)\Vert_{\lambda}&\leq \Vert\partial_{x}P(t)\Vert_{\lambda}\Vert\partial_{x}\overline{\rho}(t)\Vert_{\lambda}+
\Vert\partial^{2}_{x}P(t)\Vert_{\lambda}\Vert \overline{\rho}(t)\Vert_{\lambda}.
\end{aligned}
\right.
\end{equation}
Since $\Vert\partial_{x}\overline{\rho}(t)\Vert_{\lambda}=\frac{d}{d\lambda}\Vert\overline{\rho}(t)\Vert_{\lambda}$ by  Lemma \ref{lem:normprop}-(ii), \eqref{proofst1:LocalProperties} rewrites as
\begin{equation*}
\left\{
\begin{aligned}
\partial_{t}A(t,\lambda)&\leq B(t,\lambda),\\
\partial_{t}B(t,\lambda)&\leq \Vert\partial_{x}P(t)\Vert_{\lambda}\partial_{\lambda}A(t,\lambda)+
\Vert\partial^{2}_{x}P(t))\Vert_{\lambda}A(t,\lambda),
\end{aligned}
\right.
\end{equation*}
for $A(t,\lambda):=\Vert\overline{\rho}(t)\Vert_{\lambda}$ and $B(t,\lambda):=\Vert \partial_{x} V(t)\Vert_{\lambda}$. Since  $t\mapsto \lambda(t)$ is decreasing and,  by Theorem \ref{thm:ExistenceNonlinear}, $P\in {\cal B}^{M}_{\lambda_0,K,T} \cap
\widetilde{{\cal B}}^{M}_{\lambda_0,K,T} $, we have
\begin{align*}
& \Vert\partial_{x}P(t)\Vert_{\lambda(T)}\leq
 \Vert\partial_{x}P(t)\Vert_{\lambda(t)}\leq \max_{s \in[0,T]}\Vert P(s)\Vert_{\Hh,\lambda(s)}\leq M,\\
&  \Vert\partial^{2}_{x}P(t)\Vert_{\lambda(T)}\leq
\Vert\partial^{2}_{x}P(t)\Vert_{\lambda(t)}\leq 4\max_{s \in[0,T]}\Vert P(s)\Vert_{\Hh,\lambda(s) }\leq 4M.
\end{align*}
We deduce that for all $\lambda\in[0,\lambda(T)]$, $t\in [0,T]$,
\begin{equation}
\label{proofst3:LocalProperties}
\left\{
\begin{aligned}
\partial_{t}A(t,\lambda)&\leq B(t,\lambda),\\
\partial_{t}B(t,\lambda)&\leq M\partial_{\lambda}A(t,\lambda)+4M A(t,\lambda)
\end{aligned}
\right.
\end{equation}
 because  $\partial_{\lambda}A(t,\lambda)\geq 0$.  Now set $\mathcal{Y}(t,\lambda):= A(t,\lambda)+bB(t,\lambda)$ where $b$ is a positive constant that we will specified later. Since also   $\partial_{\lambda}B(t,\lambda)\geq 0$,  from \eqref{proofst3:LocalProperties} we obtain
\begin{equation*}
\partial_{t}\mathcal{Y}(t,\lambda)\leq   B(t,\lambda)+ 4 b M  A(t,\lambda)+ b M \partial_{\lambda}A(t,\lambda)\leq \left(\frac{1}{b} \vee 4 b M \right) \mathcal{Y}(t,\lambda)+b M\partial_{\lambda}\mathcal{Y}(t,\lambda).
\end{equation*}
That is, with $b_2:= b M>0$ and  $b_1:=\left(\frac{1}{b} \vee 4 b M\right)>0$, it holds that
\begin{equation*}
\partial_{t}\mathcal{Y}(t,\lambda)\leq  b_{1} \mathcal{Y}(t,\lambda)+b_{2}\partial_{\lambda}\mathcal{Y}(t,\lambda),~\forall t\in[0,T],~\forall\lambda\in[0,\lambda(T)).
\end{equation*}
We now observe that  the function $t\mapsto \mathcal{Y}(t,\gamma(t))$ with  $\gamma(t):=\lambda(T)-b_{2}t$ is constant for all $t\in [0,\frac{\lambda(T)}{ b_{2}} )$. Indeed, we have
\begin{equation*}
\partial_{t}(\mathcal{Y}(t,\gamma(t)))=(\partial_{t}\mathcal{Y})(t,\gamma(t))-b_{2}\partial_{\lambda}\mathcal{Y}(t,\gamma(t))\leq b_{1}\mathcal{Y}(t,\gamma(t)),
\end{equation*}
 and Gronwall's lemma,  together with assumption \Hu{0} implying  that $\mathcal{Y}(0,\lambda)=0$ for all non negative $\lambda$,  yield $\mathcal{Y}(t,\gamma(t))=0$ for all $t\in [0,\frac{\lambda(T)}{b_{2}} )$. This shows that
$\overline{\rho}(t,x) = | \partial_{x}V(t,x)|=0$ for all $t\in [0,\frac{\lambda(T)}{b_{2}})$. Choosing $b=\lambda(T)/( M T)$, we  conclude the result,  using also the uniform bounds available   up to  time $t=T$.
\end{proof}

\subsection{The linearized equation}\label{sec:LinearApproximation}
Consider the linear equation
\renewcommand{\theequation}{{FP$\omega$}}

\begin{equation}\label{eq:FPw}
\left\{
\begin{aligned}
&\partial_{t}g+u \partial_{x}g -\left(
\partial_{x}Q-\partial_{u}(\ln \omega)\right)\partial_{u}g-\frac{\sigma^2}{2}\partial^{2}_{u}g=g\partial_{x}Q\partial_{u}\ln \omega +g h\mbox{ on }(0,T)\times\er^2,\\
&g(0,x,u)=g_{0}(x,u):=\omega(u) f_{0}(x,u)\mbox{ on }\er^2,
\end{aligned}\right.
\end{equation}
\renewcommand{\theequation}{\thesection.\arabic{equation}}
where $Q:[0,T]\times\er \to \er  $ is a given function, with uniformly in $t\in [0,T]$ bounded derivatives of all order in $x\in \er$. Equation \eqref{eq:FPw} is easily seen to be equivalent,   through the relation \eqref{formula:TruncatedSolution},  to the linear version of \eqref{eq:VFP}:
\renewcommand{\theequation}{{FP}}
\begin{equation}\label{eq:FP}
\left\{
\begin{aligned}
&\partial_{t}f+u \partial_{x}f- \partial_{x}Q\partial_{u}f -\frac{\sigma^2}{2}\partial^{2}_{u}f=0\mbox{ on }(0,T)\times\er^2\\
&f(0,x,u)=f_{0}(x,u)\mbox{ on }\er^2.
\end{aligned}\right.
\end{equation}
\renewcommand{\theequation}{\thesection.\arabic{equation}}
Existence and uniqueness of a ${\cal C}^{\infty}$-solution to the two previous equations is  recalled in Theorem \ref{thm:WellposedClassicLinearVFP} in Appendix \ref{sec:WellPosedLinearVFP}.  We next prove that the solution $g$ to \eqref{eq:FPw} is indeed analytic  whenever the inputs $g_0$ and $Q$ have
small enough analytic norms and  the time horizon $T>0$ is small enough:

\begin{thm}\label{thm:Closure}
Assume that  for some $\lambda_0>0$ condition  \Hw  \,  holds,  and that $g_0:\er^2\to \er$ is a function of class ${\cal C}^{\infty}$ such that $\Vert g_{0}\Vert_{\Hh,\lambda_{0}}<+\infty.$
For $\gamma_0$ and $\gamma_1$ as in Theorem \ref{thm:ExistenceNonlinear}, let $T>0$ and $M_1$>0 be a   time  horizon  and a   constant satisfying
\begin{itemize}
\item[a)]$T<\frac{\lambda_0}{ 2 + \lambda_0 +4\gamma_0 }$ and
\item[b)]$M_{1} \leq \frac{1}{16}(K - \lambda_{0}- 4\gamma_{0}- 1)$  for some $K$ in the nonempty
set $(1 + \lambda_0 +4\gamma_0 , \frac{\lambda_0}{T}  -1)$.
\end{itemize}
Then, for any   $M_2>0$ and $Q\in  {\cal B}^{M_1}_{\lambda_0,K,T} \cap
\widetilde{{\cal B}}^{M_2}_{\lambda_0,K,T} $, equation \eqref{eq:FPw} has  a solution $g$ of class $\Cc^{1,\infty}$ such that $$g \in  {\cal B}^{\hat{M}}_{\lambda_0,K,T} \cap
\widetilde{{\cal B}}^{\hat{M}}_{\lambda_0,K,T}  $$
where $\hat{M}=\Vert g_{0}\Vert_{\Hh,\lambda_{0}}  \exp\left\{ T (\gamma_{1}+16 \gamma_{0})+(16+\gamma_{0})M_2  \right\}$.
\end{thm}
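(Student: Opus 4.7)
First, I would invoke Theorem \ref{thm:WellposedClassicLinearVFP} from Appendix \ref{sec:WellPosedLinearVFP} to produce a classical $\Cc^{1,\infty}$ solution $g$ of equation \eqref{eq:FPw}. It then remains to establish that $g$ lies in the two analytic balls.

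The core of the argument is to derive a differential inequality for the time-dependent norm $t\mapsto\Vert g(t)\Vert_{\Hh,\lambda(t)}$. By the chain rule and Lemma \ref{lem:normprop}(ii),
$$\frac{d}{dt}\Vert g(t)\Vert_{\Hh,\lambda(t)}=\bigl[\partial_{t}\Vert g(t)\Vert_{\Hh,\lambda}\bigr]_{\lambda=\lambda(t)}-(1+K)\Vert g(t)\Vert_{\tHh,\lambda(t)}.$$
To bound the first term I would differentiate equation \eqref{eq:FPw} $k$ times in $x$ and $l$ times in $u$, apply the Leibniz rule to the products $(\partial_{x}Q-\partial_{u}\ln\omega)\partial_{u}g$, $g\,\partial_{x}Q\,\partial_{u}\ln\omega$ and $gh$, take $L^\infty$-norms, then sum with the weights $\lambda^{k+l}/(k!\,l!)$ and $1/(a!)^2$ defining $\Vert\cdot\Vert_{\Hh,\lambda}$. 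The submultiplicativity in Lemma \ref{lem:normprop}(iii), the identity $\Vert\phi\Vert_{\tHh,\lambda}=\Vert\partial_{x}\phi\Vert_{\Hh,\lambda}+\Vert\partial_{u}\phi\Vert_{\Hh,\lambda}$ from Lemma \ref{lem:normprop}(i), and the bounds $\gamma_0=\Vert\ln\omega\Vert_{\tHh,\lambda_{0}}$, $\gamma_1=\Vert h\Vert_{\Hh,\lambda_{0}}$ from \Hw, should produce an estimate of the form
$$\bigl[\partial_{t}\Vert g(t)\Vert_{\Hh,\lambda}\bigr]_{\lambda=\lambda(t)}\leq\bigl(\gamma_{1}+16\gamma_{0}+(16+\gamma_{0})\Vert Q(t)\Vert_{\tHh,\lambda(t)}\bigr)\Vert g(t)\Vert_{\Hh,\lambda(t)}+\theta\,\Vert g(t)\Vert_{\tHh,\lambda(t)},$$
with $\theta\leq 1+\lambda_{0}+4\gamma_{0}+16M_{1}$. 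Condition (b) guarantees $\theta\leq K$, so once combined with the $-(1+K)\Vert g(t)\Vert_{\tHh,\lambda(t)}$ contribution from $\lambda'(t)$ the net coefficient in front of $\Vert g(t)\Vert_{\tHh,\lambda(t)}$ is non-positive and can be dropped when estimating the $\Hh$-norm.

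Gronwall's lemma, together with $\int_{0}^{T}\Vert Q(t)\Vert_{\tHh,\lambda(t)}dt\leq M_{2}$ inherited from $Q\in\widetilde{\mathcal B}^{M_{2}}_{\lambda_{0},K,T}$, then yields
$$\sup_{t\in[0,T]}\Vert g(t)\Vert_{\Hh,\lambda(t)}\leq\Vert g_{0}\Vert_{\Hh,\lambda_{0}}\exp\bigl\{T(\gamma_{1}+16\gamma_{0})+(16+\gamma_{0})M_{2}\bigr\}=\hat{M}.$$
Keeping the $\Vert g(t)\Vert_{\tHh,\lambda(t)}$ term on the left-hand side and integrating from $0$ to $T$ yields the companion bound $\int_{0}^{T}\Vert g(t)\Vert_{\tHh,\lambda(t)}dt\leq\hat{M}$, placing $g$ in both balls.

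The main obstacle is the transport term $u\partial_{x}g$: since $u$ is unbounded, $\Vert u\partial_{x}g\Vert_{\infty}$ may well be infinite and the submultiplicativity of Lemma \ref{lem:normprop}(iii) cannot be applied directly. The resolution, following \cite{JabNou-11}, exploits the commutator $\partial_{x}^{k}\partial_{u}^{l}(u\partial_{x}g)=u\,\partial_{x}^{k+1}\partial_{u}^{l}g+l\,\partial_{x}^{k+1}\partial_{u}^{l-1}g$: the second piece, after summation with the combinatorial weights defining $\Vert\cdot\Vert_{\Hh,\lambda}$, contributes a controlled portion of $\Vert g(t)\Vert_{\tHh,\lambda(t)}$, while the first is absorbed by working along the free-transport characteristics $(x,u)\mapsto(x+ut,u)$ (equivalently, via a Duhamel representation). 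The single extra derivative produced is precisely what the linear decay $\lambda(t)=\lambda_{0}-(1+K)t$ of the analyticity radius is engineered to absorb, which is why condition (b) is sharp enough to close the estimate.
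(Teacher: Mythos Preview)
Your outline is essentially the paper's argument: derive a differential inequality for $t\mapsto\Vert g(t)\Vert_{\Hh,\lambda(t)}$ via the Leibniz expansion of $\partial_x^k\partial_u^l$ applied to \eqref{eq:FPw}, use condition (b) so that the coefficient of the $\tHh$-norm is $\leq -1$, and conclude both bounds by Gronwall. Two points deserve correction.

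First, the paper does \emph{not} work directly with $\Vert g(t)\Vert_{\Hh,\lambda(t)}$ but with the truncated norms $\Vert g(t)\Vert_{\Hh,\lambda(t);A}$ (sums restricted to $k,l\in\{0,\dots,A\}$), and only lets $A\to\infty$ at the very end. This is not cosmetic: a priori you do not know that $\Vert g(t)\Vert_{\Hh,\lambda(t)}$ is finite for $t>0$, so Gronwall's lemma cannot be applied to it. Your sketch skips this step.

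Second, your description of how the transport term $u\partial_x g$ is handled is slightly off. The paper does not use free-transport characteristics plus Duhamel. Instead, after applying $\partial_x^k\partial_u^l$ to \eqref{eq:FPw}, the principal part $\partial_t + u\partial_x -(\partial_xQ-\partial_u\ln\omega)\partial_u-\tfrac{\sigma^2}{2}\partial_u^2$ is \emph{unchanged} (your commutator identity shows exactly this for the $u\partial_x$ piece), and the maximum principle of Theorem~\ref{thm:WellposedClassicLinearVFP} applied to $\partial_x^k\partial_u^l g$ directly yields $\tfrac{d}{dt}\Vert\partial_x^k\partial_u^l g(t)\Vert_\infty\leq\Vert\text{commutator terms}\Vert_\infty$, with the unbounded $u$ factor absorbed into the operator and never appearing on the right-hand side. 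The resulting estimate \eqref{estim:AnalyticalEstim1d} is then summed (Lemma~\ref{lem:AnalyticalEstim1d}) and combined with Lemmas~\ref{lemma:NormProductThreeFunctions}--\ref{lemma:NormProductTwoFunctions} to reach the differential inequality of Proposition~\ref{prop:TimeEvolutionNormSolution}, which is the one you state.
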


In the proof we need to deal with truncated versions of the analytic norms previously introduced.
For an arbitrary function  $\psi$ of class $\Cc^{\infty}$ and a fixed $A\in\NN$, set
\begin{equation*}
\setA:=\{0,\cdots,A\}, \qquad \Vert \psi\Vert_{\lambda;A}:=\sum_{k,l\in\setA}\frac{\lambda^{k+l}}{k!l!}\left\|\partial^{k}_{x}\partial^{l}_{u}\psi\right\|_{\infty},
\end{equation*}
\begin{equation*}
\Vert \psi\Vert_{\lambda,a;A}:=\frac{d^{a}}{d\lambda^{a}}\Vert \psi\Vert_{\lambda;A} =
\sum_{k,l\in \setA; k+l\geq a}\frac{(k+l)!}{(k+l-a)!}\frac{\lambda^{k+l-a}}{k!l!}\left\|\partial^{k}_{x}\partial^{l}_{u}\psi\right\|_{\infty},
\end{equation*}
\begin{equation*}
\Vert \psi\Vert_{\Hh,\lambda;A }:=\sum_{a\in\setA}\frac{1}{(a!)^{2}}\Vert \psi\Vert_{\lambda,a;A}, \qquad
\Vert \psi\Vert_{\tHh,\lambda;A}:=\sum_{a\in\setA}\frac{a^{2}}{(a!)^{2}}\Vert \psi\Vert_{\lambda,a;A}.
\end{equation*}

Using a maximum principle for kinetic Fokker-Planck equation, stated in Appendix \ref{sec:WellPosedLinearVFP},  we start the proof by  establishing estimates for the time evolution of the norms $\Vert g(t)\Vert_{\Hh,\lambda(t);A}$ and $\Vert g(t)\Vert_{\tHh,\lambda(t); A}$  along a solution  $g$ of the linear equation \eqref{eq:FPw},
in terms of $\Vert Q(t)\Vert_{\Hh,\lambda(t)}$, $\Vert \partial_{u}\ln(\omega)\Vert_{\Hh,\lambda(t)}$,
 $\Vert h\Vert_{\tHh,\lambda(t)}$ and $\Vert Q(t)\Vert_{\tHh,\lambda(t)}$.

\subsubsection{Regularity estimates}
Let $g$ be a smooth solution to \eqref{eq:FPw}. Observe that, for all $(t,x,u)\in[0,T]\times\er^{2}$, we have the identities
\begin{align*}
\partial^{k}_{x}\partial^{l}_{u}(u\partial_{x}g(t,x,u))&=u\partial^{k+1}_{x}\partial^{l}_{u}g(t,x,u) +\1_{\{l\geq 1\}} l\partial^{k+1}_{x}\partial^{l-1}_{u}g(t,x,u),\\
\partial^{k}_{x}\partial^{l}_{u}(\partial_{x}Q(t,x)\partial_{u}g(t,x,u))&=
\sum_{m=0}^{k}C^{m}_{k}(\partial^{m+1}_{x}Q(t,x))(\partial^{k-m}_{x}\partial^{l+1}_{u}g(t,x,u))\\
&= \partial_{x}Q(t,x)\partial^{k}_{x}\partial^{l+1}_{u}g(t,x,u)+\1_{\{k\geq 1\}}
\sum_{m=0}^{k-1}C^{m}_{k}\partial^{k-m+1}_{x}Q(t,x)\partial^{m}_{x}\partial^{l+1}_{u}g(t,x,u),\\
\partial^{k}_{x}\partial^{l}_{u}\left(\partial_{u}\ln(\omega(u)) \partial_{u}g(t,x,u)\right)&=
\sum_{n=0}^{l} C^{n}_{l} \partial^{l-n+1}_{u}\ln(\omega(u)) \partial^{n+1}_{u}\partial^{k}_{x}g(t,x,u)\\
&=\partial_{u}\ln \omega(u)\partial^{l+1}_{u}\partial^{k}_{x}g(t,x,u) +
\1_{\{l \geq 1\}}\sum_{n=0}^{l-1}C^{n}_{l}\,\partial^{l-n +1}_{u}\ln \omega(u)\,
\partial^{n+1}_{u}\partial^{k}_{x}g(t,x,u),\\
\partial^{k}_{x}\partial^{l}_{u}\left(\partial_{x}Q(t,x)\partial_{u}\ln(\omega(u)) g(t,x,u)\right)&=
\sum_{n=0}^{l}\sum_{m=0}^{k}C^{n}_{l}C^{m}_{k}\partial^{k-m +1}_{x}Q(t,x)\,\partial^{l-n+1}_{u}\ln \omega(u)
\,\partial^{m}_{x}\partial^{n}_{u}g(t,x,u), \mbox{ and }\\
\partial^{k}_{x}\partial^{l}_{u}(g(t,x,u)h(u))&=\sum_{n=0}^{l}C^{n}_{l}\partial^{k}_{x}\partial^{n}_{u}g(t,x,u)
\,\partial^{l-n}_{u}h(u).
\end{align*}
By applying the differential operator $\partial^{k}_{x}\partial^{l}_{u}$ to \eqref{eq:FPw}, we deduce,that
\begin{align*}
&\partial_{t}(\partial^{k}_{x}\partial^{l}_{u}g)+u\partial_{x} (\partial^{k}_{x}\partial^{l}_{u} g)-
\left(\partial_{x}Q-\partial_{u}\ln \omega\right) \partial_{u} (\partial^{k}_{x}\partial^{l}_{u} g)
-\frac{\sigma^2}{2}\partial^{2}_{u}(\partial^{k}_{x}\partial^{l}_{u}g)\\
&=-l\partial^{k+1}_{x}\partial^{l-1}_{u}g\1_{\{l\geq 1\}}  +  \1_{\{k\geq 1\}} \sum_{m=0}^{k-1}C^{m}_{k}\partial^{k-m}_{x}\partial_{x}Q\,
\partial^{m}_{x}\partial^{l +1}_{u}g
- \1_{\{l\geq 1\}} \sum_{n=0}^{l-1}C^{n}_{l} \partial^{l-n+1}_{u}\ln\omega\partial^{n+1}_{u}\partial^{k}_{x}g\\
&\quad+\sum_{n=0}^{l}C^{n}_{l} \partial^{k}_{x}\partial^{n}_{u}g\,\partial^{l-n}_{u}h
+\sum_{n=0}^{l}\sum_{m=0}^{k}C^{n}_{l}C^{m}_{k} \partial^{k-m+1}_{x}Q
\partial^{l-n+1}_{u}\ln\omega \partial^{m}_{x}\partial^{n}_{u}g.
\end{align*}

The function $\partial_{x}^{k}\partial_{u}^{l}g$ is thus a classical solution to a linear Fokker-Planck equation. Applying the maximum principle stated in Theorem \ref{thm:WellposedClassicLinearVFP} in the appendix section \ref{sec:WellPosedLinearVFP}, we deduce  that
\begin{equation}
\label{estim:AnalyticalEstim1d}
\begin{aligned}
\frac{d}{dt}\Vert\partial^{k}_{x}\partial^{l}_{u}g(t)\Vert_{\infty}&\leq  \1_{\{l\geq 1\}}   l\Vert\partial^{k+1}_{x}\partial^{l-1}_{u}g(t)\Vert_{\infty} +\1_{\{k\geq 1\}}
\sum_{m=0}^{k-1}C^{m}_{k}\Vert \partial^{m}_{x}\partial^{l+1}_{u}g(t)\Vert_{\infty}\Vert \partial^{k-m+1}_{x}Q(t)\Vert_{\infty}\\
&\quad +\1_{\{l\geq 1\}}\sum_{n=0}^{l-1}C^{n}_{l}\Vert\partial^{n+1}_{u}\partial^{k}_{x}g(t)\Vert_{\infty}\Vert\partial^{l-n+1}_{u}\ln(\omega)\Vert_{\infty}
+\sum_{n=0}^{l}C^{n}_{l}\Vert\partial^{k}_{x}\partial^{n}_{u}g(t)\Vert_{\infty}\Vert\partial^{l-n}_{u}h\Vert_{\infty}\\
&\quad+\sum_{n=0}^{l}\sum_{m=0}^{k}C^{n}_{l}C^{m}_{k}\Vert\partial^{m}_{x}\partial^{n}_{u}g(t)\Vert_{\infty}\Vert\partial^{k-m+1}_{x}Q(t)\Vert_{\infty}
\Vert\partial^{l-n+1}_{u}\ln(\omega)\Vert_{\infty}.
\end{aligned}
\end{equation}
We now obtain estimates for the function $t\mapsto\Vert g(t)\Vert_{\lambda,a ;A}$ for fixed $\lambda>0$ and $A\in\NN$.
\begin{lemma}\label{lem:AnalyticalEstim1d} For each $A\in\NN$,  $a\in\setA=\{0,...,A\}$ and $\lambda>0$, a smooth solution $g$ to \eqref{eq:FPw} satisfies:
\begin{equation*}
\begin{aligned}
\frac{d}{dt}\Vert g(t) \Vert_{\lambda,a;A}&\leq\lambda\Vert g(t)\Vert_{\lambda,a+1; A}+a \Vert g(t)\Vert_{\lambda,a;A}+
\frac{d^{a}}{d\lambda^{a}}\bigg(\Vert g(t)\Vert_{\lambda,1;A}\big\{\Vert Q(t)\Vert_{\lambda,1;A}+\Vert \ln(\omega)\Vert_{\lambda,1;A}
\big\}\bigg)\\
&\quad+\frac{d^{a}}{d\lambda^{a}}\bigg(\Vert g(t)\Vert_{\lambda,0;A}\big\{\Vert h\Vert_{\lambda,0}+\Vert Q(t)\Vert_{\lambda,1;A}
\Vert \ln(\omega)\Vert_{\lambda,1;A}\big\}\bigg).
\end{aligned}
\end{equation*}
\end{lemma}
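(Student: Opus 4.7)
The strategy is to derive the lemma from the pointwise bound \eqref{estim:AnalyticalEstim1d} on $\frac{d}{dt}\Vert\partial_x^k\partial_u^l g(t)\Vert_\infty$ by multiplying both sides by $\frac{(k+l)!}{(k+l-a)!}\frac{\lambda^{k+l-a}}{k!l!}$ and summing over all $(k,l)\in\setA^2$ with $k+l\geq a$. The left-hand side then reproduces $\frac{d}{dt}\Vert g(t)\Vert_{\lambda,a;A}$ verbatim from the definition of the seminorm, and the five terms on the right-hand side of \eqref{estim:AnalyticalEstim1d} must be matched against the four contributions on the right-hand side of the lemma.

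For the first, transport-type contribution $l\,\Vert\partial_x^{k+1}\partial_u^{l-1}g\Vert_\infty\1_{\{l\geq 1\}}$, I would perform the index shift $(k',l')=(k+1,l-1)$. The combinatorial factor rearranges as $l/(k!l!)=k'/(k'!l'!)$, and then the algebraic identity
\[
(k'+l')\,\frac{\lambda^{k'+l'-a}}{(k'+l'-a)!}=\lambda\,\frac{\lambda^{(k'+l')-(a+1)}}{((k'+l')-(a+1))!}+a\,\frac{\lambda^{k'+l'-a}}{(k'+l'-a)!},
\]
combined with $k'\leq k'+l'$, delivers the two terms $\lambda\Vert g\Vert_{\lambda,a+1;A}+a\Vert g\Vert_{\lambda,a;A}$ of the lemma, up to boundary contributions at $k'=A+1$ that are of the correct sign and so preserve the inequality.

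For the four remaining bilinear (or, in the last case, trilinear) contributions in \eqref{estim:AnalyticalEstim1d}, the key combinatorial identity is the Vandermonde--Chu identity for falling factorials: whenever $n=n_{1}+n_{2}$,
\[
\frac{n!}{(n-a)!}=\sum_{a_{1}+a_{2}=a}\binom{a}{a_{1}}\frac{n_{1}!}{(n_{1}-a_{1})!}\frac{n_{2}!}{(n_{2}-a_{2})!}.
\]
After renaming the Leibniz indices so that the derivatives of $g$ and those of $Q$, $\ln\omega$ or $h$ carry independent multi-indices $(k_i,l_i)$ (shifted by one to account for the $\partial_x$ or $\partial_u$ already acting on these factors in \eqref{eq:FPw}), applying this identity with $n_{i}=k_{i}+l_{i}$ factorizes the sum into a convolution over $(a_{1},a_{2})$. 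Recognizing the resulting double sum as the Leibniz expansion $\sum_{b=0}^{a}\binom{a}{b}\Vert\cdot\Vert_{\lambda,*+b;A}\,\Vert\cdot\Vert_{\lambda,*+a-b;A}$ of $\frac{d^{a}}{d\lambda^{a}}$ applied to a product of the seminorms appearing in the lemma produces exactly the four terms $\frac{d^{a}}{d\lambda^{a}}\bigl(\Vert g\Vert_{\lambda,1;A}\Vert Q\Vert_{\lambda,1;A}\bigr)$, $\frac{d^{a}}{d\lambda^{a}}\bigl(\Vert g\Vert_{\lambda,1;A}\Vert\ln\omega\Vert_{\lambda,1;A}\bigr)$, $\frac{d^{a}}{d\lambda^{a}}\bigl(\Vert g\Vert_{\lambda,0;A}\Vert h\Vert_{\lambda,0}\bigr)$, and $\frac{d^{a}}{d\lambda^{a}}\bigl(\Vert g\Vert_{\lambda,0;A}\Vert Q\Vert_{\lambda,1;A}\Vert\ln\omega\Vert_{\lambda,1;A}\bigr)$.

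The main obstacle lies in this matching step: one must verify that the prefactors produced by the successive operations -- namely (i) multiplying the Leibniz coefficients $\binom{k}{m}\binom{l}{n}$ by $\frac{(k+l)!}{(k+l-a)!}\,\lambda^{k+l-a}/(k!l!)$, (ii) shifting the relevant indices by $\pm 1$, and (iii) applying Vandermonde--Chu -- reassemble exactly into the Leibniz expansion of $\frac{d^{a}}{d\lambda^{a}}$ of the desired product of seminorms, and that any minor over- or under-counting caused by the truncation to $\setA$ only strengthens the inequality. Elementary bounds of the form $l_{1}k_{2}\leq(k_{1}+l_{1})(k_{2}+l_{2})$ are used to convert asymmetric factors (arising from expressions like $l\binom{k}{m}$) into symmetric ones absorbed by the norms $\Vert\cdot\Vert_{\lambda,1;A}$, and this is precisely the place where ``$\leq$'' rather than ``$=$'' enters the lemma's statement.
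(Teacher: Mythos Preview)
Your plan is correct and would yield the lemma, but the paper's proof is more economical in exactly the place where you introduce the Vandermonde--Chu identity. The paper's key observation is that the weight you multiply by,
\[
\frac{(k+l)!}{(k+l-a)!}\frac{\lambda^{k+l-a}}{k!l!}\;=\;\frac{d^{a}}{d\lambda^{a}}\frac{\lambda^{k+l}}{k!l!},
\]
so after summing over $k,l\in\setA$ one simply commutes the finite sum with $\frac{d^{a}}{d\lambda^{a}}$. This reduces every term to the case $a=0$, which the paper then bounds by the appropriate product of seminorms through direct index reshuffling (essentially the $a=0$ version of your Leibniz factorization). Since the difference between the product bound and the actual sum is itself a polynomial in $\lambda$ with nonnegative coefficients, applying $\frac{d^{a}}{d\lambda^{a}}$ preserves the inequality and immediately produces the $\frac{d^{a}}{d\lambda^{a}}(\cdots)$ terms of the lemma. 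Your Vandermonde--Chu step is precisely the expanded form of this Leibniz differentiation of $\lambda^{n_1}\cdot\lambda^{n_2}$, so the two arguments are equivalent; the paper's route just avoids the explicit splitting and reassembly. Similarly, for the transport term the paper writes the $a=0$ sum as $\lambda\Vert\partial_x g\Vert_{\lambda,0;A}$ and applies the two-term Leibniz rule for $\frac{d^{a}}{d\lambda^{a}}(\lambda\cdot f(\lambda))$, which is exactly your algebraic identity in compact form. What you gain from your approach is a more explicit bookkeeping of where each inequality enters; what the paper gains is brevity and a clean separation between the combinatorial $a=0$ estimates and the purely mechanical differentiation in $\lambda$.
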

\begin{proof}
Multiplying both sides of the inequality \eqref{estim:AnalyticalEstim1d} by
$\frac{d^{a}}{d\lambda^{a}}\frac{\lambda^{k+l}}{k!l!} =\frac{(k+l)!\lambda^{k+l-a}}{(k+l-a)!k!l!}\1_{\{k+l\geq a\}}$ and  summing over $k,l\in\setA$ with $k+l\geq a$, we get
\begin{equation}
\label{proofst4:AnalyticalEstim}
\begin{aligned}
&\frac{d}{dt}\Vert g(t) \Vert_{\lambda,a;A}=
\sum_{k,l\in\setA: k+l\geq a}\frac{d^{a}}{d\lambda^{a}}\frac{\lambda^{k+l}}{k!l!}\frac{d}{dt}\Vert\partial^{k}_{x}\partial^{l}_{u}g(t)\Vert_{\infty}\\
&\leq  \sum_{k,l\in\setA: k+l\geq a, l\geq 1}\frac{d^{a}}{d\lambda^{a}}\frac{l\lambda^{k+l}}{k!l!}\Vert\partial^{k+1}_{x}\partial^{l-1}_{u}g(t)\Vert_{\infty}+
\sum_{k,l\in\setA:k+l\geq a,k\geq 1}\frac{d^{a}}{d\lambda^{a}}\frac{\lambda^{k+l}}{k!l!}\sum_{m=0}^{k-1}C^{m}_{k}\Vert \partial^{m}_{x}\partial^{l+1}_{u}g(t)\Vert_{\infty}
\Vert \partial^{k-m+1}_{x}Q(t)\Vert_{\infty}\\
&\quad +\sum_{k,l\in\setA:k+l\geq a,l\geq 1}\frac{d^{a}}{d\lambda^{a}}\frac{\lambda^{k+l}}{k!l!}\sum_{n=0}^{l-1}C^{n}_{l}
\Vert\partial^{n+1}_{u}\partial^{k}_{x}g(t)\Vert_{\infty}\Vert\partial^{l-n+1}_{u}\ln(\omega)\Vert_{\infty}\\
&\quad+\sum_{k,l\in\setA: k+l\geq a}\frac{d^{a}}{d\lambda^{a}}\frac{\lambda^{k+l}}{k!l!}\sum_{n=0}^{l}C^{n}_{l}\Vert\partial^{k}_{x}\partial^{n}_{u}g(t)\Vert_{\infty}\Vert\partial^{l-n}_{u}h\Vert_{\infty}\\
&\quad +\sum_{k,l\in\setA: k+l\geq a}\frac{d^{a}}{d\lambda^{a}}\frac{\lambda^{k+l}}{k!l!}\sum_{n=0}^{l}\sum_{m=0}^{k}C^{n}_{l}C^{m}_{k}\Vert\partial^{m}_{x}\partial^{n}_{u}g(t)\Vert_{\infty}
\Vert\partial^{k-m+1}_{x}Q(t)\Vert_{\infty}\Vert\partial^{l-n+1}_{u}\ln(\omega)\Vert_{\infty}.
\end{aligned}
\end{equation}
To bound from above the first sum on the r.h.s. of \eqref{proofst4:AnalyticalEstim} we  observe that
\begin{equation*}
\sum_{k,l\in\setA; k+l\geq a,l\geq 1}\frac{d^{a}}{d\lambda^{a}}\frac{l\lambda^{k+l}}{k!l!}\Vert\partial^{k+1}_{x}\partial^{l-1}_{u}g(t)\Vert_{\infty}=
\frac{d^{a}}{d\lambda^{a}}\sum_{k,l\in\setA;l\geq 1}\frac{l\lambda^{k+l}}{k!l!}\Vert\partial^{k+1}_{x}\partial^{l-1}_{u}g(t)\Vert_{\infty}
\end{equation*}
with
\begin{align*}
\sum_{k,l\in\setA:l\geq 1}\frac{l\lambda^{k+l}}{k!l!}\Vert\partial^{k+1}_{x}\partial^{l-1}_{u}g(t)\Vert_{\infty}&=
\sum_{k,l\in\setA:l\geq 1}\frac{\lambda^{k+l}}{k!(l-1)!}\Vert\partial^{k+1}_{x}\partial^{l-1}_{u}g(t)\Vert_{\infty}\\
&=\sum_{k,l\in\setA }\frac{\lambda^{k+l+1}}{k!l!}\Vert\partial^{k+1}_{x}\partial^{l}_{u}g(t)\Vert_{\infty}
=\lambda \Vert \partial_{x}g(t)\Vert_{\lambda,0;A}.
\end{align*}
Since
\begin{align*}
\frac{d^{a}}{d\lambda^{a}}\left(\lambda \Vert \partial_{x}g(t)\Vert_{\lambda,0;A}\right)=\sum_{r=0}^{a}C^{r}_{a}
\left(\frac{d^{r}}{d\lambda^{r}}\lambda\right)\left(\frac{d^{a-r}}{d\lambda^{a-r}}\Vert \partial_{x}g(t)\Vert_{\lambda,0;A}\right)
&=C^{a}_{a}\lambda \Vert \partial_{x}g(t)\Vert_{\lambda,a;A}+ C^{a-1}_{a}\Vert \partial_{x}g(t)\Vert_{\lambda,a-1;A}\\
&=\lambda \Vert \partial_{x}g(t)\Vert_{\lambda,a;A}+a\Vert \partial_{x}g(t)\Vert_{\lambda,a-1;A},
\end{align*}
it follows that
\begin{equation*}
\sum_{k,l\in\setA: k+l\geq a, l\geq 1}\frac{d^{a}}{d\lambda^{a}}\frac{l\lambda^{k+l}}{k!l!}\Vert\partial^{k+1}_{x}\partial^{l-1}_{u}g(t)\Vert_{\infty}
= \lambda\Vert g(t)\Vert_{\lambda,a+1;A}+a\Vert g(t)\Vert_{\lambda,a;A}.
\end{equation*}
For the second sum, we notice that
\begin{align*}
\sum_{k,l\in\setA: k\geq 1}\frac{\lambda^{k+l}}{k!l!}\sum_{m=0}^{k-1}C^{m}_{k}\Vert \partial^{k-m+1}_{x}Q(t)\Vert_{\infty}
\Vert \partial^{m}_{x}\partial^{l+1}_{u}g(t)\Vert_{\infty}&=\sum_{m,l\in\setA}\Vert \partial^{m}_{x}\partial^{l+1}_{u}g(t)\Vert_{\infty}
\left(\sum_{k=m+1}^{A}\frac{C^{m}_{k}\lambda^{k+l}}{k!l!}\Vert \partial^{k-m+1}_{x}Q(t)\Vert_{\infty}\right)\\
&=\sum_{m,l\in\setA}
\frac{\lambda^{m+l}}{m!l!}\Vert \partial^{m}_{x}\partial^{l+1}_{u}g(t)\Vert_{\infty}
\left(\sum_{k=m+1}^{A}\frac{\lambda^{k-m}}{(k-m)!}\Vert \partial^{k-m+1}_{x}Q(t)\Vert_{\infty}\right)\\
&=\sum_{m,l\in\setA}
\frac{\lambda^{m+l}}{m!l!}\Vert \partial^{m}_{x}\partial^{l+1}_{u}g(t)\Vert_{\infty}
\left(\sum_{k=1}^{A-m}\frac{\lambda^{k}}{k!}\Vert \partial^{k+1}_{x}Q(t)\Vert_{\infty}\right)\\
&=\sum_{m,l\in\setA}
\frac{\lambda^{m+l}}{m!l!}\Vert \partial^{m}_{x}\partial^{l+1}_{u}g(t)\Vert_{\infty} \Vert Q(t)\Vert_{\lambda,1;A-m}.
\end{align*}
Taking the $a$-th derivative with respect to $\lambda$, and noting that  $\sum_{m,l\in\setA}
\frac{\lambda^{m+l}}{m!l!}\Vert\partial^{m}_{x}\partial^{l+1}_{u}g(t)\Vert_{\infty}=\|\partial_u g(t) \|_{\lambda,0;A} = \|g(t)\|_{\lambda,1;A}-\|\partial_x g(t)\|_{\lambda,0;A}$ (by similar computations as   proof of Lemma \ref{lem:normprop}-(i)), we deduce that
\begin{equation*}
\frac{d^{a}}{d\lambda^{a}} \left(\sum_{k,l\in\setA:k\geq 1}\frac{\lambda^{k+l}}{k!l!}\sum_{m=0}^{k-1}C^{m}_{k}\Vert \partial^{k-m+1}_{x}Q(t)\Vert_{\infty}
\Vert \partial^{m}_{x}\partial^{l+1}_{u}g(t)\Vert_{\infty}\right)\leq \frac{d^{a}}{d\lambda^{a}}\bigg( \Vert Q(t)\Vert_{\lambda,1;A}\Vert g(t)\Vert_{\lambda,1;A}\bigg)
\end{equation*}
using also the fact that  $\frac{d^{b}}{d\lambda^{b}}\| \partial_x g(t) \|_{\lambda,0;A}\geq 0$ and $\Vert Q(t)\Vert_{\lambda,a-b+ 1;A-m}\leq \Vert Q(t)\Vert_{\lambda,a-b+1;A}$  for all $b\in \{0,\dots,a\}$.
In the same way, we obtain the  estimate
\begin{equation*}
\frac{d^{a}}{d\lambda^{a}}\left(\sum_{k,l\in\setA}\frac{\lambda^{k+l}}{k!l!}\sum_{n=0}^{l-1}C^{n}_{l}
\Vert\partial^{l-n+1}_{u}\ln(\omega)\Vert_{\infty}\Vert\partial^{n+1}_{u}\partial^{k}_{x}g(t)\Vert_{\infty}\right)\leq \frac{d^{a}}{d\lambda^{a}}\bigg(  \Vert \ln(\omega)\Vert_{\lambda,1;A}\Vert g(t)\Vert_{\lambda,1;A}  \bigg).
\end{equation*}
For the fourth sum, one can directly check that
\begin{align*}
\sum_{k,l\in\setA}\frac{\lambda^{k+l}}{k!l!}\sum_{n=0}^{l}C^{n}_{l}
\Vert\partial^{k}_{x}\partial^{n}_{u}g(t)\Vert_{\infty}\Vert\partial^{l-n}_{u}h\Vert_{\infty}
&= \sum_{k,n\in\setA}\Vert\partial^{k}_{x}\partial^{n}_{u}g(t)\Vert_{\infty}\sum_{l=n}^{A}\frac{C^{n}_{l}\lambda^{k+l}}{k!l!}
\Vert\partial^{l-n}_{u}h\Vert_{\infty}\\
&= \sum_{k,n\in\setA}\Vert\partial^{k}_{x}\partial^{n}_{u}g(t)\Vert_{\infty}\sum_{l=0}^{A}\frac{C^{n}_{l+n}\lambda^{k+l+n}}{k!(l+n)!}
\Vert\partial^{l}_{u}h\Vert_{\infty}\\
&= \sum_{k,n\in\setA}\frac{\lambda^{k+n}}{k!n!}\Vert\partial^{k}_{x}\partial^{n}_{u}g(t)\Vert_{\infty}\sum_{l=0}^{A}\frac{\lambda^{l}}{l!}
\Vert\partial^{l-n}_{u}h\Vert_{\infty}\\
&=\Vert h\Vert_{\lambda,0; A}\Vert g(t)\Vert_{\lambda,0;A},
\end{align*}
so that
\begin{equation*}
\frac{d^{a}}{d\lambda^{a}}\left(\sum_{k,l\in\setA}\frac{\lambda^{k+l}}{k!l!}\sum_{n=0}^{l}C^{n}_{l}
\Vert\partial^{k}_{x}\partial^{n}_{u}g(t)\Vert_{\infty}\Vert\partial^{l-n}_{u}h\Vert_{\infty}\right)\leq \frac{d^{a}}{d\lambda^{a}}\left(\Vert h\Vert_{\lambda,0; A}\Vert g(t)\Vert_{\lambda,0;A}\right).
\end{equation*}
Finally, since
\begin{align*}
&\sum_{k,l\in\setA}\frac{\lambda^{k+l}}{k!l!}\sum_{n=0}^{l}\sum_{m=0}^{k}C^{n}_{l}C^{m}_{k}\Vert\partial^{k-m+1}_{x}Q(t)\Vert_{\infty}\Vert\partial^{l-n+1}_{u}\ln(\omega)\Vert_{\infty}
\Vert\partial^{m}_{x}\partial^{n}_{u}g(t)\Vert_{\infty}\\
&=\sum_{m,n\in\setA}\frac{\lambda^{m+n}}{m!n!}\Vert\partial^{m}_{x}\partial^{n}_{u}g(t)\Vert_{\infty}
\left(\sum_{k=m}^{A}\frac{\lambda^{k-m}}{(k-m)!}\Vert\partial^{k-m+1}_{x}Q(t)\Vert_{\infty}\right)
\left(\sum_{l=n}^{ A}\frac{\lambda^{l-n}}{(l-n)!}\Vert\partial^{l-n+1}_{u}\ln(\omega)\Vert_{\infty}\right)\\
&\leq\Vert g(t)\Vert_{\lambda,0;A}\Vert Q(t)\Vert_{\lambda,1; A}\Vert \ln(\omega)\Vert_{\lambda,1; A},
\end{align*}
the last sum is bounded from above by
\begin{equation*}
\frac{d^{a}}{d\lambda^{a}}\bigg(\Vert g(t)\Vert_{\lambda,1; A}\left(\Vert Q(t)\Vert_{\lambda,1;A}+\Vert \ln(\omega)\Vert_{\lambda,1;A}
\right)\bigg).
\end{equation*}
Coming back to \eqref{proofst4:AnalyticalEstim}, the above estimates prove Lemma \ref{lem:AnalyticalEstim1d}.
\end{proof}
\subsubsection{Evolution and control of the time-inhomogeneous analytic norms}
Next Lemmas \ref{lemma:NormProductThreeFunctions} and \ref{lemma:NormProductTwoFunctions} are preliminaries for the bounds of the time derivative of $ \Vert g(t)\Vert_{\Hh,\lambda(t);A}$ in Proposition \ref{prop:TimeEvolutionNormSolution} below. Their proof is given in Appendix \ref{sec:LemmasProof}.
\begin{lemma}\label{lemma:NormProductThreeFunctions}
Let $f,v,w$ be functions of class  $\Cc^{\infty}$ with bounded derivatives at all order. Then, for all $\lambda>0$ and $A\in\NN$,
\begin{equation}\label{truncated:NormProductThreeFunctions}
\sum_{a\in \setA} \frac{1}{(a!)^{2}}\frac{d^{a}}{d\lambda^{a}}\left(\Vert f\Vert_{\lambda,0;A}\Vert v\Vert_{\lambda,1; A}\Vert w\Vert_{\lambda,1;A}\right)
\leq \Vert f\Vert_{\Hh,\lambda; A}\Vert v\Vert_{\tHh,\lambda; A}\Vert w\Vert_{\tHh,\lambda; A}.
\end{equation}
Suppose  moreover that  for some $\bar{\lambda}>0$,  one has $f\in \Hh(\bar{\lambda})$ and $v,w\in \tHh(\bar{\lambda})$. Then,  for all $\lambda\in [0,\bar{\lambda})$,
\begin{equation*}
\sum_{a\in\NN}\frac{1}{(a!)^{2}}\frac{d^{a}}{d\lambda^{a}}\left(\Vert f\Vert_{\lambda,0}\Vert v\Vert_{\lambda,1}\Vert w\Vert_{\lambda,1}\right)
\leq \Vert f\Vert_{\Hh,\lambda}\Vert v\Vert_{\tHh,\lambda}\Vert w\Vert_{\tHh,\lambda}.
\end{equation*}
\end{lemma}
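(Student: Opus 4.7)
\textbf{Proof plan for Lemma \ref{lemma:NormProductThreeFunctions}.} The plan is to combine the multinomial Leibniz rule for the $a$-th $\lambda$-derivative of the triple product with the elementary combinatorial bound $b!\,c!\,d!\le(b+c+d)!$ (i.e. $\binom{b+c+d}{b,c,d}\ge 1$), and then to reindex the resulting triple sum so as to recognise it as the product of the three (semi)norms on the right-hand side.

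\emph{Step 1 (Leibniz expansion).} For smooth functions $\phi_1,\phi_2,\phi_3$ of $\lambda$, the multinomial Leibniz formula reads
\begin{equation*}
\frac{d^a}{d\lambda^a}(\phi_1\phi_2\phi_3)=\sum_{\substack{b,c,d\ge 0\\ b+c+d=a}}\frac{a!}{b!\,c!\,d!}\,\phi_1^{(b)}\phi_2^{(c)}\phi_3^{(d)}.
\end{equation*}
Since the definition gives $\frac{d^{j}}{d\lambda^{j}}\|\psi\|_{\lambda,k;A}=\|\psi\|_{\lambda,k+j;A}$ for any $j,k\ge 0$ (and similarly without truncation), applying this with $\phi_1=\|f\|_{\lambda,0;A}$, $\phi_2=\|v\|_{\lambda,1;A}$, $\phi_3=\|w\|_{\lambda,1;A}$ yields
\begin{equation*}
\frac{d^{a}}{d\lambda^{a}}\bigl(\|f\|_{\lambda,0;A}\|v\|_{\lambda,1;A}\|w\|_{\lambda,1;A}\bigr)=\sum_{b+c+d=a}\frac{a!}{b!\,c!\,d!}\,\|f\|_{\lambda,b;A}\,\|v\|_{\lambda,c+1;A}\,\|w\|_{\lambda,d+1;A}.
\end{equation*}

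\emph{Step 2 (exchange of summations + combinatorial inequality).} Dividing by $(a!)^2$, summing over $a$ and writing $a=b+c+d$, the left-hand side of the lemma becomes
\begin{equation*}
\sum_{b,c,d\ge 0}\frac{\|f\|_{\lambda,b;A}\,\|v\|_{\lambda,c+1;A}\,\|w\|_{\lambda,d+1;A}}{(b+c+d)!\,b!\,c!\,d!},
\end{equation*}
with the index constraint coming from the range of $a$. The combinatorial bound $b!\,c!\,d!\le(b+c+d)!$ then gives
\begin{equation*}
\frac{1}{(b+c+d)!\,b!\,c!\,d!}\le\frac{1}{(b!)^2(c!)^2(d!)^2},
\end{equation*}
which is precisely the coefficient appearing after multiplying out the three (semi)norms on the right-hand side.

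\emph{Step 3 (recognise the product of norms).} Using $(c+1)^2/((c+1)!)^{2}=1/(c!)^2$, the change of index $a=c+1$ in the definition of $\widetilde{\mathcal H}$ yields
\begin{equation*}
\|v\|_{\widetilde{\mathcal H},\lambda;A}=\sum_{a=1}^{A}\frac{a^2}{(a!)^2}\|v\|_{\lambda,a;A}=\sum_{c=0}^{A-1}\frac{\|v\|_{\lambda,c+1;A}}{(c!)^2},
\end{equation*}
and the analogous identity holds for $w$. Multiplying out $\|f\|_{\mathcal H,\lambda;A}\,\|v\|_{\widetilde{\mathcal H},\lambda;A}\,\|w\|_{\widetilde{\mathcal H},\lambda;A}$ then produces the same triple series as in Step 2, but with $(b,c,d)$ running independently; since every summand is non-negative, this dominates the constrained sum and yields the truncated inequality. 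The non-truncated statement follows by the identical argument with $\mathbb A$ replaced by $\mathbb N$, absolute convergence being guaranteed by the hypothesis $f\in\mathcal H(\bar\lambda)$, $v,w\in\widetilde{\mathcal H}(\bar\lambda)$ and $\lambda<\bar\lambda$.

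\emph{Main obstacle.} The only non-routine issue is the bookkeeping of index ranges: the constraint $b+c+d\in\mathbb A$ coming from Step 2 is more restrictive than the independent ranges produced by expanding the three norms in Step 3. Since each summand is non-negative, the constrained sum is bounded by the unconstrained one, so the inequality is preserved; this also shows the non-truncated bound may alternatively be obtained from the truncated one by monotone convergence as $A\to\infty$.
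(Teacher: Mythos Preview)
Your proof is correct and follows essentially the same strategy as the paper: expand the $a$-th $\lambda$-derivative via Leibniz, reindex the resulting triple sum, and conclude with the trinomial-coefficient bound $b!\,c!\,d!\le (b+c+d)!$. The only cosmetic difference is that you apply the multinomial Leibniz rule in one step, whereas the paper iterates the binomial rule (first peeling off $\|f\|_{\lambda,0}$, then splitting $\|v\|_{\lambda,1}\|w\|_{\lambda,1}$), arriving at the identical coefficient $\frac{a!q!r!}{(a+q+r)!}\le 1$.
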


\begin{lemma}\label{lemma:NormProductTwoFunctions}
Let $f,w$ be functions of class  $\Cc^{\infty}$ with bounded derivatives at all order.
 \begin{itemize}
\item[(i)]For all $\lambda>0$ and $A\in \NN$,  one has
\begin{equation}\label{truncated:NormProductTwoFunctionsA}
\sum_{a\in \setA}\frac{1}{(a!)^{2}}\frac{d^{a}}{d\lambda^{a}}\left(\Vert f\Vert_{\lambda,1;A}\Vert v \Vert_{\lambda,1;A}\right)\leq
 16 ( \Vert f\Vert_{\Hh,\lambda;A}\Vert v \Vert_{\tHh,\lambda;A} + \Vert f\Vert_{\tHh,\lambda;A}\Vert v  \Vert_{\Hh,\lambda;A}).
 \end{equation}
 Moreover if for some  $\bar{\lambda}>0$  we have  $f,v\in \Hh(\bar{\lambda})\cap\tHh(\bar{\lambda})$ then,  for all $\lambda\in [0,\bar{\lambda})$ \begin{equation*}
\sum_{a\in\NN}\frac{1}{(a!)^{2}}\frac{d^{a}}{d\lambda^{a}}\left(\Vert f\Vert_{\lambda,1}\Vert v \Vert_{\lambda,1}\right)\leq
 16 ( \Vert f\Vert_{\Hh,\lambda}\Vert v \Vert_{\tHh,\lambda} + \Vert f\Vert_{\tHh,\lambda}\Vert v  \Vert_{\Hh,\lambda}).
\end{equation*}
\item[(ii)]    For all $\lambda>0$ and $A\in \NN$,  one has
\begin{equation}\label{truncated:NormProductTwoFunctionsB}
\sum_{a\in\setA}\frac{1}{(a!)^{2}}\frac{d^{a}}{d\lambda^{a}}\left(\Vert f\Vert_{\lambda,1;A}\Vert v \Vert_{\lambda,1;A}\right)\leq
 4 \Vert v \Vert_{\tHh,\lambda;A} (4 \Vert f\Vert_{\Hh,\lambda;A} + \Vert f\Vert_{\tHh,\lambda;A}).
\end{equation}
Moreover for some  $\bar{\lambda}>0$, $f\in \Hh(\bar{\lambda})\cap\tHh(\bar{\lambda})$ and  $v\in \tHh(\bar{\lambda})$, for all $\lambda\in [0,\bar{\lambda})$,
\begin{equation*}
\sum_{a\in\NN}\frac{1}{(a!)^{2}}\frac{d^{a}}{d\lambda^{a}}\left(\Vert f\Vert_{\lambda,1}\Vert v \Vert_{\lambda,1}\right)\leq
 4 \Vert v \Vert_{\tHh,\lambda} (4 \Vert f\Vert_{\Hh,\lambda} + \Vert f\Vert_{\tHh,\lambda}).
 \end{equation*}
\end{itemize}
\end{lemma}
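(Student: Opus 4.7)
The strategy is to reduce both parts of Lemma~\ref{lemma:NormProductTwoFunctions} to an elementary combinatorial inequality via Leibniz's rule and a change of summation variables, and then to pass from the truncated versions to the infinite-order ones by monotone convergence under the stated analyticity hypotheses.

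\textbf{Step 1: Leibniz expansion and reindexing.} Since $\|\cdot\|_{\lambda,a;A}=\frac{d^{a}}{d\lambda^{a}}\|\cdot\|_{\lambda,0;A}$, Leibniz's rule gives
\begin{equation*}
\frac{d^{a}}{d\lambda^{a}}\bigl(\|f\|_{\lambda,1;A}\|v\|_{\lambda,1;A}\bigr)
=\sum_{b=0}^{a}\binom{a}{b}\,\|f\|_{\lambda,1+b;A}\,\|v\|_{\lambda,1+a-b;A}.
\end{equation*}
Dividing by $(a!)^{2}$, summing over $a\in\setA$, and making the change of variables $c=b+1$, $d=a-b+1$ (so that $a=c+d-2$) yields
\begin{equation*}
S:=\sum_{a\in\setA}\frac{1}{(a!)^{2}}\frac{d^{a}}{d\lambda^{a}}\bigl(\|f\|_{\lambda,1;A}\|v\|_{\lambda,1;A}\bigr)
=\sum_{\substack{c,d\geq 1\\ c+d-2\leq A}}\frac{\|f\|_{\lambda,c;A}\,\|v\|_{\lambda,d;A}}{(c-1)!(d-1)!(c+d-2)!}.
\end{equation*}

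\textbf{Step 2: The core combinatorial inequality.} To match this with the coefficients appearing in the definitions of $\|\cdot\|_{\Hh,\lambda;A}$ and $\|\cdot\|_{\tHh,\lambda;A}$ (where the weights are $\frac{1}{(c!)^2(d!)^2}$ and $\frac{c^2}{(c!)^2(d!)^2}$ or $\frac{d^2}{(c!)^2(d!)^2}$), I rewrite
\begin{equation*}
\frac{(c!)^{2}(d!)^{2}}{(c-1)!(d-1)!(c+d-2)!}=\frac{c^{2}d^{2}}{\binom{c+d-2}{c-1}}.
\end{equation*}
The claims of (i) and (ii) reduce respectively to the uniform estimates
\begin{equation*}
\frac{c^{2}d^{2}}{\binom{c+d-2}{c-1}}\leq 16(c^{2}+d^{2})
\qquad\text{and}\qquad
\frac{c^{2}d^{2}}{\binom{c+d-2}{c-1}}\leq 4d^{2}(4+c^{2})
\qquad\forall c,d\geq 1.
\end{equation*}
The second is trivial from $\binom{c+d-2}{c-1}\geq 1$. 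The first, by symmetry assuming $c\leq d$, reduces (via $c^{2}+d^{2}\geq d^{2}$) to $\binom{c+d-2}{c-1}\geq c^{2}/16$, which follows from the elementary lower bound $\binom{c+d-2}{c-1}\geq d^{c-1}/(c-1)!$ combined with $d\geq c$, after checking the first few values of $c$ directly.

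\textbf{Step 3: Truncated bounds and passage to the limit.} Inserting these inequalities into the expression for $S$ produces sums of the form $\sum_{c,d}\frac{c^{2}+d^{2}}{(c!)^{2}(d!)^{2}}\|f\|_{\lambda,c;A}\|v\|_{\lambda,d;A}$ (for (i)) and $\sum_{c,d}\frac{d^{2}(4+c^{2})}{(c!)^{2}(d!)^{2}}\|f\|_{\lambda,c;A}\|v\|_{\lambda,d;A}$ (for (ii)). These factor into the desired products of $\Hh$- and $\tHh$-norms, giving~\eqref{truncated:NormProductTwoFunctionsA} and~\eqref{truncated:NormProductTwoFunctionsB}. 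For the untruncated versions, the hypotheses ($f,v\in\Hh(\bar\lambda)\cap\tHh(\bar\lambda)$, resp.\ $f\in\Hh(\bar\lambda)\cap\tHh(\bar\lambda)$ and $v\in\tHh(\bar\lambda)$) ensure that $\|f\|_{\Hh,\lambda}, \|f\|_{\tHh,\lambda}, \|v\|_{\Hh,\lambda}, \|v\|_{\tHh,\lambda}$ are finite for $\lambda<\bar\lambda$; monotone convergence as $A\to\infty$ then lets us pass to the limit on both sides (all terms are nonnegative).

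\textbf{Main obstacle.} The only genuinely non-trivial point is establishing the sharper combinatorial inequality in part (i) uniformly for all $c,d\geq 1$; small-index pairs contribute the worst ratios and fix the value of the absolute constant, so the stated value $16$ (rather than something smaller) is simply the slack needed to absorb these boundary cases cleanly and to handle the harmless mismatch at the truncation level $c$ or $d = A+1$.
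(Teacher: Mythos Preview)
Your proof is correct and follows essentially the same route as the paper: Leibniz expansion, reindexing to a double sum in $(c,d)=(r+1,a+1)$, identification of the combinatorial weight $\frac{c^{2}d^{2}}{\binom{c+d-2}{c-1}}=\frac{(r+1)(a+1)(r+1)!(a+1)!}{(a+r)!}$, and a pointwise bound on this weight that allows the sum to factor into $\Hh$- and $\tHh$-norms. The only cosmetic difference is that the paper proves the coefficient bound by an explicit case split (showing the weight is $\le 24$ when $a,r\ge 2$ and handling $a\le 1$ or $r\le 1$ separately), whereas you package everything into the single inequality $c^{2}d^{2}/\binom{c+d-2}{c-1}\le 16(c^{2}+d^{2})$; the paper also glosses over the same harmless truncation mismatch at indices $A+1$ that you flag.
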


\begin{prop}\label{prop:TimeEvolutionNormSolution}  For each $A\in \NN$, the ${\cal C}^{1,\infty}$ function  $g$ solution to \eqref{eq:FPw} satisfies
\begin{align*}
\frac{d}{dt}\Vert g(t)\Vert_{\Hh,\lambda(t);A}&\leq \left(\lambda(t)+1+\lambda'(t)+4\gamma_{0}+16\Vert Q(t)\Vert_{\Hh,\lambda(t)}\right)
\Vert g(t)\Vert_{\tHh,\lambda(t);A} \\
&\quad+\left(\gamma_{1}+16\gamma_{0} +\left(\gamma_{0}+16\right)\Vert Q(t)\Vert_{\tHh,\lambda(t)}
\right)\Vert g(t)\Vert_{\Hh,\lambda(t);A},
\end{align*}
where $\gamma_{0}:=\Vert\ln(\omega)\Vert_{\tHh,\lambda_{0}}$ and
$\gamma_{1}:=\Vert h\Vert_{\Hh,\lambda_{0}}$.
\end{prop}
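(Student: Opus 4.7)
The plan is to combine the pointwise regularity estimate of Lemma~\ref{lem:AnalyticalEstim1d} with the chain rule in time, and then dualize each resulting term against the weights $\frac{1}{(a!)^{2}}$ defining $\Vert\cdot\Vert_{\Hh,\lambda;A}$, using Lemmas~\ref{lemma:NormProductThreeFunctions} and~\ref{lemma:NormProductTwoFunctions} to convert the bilinear and trilinear products of truncated seminorms into combinations of $\Hh$ and $\tHh$ norms. Since $\partial_{\lambda}\Vert g(t)\Vert_{\lambda,a;A}=\Vert g(t)\Vert_{\lambda,a+1;A}$ directly from the definition, the chain rule rewrites the time derivative as
\begin{equation*}
\frac{d}{dt}\Vert g(t)\Vert_{\Hh,\lambda(t);A}=\sum_{a\in\setA}\frac{1}{(a!)^{2}}\left(\frac{\partial}{\partial t}\Vert g(t)\Vert_{\lambda,a;A}\bigg|_{\lambda=\lambda(t)}+\lambda'(t)\,\Vert g(t)\Vert_{\lambda(t),a+1;A}\right),
\end{equation*}
and it remains to insert the five contributions from Lemma~\ref{lem:AnalyticalEstim1d} and sum.

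The two linear pieces $\lambda\Vert g\Vert_{\lambda,a+1;A}$ (from Lemma~\ref{lem:AnalyticalEstim1d}) and $\lambda'(t)\Vert g\Vert_{\lambda,a+1;A}$ (from the chain rule) combine, after reindexing $b=a+1$, into $(\lambda(t)+\lambda'(t))\sum_{b\geq 1}\frac{b^{2}}{(b!)^{2}}\Vert g(t)\Vert_{\lambda(t),b;A}$ plus a surplus boundary term at $b=A+1$; under hypothesis~(b) of Theorem~\ref{thm:Closure} one has $\lambda(t)+\lambda'(t)\leq\lambda_{0}-(1+K)\leq 0$, so this contribution is bounded above by $(\lambda(t)+\lambda'(t))\Vert g(t)\Vert_{\tHh,\lambda(t);A}$. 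The term $a\Vert g\Vert_{\lambda,a;A}$ is bounded using $a\leq a^{2}$ and contributes the additional $+\Vert g\Vert_{\tHh,\lambda(t);A}$ corresponding to the ``$+1$'' in the stated coefficient. For the remaining products I would apply Lemma~\ref{lemma:NormProductTwoFunctions}(i) to the product $\Vert Q(t)\Vert_{\lambda,1;A}\Vert g(t)\Vert_{\lambda,1;A}$ to obtain $16\bigl(\Vert g\Vert_{\Hh,\lambda(t);A}\Vert Q\Vert_{\tHh,\lambda(t)}+\Vert g\Vert_{\tHh,\lambda(t);A}\Vert Q\Vert_{\Hh,\lambda(t)}\bigr)$, and apply Lemma~\ref{lemma:NormProductTwoFunctions}(ii) with $v=\ln(\omega)$ to $\Vert\ln(\omega)\Vert_{\lambda,1;A}\Vert g(t)\Vert_{\lambda,1;A}$, which yields $4\Vert\ln(\omega)\Vert_{\tHh,\lambda(t)}\bigl(4\Vert g\Vert_{\Hh,\lambda(t);A}+\Vert g\Vert_{\tHh,\lambda(t);A}\bigr)$. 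The trilinear term is controlled by Lemma~\ref{lemma:NormProductThreeFunctions}, giving $\Vert g\Vert_{\Hh,\lambda(t);A}\Vert Q\Vert_{\tHh,\lambda(t)}\Vert\ln(\omega)\Vert_{\tHh,\lambda(t)}$, and the $\Vert g\Vert_{\lambda,0;A}\Vert h\Vert_{\lambda,0}$ contribution is handled by a direct Leibniz-rule-in-$\lambda$ argument combined with the inequality $(j+k)!\geq j!k!$, producing the product bound $\Vert g\Vert_{\Hh,\lambda(t);A}\Vert h\Vert_{\Hh,\lambda(t)}$.

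Finally, monotonicity in $\lambda\in[0,\lambda_{0}]$ allows replacement of $\Vert\ln(\omega)\Vert_{\tHh,\lambda(t)}$ by $\gamma_{0}$ and $\Vert h\Vert_{\Hh,\lambda(t)}$ by $\gamma_{1}$. Collecting coefficients, the prefactor of $\Vert g(t)\Vert_{\tHh,\lambda(t);A}$ equals $\lambda(t)+1+\lambda'(t)+4\gamma_{0}+16\Vert Q(t)\Vert_{\Hh,\lambda(t)}$, while that of $\Vert g(t)\Vert_{\Hh,\lambda(t);A}$ equals $\gamma_{1}+16\gamma_{0}+(\gamma_{0}+16)\Vert Q(t)\Vert_{\tHh,\lambda(t)}$, which is exactly the claimed inequality.

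The main technical subtlety is controlling the surplus $b=A+1$ boundary term generated by the reindexing $a\mapsto a+1$ in the truncated sum: this is precisely where the sign constraint $\lambda(t)+\lambda'(t)\leq 0$ (built into the choice of $K$ in hypothesis~(b)) is used, allowing the extra non-negative contribution to be absorbed without spoiling the upper bound. Beyond this, matching the constants in the statement requires the careful selection of part~(i) of Lemma~\ref{lemma:NormProductTwoFunctions} for the $Q$--$g$ product and part~(ii) for the $\ln(\omega)$--$g$ product, which is what produces the asymmetric splitting $16\Vert Q\Vert_{\Hh}+4\gamma_{0}$ on the $\Vert g\Vert_{\tHh}$ side versus $16\gamma_{0}+(\gamma_{0}+16)\Vert Q\Vert_{\tHh}$ on the $\Vert g\Vert_{\Hh}$ side.
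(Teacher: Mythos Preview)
Your proof is correct and follows essentially the same route as the paper: chain rule in $t$, insertion of Lemma~\ref{lem:AnalyticalEstim1d}, reindexing $a\mapsto a+1$ for the transport terms, the bound $a\le a^{2}$, then Lemmas~\ref{lemma:NormProductThreeFunctions} and~\ref{lemma:NormProductTwoFunctions} (part~(i) on the $Q$--$g$ product, part~(ii) on the $\ln\omega$--$g$ product) together with the elementary Leibniz/$(j+k)!\ge j!k!$ argument for the $h$ term. You are in fact more careful than the paper on one point: the reindexing $a\mapsto a+1$ in the truncated sum does produce a surplus term at level $A+1$, and the paper writes the corresponding step as an equality without comment; your observation that the sign $\lambda(t)+\lambda'(t)\le 0$ (guaranteed by hypothesis~(b) of Theorem~\ref{thm:Closure}) is what allows this term to be dropped is the correct justification.
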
 
\begin{proof} Differentiating in time the norm $\Vert g(t)\Vert_{\Hh,\lambda(t);A}$, we get
\begin{align*}
\frac{d}{dt}\Vert g(t)\Vert_{\Hh,\lambda(t);A}&=  \sum_{a\in \setA}\frac{1}{(a!)^{2}}
\left(\lambda'(t)\frac{d^{a+1}}{d\lambda^{a+1}}\Vert g(t)\Vert_{\lambda,0; A}\right)
+\sum_{a=0}^A\frac{1}{(a!)^{2}}\left(\frac{d}{dt}\Vert g(t)\Vert_{ A}\right)\Big{|}_{\lambda=\lambda(t)}\\
&= \lambda'(t)\sum_{a=0}^A\frac{1}{(a!)^{2}}\Vert g(t)\Vert_{\lambda(t),a+1;A}
+\sum_{a=0}^A\frac{1}{(a!)^{2}}\left(\frac{d}{dt}\Vert g(t)\Vert_{ \lambda,a;A}\right)
\Big{|}_{\lambda=\lambda(t)},
\end{align*}
Dividing both sides of the inequality in  Lemma \ref{lem:AnalyticalEstim1d} by $(a!)^{2}$ and summing the resulting expression over $a\in \setA$, it follows that
\begin{equation}
\label{proofst1:TimeEvolutionNormSolution}
\begin{aligned}
\frac{d}{dt}\Vert g(t)\Vert_{\Hh,\lambda(t);A} &\leq  \sum_{a\in \setA}\frac{\lambda'(t)+\lambda(t)}{(a!)^{2}}\Vert g(t)\Vert_{\lambda(t),a+1; A}+\sum_{a=0}^A\frac{a}{(a!)^{2}}
\Vert g(t)\Vert_{\lambda(t),a; A}\\
&\quad+\sum_{a=0}^A\frac{1}{(a!)^{2}} \frac{d^{a}}{d\lambda^{a}}\bigg(\Vert g(t)\Vert_{\lambda,0; A}\left(\Vert h\Vert_{\lambda,0; A}
+\Vert Q(t)\Vert_{\lambda,1}\Vert \ln(\omega)\Vert_{\lambda,1;A}\right)\bigg)  \Big{|}_{\lambda=\lambda(t)} \\
&\quad+\sum_{a=0}^A\frac{1}{(a!)^{2}}\frac{d^{a}}{d\lambda^{a}}\bigg(
\Vert g(t)\Vert_{\lambda,1;A}\left(\Vert Q(t)\Vert_{\lambda,1;A}+\Vert\ln(\omega)\Vert_{\lambda,1;A}
\right)\bigg) \Big{|}_{\lambda=\lambda(t)}.
\end{aligned}
\end{equation}
For the first term in \eqref{proofst1:TimeEvolutionNormSolution}, we have
\begin{align*}
\sum_{a=0}^A\frac{\lambda'(t)+\lambda(t)}{(a!)^{2}}\Vert g(t)\Vert_{\lambda(t),a+1; A}&=
\left(\lambda'(t)+\lambda(t)\right)\sum_{a=0}^A\frac{(a+1)^{2}}{((a+1)!)^{2}}\Vert g(t)\Vert_{\lambda(t),a+1; A}\\
&=\left(\lambda'(t)+\lambda(t)\right)\sum_{a=0}^A\frac{(a)^{2}}{((a)!)^{2}}\Vert g(t)\Vert_{\lambda(t),a; A},
\end{align*}
and, for the second term
\begin{equation*}
\sum_{a=0}^A\frac{a}{(a!)^{2}}
\Vert g(t)\Vert_{\lambda(t),a; A}\leq  \sum_{a\in \setA}\frac{(a)^{2}}{(a!)^{2}}
\Vert g(t)\Vert_{\lambda(t),a; A},
\end{equation*}
so that
\begin{equation}
\label{proofst2:TimeEvolutionNormSolution}
\sum_{a=0}^A\frac{\lambda'(t)+\lambda(t)}{(a!)^{2}}\Vert g(t)\Vert_{\lambda(t),a+1;A} +\sum_{a=0}^A\frac{a}{(a!)^{2}}
\Vert g(t)\Vert_{\lambda(t),a; A}\leq\left(1+\lambda'(t)+\lambda(t)\right)\Vert g(t)\Vert_{\tHh,\lambda(t); A}.
\end{equation}
For the third term, observe that on one hand
\begin{align*}
\sum_{a=0}^A\frac{1}{(a!)^{2}}\frac{d^{a}}{d\lambda^{a}}\left(\Vert g(t)\Vert_{\lambda,0; A}\Vert h\Vert_{\lambda,0; A}\right)
&=\sum_{a=0}^A\frac{1}{(a!)^{2}}\sum_{r=0}^{a}C^{r}_{a}\left(\frac{d^{r}}{d\lambda^{r}}\Vert g(t)\Vert_{\lambda,0; A }\right)
\left(\frac{d^{a-r}}{d\lambda^{a-r}}\Vert h\Vert_{\lambda,0; A}\right)\\
&=\sum_{a=0}^A\frac{1}{(a!)^{2}}\sum_{r=0}^{a}C^{r}_{a}\Vert g(t)\Vert_{\lambda,r; A}\Vert h\Vert_{\lambda,a-r; A}\\
&=\sum_{r=0}^{A}\Vert g(t)\Vert_{\lambda,r; A}\sum_{a=r}^{A}\Vert h\Vert_{\lambda,a-r; A}\frac{C^{r}_{a}}{(a!)^{2}}\\
&=\sum_{r=0}^{A}\Vert g(t)\Vert_{\lambda,r; A}\sum_{a=0}^{A}\Vert h\Vert_{\lambda,a; A}\frac{C^{r}_{a+r}}{((a+r)!)^{2}} \\
&
=\sum_{r=0}^{A}\frac{\Vert g(t)\Vert_{\lambda,r; A}}{(r!)^{2}}\sum_{a=0}^{A}\frac{\Vert h\Vert_{\lambda,a-r; A}}{(a!)^{2}}\frac{a! r!}{ (a+r))! },
\end{align*}
and since  $\frac{a!r!}{(a+r)!}\leq 1$,  for all $a,r\in\NN$, we get that
\begin{equation}
\label{proofst3:TimeEvolutionNormSolution}
\sum_{a=0}^A\frac{1}{(a!)^{2}}\frac{d^{a}}{d\lambda^{a}}\left(\Vert g(t)\Vert_{\lambda,0; A}\Vert h\Vert_{\lambda,0; A}\right)
\leq \Vert g(t)\Vert_{\Hh,\lambda; A}\Vert h\Vert_{\Hh,\lambda; A}.
\end{equation}
On the other hand, inequality \ref{truncated:NormProductThreeFunctions} provides
a bound for the remaining summand in the third  term of \eqref{proofst1:TimeEvolutionNormSolution}:
\begin{equation}
\label{proofst4:TimeEvolutionNormSolution}
  \sum_{a\in \setA}\frac{1}{(a!)^{2}} \frac{d^{a}}{d\lambda^{a}}
\bigg(\Vert g(t)\Vert_{\lambda,0; A}\Vert Q(t)\Vert_{\lambda,1; A}\Vert \ln(\omega)\Vert_{\lambda,1; A }\bigg)
%
%
\leq\Vert g(t)\Vert_{\Hh,\lambda(t);A}\Vert Q(t)\Vert_{\tHh,\lambda(t);A}
\Vert \ln(\omega)\Vert_{\tHh,\lambda(t);A}.
\end{equation}
For the fourth  term in   \eqref{proofst1:TimeEvolutionNormSolution}  we use \eqref{truncated:NormProductTwoFunctionsA} and
\eqref{truncated:NormProductTwoFunctionsB}
in order to get the estimate
\begin{equation}
\label{proofst5:TimeEvolutionNormSolution}
\begin{aligned}
&\sum_{a=0}^A\frac{1}{(a!)^{2}}\frac{d^{a}}{d\lambda^{a}}\left(
\Vert g(t)\Vert_{\lambda,1;A}\left(\Vert Q(t)\Vert_{\lambda,1;A}+\Vert\ln(\omega)\Vert_{\lambda,1;A}
\right)\right)\\
&\leq  16  \Vert g(t)\Vert_{\Hh,\lambda(t);A}\left(\Vert Q(t)\Vert_{ \tHh,\lambda(t);A}
+\Vert \ln(\omega)\Vert_{\tHh,\lambda(t);A}\right)+4 \Vert g(t)\Vert_{\tHh,\lambda(t);A}\left(4 \Vert Q(t)\Vert_{\Hh,\lambda(t);A}
+ \Vert \ln(\omega)\Vert_{\tHh,\lambda(t);A}\right).
\end{aligned}
\end{equation}
Inserting \eqref{proofst2:TimeEvolutionNormSolution}, \eqref{proofst3:TimeEvolutionNormSolution},
\eqref{proofst4:TimeEvolutionNormSolution} and \eqref{proofst5:TimeEvolutionNormSolution} in \eqref{proofst1:TimeEvolutionNormSolution},
we conclude that
\begin{align*}
\frac{d}{dt}\Vert g(t)\Vert_{\Hh,\lambda(t);A}
&\leq\left(\lambda'(t)+\lambda(t)+1\right)\Vert g(t)\Vert_{\tHh,\lambda(t);A}+\Vert g(t)\Vert_{\Hh,\lambda(t);A}\Vert h\Vert_{\Hh,\lambda(t);A}\\
&\quad+\Vert g(t)\Vert_{\Hh,\lambda(t);A}\Vert Q(t)\Vert_{\tHh,\lambda(t);A}
\Vert \ln(\omega)\Vert_{\tHh,\lambda(t);A}+16
\Vert g(t)\Vert_{\Hh,\lambda(t);A}\left(\Vert Q(t)\Vert_{\tHh,\lambda(t);A}+\Vert \ln(\omega)\Vert_{\tHh,\lambda(t);A}\right)\\
&\quad+ \Vert g(t)\Vert_{\tHh,\lambda(t);A}\left(16 \Vert Q(t)\Vert_{\Hh,\lambda(t);A}+4\Vert \ln(\omega)\Vert_{\tHh,\lambda(t);A}\right).
\end{align*}
We end the proof by using the obvious upper bounds for the truncated norms.
\end{proof}
%
%
\subsubsection{Proof of Theorem \ref{thm:Closure}}
Applying  Gronwall's lemma
to the inequality in  Proposition \ref{prop:TimeEvolutionNormSolution}, we obtain that, for all $t\in[0,T]$ and $A\in\NN$,
\begin{equation}
\begin{aligned}
\label{proofst1:Closure}
& \Vert g(t)\Vert_{\Hh,\lambda(t);A}\leq \Vert g_{0}\Vert_{\Hh,\lambda_{0}}\exp\left\{\int_{0}^{t}\left(\gamma_{1}+16 \gamma_{0}+(16+\gamma_{0})\Vert Q(\theta)\Vert_{\tHh,\lambda(\theta)}\right)\,ds\right\}\\
&+\int_{0}^{t}\left(\lambda(\theta)+1+\lambda'(\theta)+ 4\gamma_{0}+16\Vert Q(\theta)\Vert_{\Hh,\lambda(\theta)}\right)\Vert g(\theta)\Vert_{\tHh,\lambda(\theta);A}\exp\left\{
\int_{\theta}^{t}\left( \gamma_{1}+16\gamma_{0}+(16+\gamma_{0})\Vert Q(\theta')\Vert_{\tHh,\lambda(\theta')}\right)\,d\theta'\right\}\,d\theta\\
&\leq \Vert g_{0}\Vert_{\Hh,\lambda_{0}}\exp\left\{ T (\gamma_{1}+16 \gamma_{0})+(16+\gamma_{0})M_2  \right\}\\
& \quad + \exp\left\{ T (\gamma_{1}+16 \gamma_{0})+(16+\gamma_{0})M_2  \right\} (\lambda_{0}-K+4\gamma_{0}+16M_1) \int_{0}^{t} \Vert g(\theta)\Vert_{\tHh,\lambda(\theta);A}d\theta.
\end{aligned}
\end{equation}
where  in the second inequality we use the facts that $Q\in  {\cal B}^{M_1}_{\lambda_0,K,T} \cap
\widetilde{{\cal B}}^{M_2}_{\lambda_0,K,T} $ and that
 $$\lambda(t)+1+\lambda'(t)+ 4\gamma_{0}+16\Vert Q(t)\Vert_{\Hh,\lambda(t)}\leq \lambda_{0}-K+4\gamma_{0}+16\Vert Q(t)\Vert_{\Hh,\lambda(t)}\leq \lambda_{0}-K+4\gamma_{0}+16M_1$$
for all $t\in [0,T]$.  From the assumptions we can choose $K>0$ such that $K< \frac{\lambda_0}{T}-1$   and
$$K -\lambda_{0}- 4\gamma_{0}- 16M_{1} \geq 1.$$
Then we deduce with \eqref{proofst1:Closure} and the latter inequality that
\begin{align*}
 \Vert g(t)\Vert_{\Hh,\lambda(t);A} +   \int_{0}^{t} \Vert g(\theta)\Vert_{\tHh,\lambda(\theta);A}d\theta
 & \leq \Vert g(t)\Vert_{\Hh,\lambda(t);A} +   \exp\left\{ T (\gamma_{1}+16 \gamma_{0})+(16+\gamma_{0})M_2  \right\}
\int_{0}^{t} \Vert g(\theta)\Vert_{\tHh,\lambda(\theta);A}d\theta \\
 &\leq  \Vert g_{0}\Vert_{\Hh,\lambda_{0}} \exp\left\{ T (\gamma_{1}+16 \gamma_{0})+(16+\gamma_{0})M_2  \right\}.
\end{align*}
After letting $A\to\infty$ we conclude that
\begin{equation}
\label{proofst2:Closure}
\begin{aligned}
\max_{t\in[0,T]}\Vert g(t)\Vert_{\Hh,\lambda(t)}
&\leq  \Vert g_{0}\Vert_{\Hh,\lambda_{0}}  \exp\left\{ T (\gamma_{1}+16 \gamma_{0})+(16+\gamma_{0})M_2  \right\},\\
\int_{0}^{T}\Vert g(s)\Vert_{\tHh,\lambda(t)} \,dt&\leq  \Vert g_{0}\Vert_{\Hh,\lambda_{0}}  \exp\left\{ T (\gamma_{1}+16 \gamma_{0})+(16+\gamma_{0})M_2  \right\}.
\end{aligned}
\end{equation}

\subsection{Proof of Theorem \ref{thm:ExistenceNonlinear} : solving the Vlasov-Fokker-Planck equation \eqref{eq:VFP}}\label{sec:ConstructionSolutionNonlinear}
Relying upon Theorem \ref{thm:Closure}, we construct now,  by means of a Banach fixed point method, a solution to the nonlinear Vlasov-Fokker-Planck equation \eqref{eq:VFP}.

\begin{remark}
Since we are assuming in \Hw \, that $\int_{\er}\frac{u^2}{\omega(u)} du=1$, for all $\lambda\geq 0$ and $a\in \NN$ it holds that
 $$\left\|\int_{\er}\frac{u^2}{\omega(u)} \varphi(t,\cdot ,u)du\right\| _{\lambda,a}
 \leq \left\|\varphi(t,\cdot,\cdot )\right\|_{\lambda,a}$$
for any  function $\varphi:[0,T]\times \er^2 \to \er$ of class ${\cal C}^{1,\infty}$ and every $t\in [0,T]$. Therefore,  if we denote by $\Phi$ the mapping  associating to
 a function  $\varphi$  the solution $\Phi(\varphi)$ of the linear equation \eqref{eq:FPw} with  potential $\partial_{x}Q(t,x)$  given by   $$Q(t,x):=-  \int_{\er}\frac{u^2}{\omega(u)} \varphi(t,x,u)\, du \, , $$
the inclusion $$\Phi\left( {\cal B}^{M_1}_{\lambda_0,K,T} \cap
\widetilde{{\cal B}}^{M_2}_{\lambda_0,K,T} \right)  \subseteq {\cal B}^{\hat{M}}_{\lambda_0,K,T} \cap
\widetilde{{\cal B}}^{\hat{M}}_{\lambda_0,K,T} $$  holds under the
 conditions on the constants $T,\lambda_0,K, M_1,M_2$ and $\hat{M}$  established in Theorem \ref{thm:Closure}.
\end{remark}

\begin{corollary}\label{stability}
 If in addition to the assumptions of Theorem \ref{thm:Closure}, the constants $M:=M_1$ and $T>0$ satisfy
 the constraint
$$\Vert g_{0}\Vert_{\Hh,\lambda_{0}} \exp (T (\gamma_1+16\gamma_0)) \leq M \exp( -(16+\gamma_0)M),$$
then $\Phi\left( {\cal B}^{M}_{\lambda_0,K,T} \cap
\widetilde{{\cal B}}^{M}_{\lambda_0,K,T}\right)  \subseteq  {\cal B}^{M}_{\lambda_0,K,T} \cap
\widetilde{{\cal B}}^{M}_{\lambda_0,K,T} $.
\end{corollary}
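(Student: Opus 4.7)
The corollary is essentially a bookkeeping consequence of Theorem \ref{thm:Closure}, so the plan is simply to check the hypotheses of that theorem can be applied with $M_{1}=M_{2}=M$ and then arrange the resulting bound to lie back in the same ball.

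First I would unpack the definition of $\Phi$. Given any $\varphi\in\Cc^{1,\infty}$ with finite analytic norm, $\Phi(\varphi)$ is the solution $g$ of the linearized equation \eqref{eq:FPw} with potential built from $Q(t,x):=-\int_{\er}\frac{u^{2}}{\omega(u)}\varphi(t,x,u)\,du$. The preceding Remark gives $\|Q(t)\|_{\lambda,a}\le \|\varphi(t)\|_{\lambda,a}$ for every $\lambda\ge 0$ and $a\in\NN$. Summing against the weights $1/(a!)^{2}$ and $a^{2}/(a!)^{2}$ defining the $\Hh$ and $\tHh$ (semi)norms, and then taking the supremum, resp.\ integral, in $t\in[0,T]$ along the curve $\lambda(t)=\lambda_{0}-(1+K)t$, I obtain
\[
\sup_{t\in[0,T]}\|Q(t)\|_{\Hh,\lambda(t)}\le \sup_{t\in[0,T]}\|\varphi(t)\|_{\Hh,\lambda(t)}\le M,\qquad \int_{0}^{T}\|Q(t)\|_{\tHh,\lambda(t)}\,dt\le \int_{0}^{T}\|\varphi(t)\|_{\tHh,\lambda(t)}\,dt\le M,
\]
so that $Q\in \Bb^{M}_{\lambda_{0},K,T}\cap \widetilde{\Bb}^{M}_{\lambda_{0},K,T}$ whenever $\varphi$ is.

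Second, I invoke Theorem \ref{thm:Closure} with the choice $M_{1}=M_{2}=M$ (the nonemptiness of the interval for $K$ in assumption b) is inherited, and assumption a) on $T$ is unchanged). The theorem then yields $\Phi(\varphi)=g\in \Bb^{\hat M}_{\lambda_{0},K,T}\cap\widetilde{\Bb}^{\hat M}_{\lambda_{0},K,T}$ with
\[
\hat M=\|g_{0}\|_{\Hh,\lambda_{0}}\exp\bigl\{T(\gamma_{1}+16\gamma_{0})+(16+\gamma_{0})M\bigr\}.
\]
Finally, the extra hypothesis of the corollary, rewritten as
\[
\|g_{0}\|_{\Hh,\lambda_{0}}\exp\bigl\{T(\gamma_{1}+16\gamma_{0})\bigr\}\le M\exp\bigl(-(16+\gamma_{0})M\bigr),
\]
is precisely $\hat M\le M$. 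Hence both the supremum in the $\Hh$-norm and the integral in the $\tHh$-norm of $g$ remain bounded by $M$, which gives the desired stability $\Phi(\Bb^{M}_{\lambda_{0},K,T}\cap\widetilde{\Bb}^{M}_{\lambda_{0},K,T})\subseteq \Bb^{M}_{\lambda_{0},K,T}\cap\widetilde{\Bb}^{M}_{\lambda_{0},K,T}$.

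There is essentially no obstacle; the only point deserving a moment of care is the transfer of the pointwise bound $\|Q(t)\|_{\lambda,a}\le\|\varphi(t)\|_{\lambda,a}$ to the two functionals used to define the balls, which is immediate because these are obtained from $\|\cdot\|_{\lambda,a}$ by nonnegative linear operations (weighted sums, suprema, and time integration) that preserve the inequality.
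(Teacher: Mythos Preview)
Your proof is correct and follows essentially the same approach as the paper's: apply Theorem~\ref{thm:Closure} with $M_{1}=M_{2}=M$ (using the preceding Remark to pass the ball membership from $\varphi$ to $Q$), and then observe that the extra hypothesis is exactly the statement $\hat{M}\le M$. The paper's proof is simply a two-line version of yours, since the transfer from $\varphi$ to $Q$ is already recorded in the Remark immediately before the corollary.
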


\begin{proof}
Taking $M_2=M=M_1$ in  Theorem \ref{thm:Closure} we get that  $\Phi\left( {\cal B}^{M}_{\lambda_0,K,T} \cap\widetilde{{\cal B}}^{M}_{\lambda_0,K,T}\right)\subseteq   {\cal B}^{\hat{M}}_{\lambda_0,K,T} \cap
\widetilde{{\cal B}}^{\hat{M}}_{\lambda_0,K,T}$
for
$\hat{M}=\Vert g_{0}\Vert_{\Hh,\lambda_{0}}  \exp\left\{ T (\gamma_{1}+16 \gamma_{0})+(16+\gamma_{0})M  \right\}$. The additional constraint ensures that $\hat{M}\leq M$.
\end{proof}


\begin{thm}\label{thm:FixedPointMethod}
Under the assumptions of Corollary
\ref{stability} and, moreover, that
\begin{equation}\label{contraction}
M(1+\gamma_{0})\exp\left\{\left(M\gamma_{0}+\gamma_{1}\right)T\right\}<1 \,
\end{equation}
the mapping
$$\Phi: {\cal B}^{M}_{\lambda_0,K,T} \cap
\widetilde{{\cal B}}^{M}_{\lambda_0,K,T} \to   {\cal B}^{M}_{\lambda_0,K,T} \cap
\widetilde{{\cal B}}^{M}_{\lambda_0,K,T} $$
is well defined and is a contraction for the norm
$$\max\left\{\max_{t\in [0, T]} \Vert \psi(t)\Vert_{\lambda(t),0} \, , \int_{0}^{T}\Vert \psi(t)\Vert_{\lambda(t),1}\,dt \right\}.$$
If in addition to all the previous assumptions, we have
$$\max\{ \Vert g_{0}\Vert_{\Hh,\lambda_{0}} , T\Vert g_{0}\Vert_{\tHh,\lambda_{0}}\} \leq M,$$
then the (constant in time) function $g_0(t,x)=g_0(x)$ satisfies $g_0 \in {\cal B}^{M}_{\lambda_0,K,T} \cap
\widetilde{{\cal B}}^{M}_{\lambda_0,K,T} $  and a solution to the nonlinear Vlasov-Fokker-Planck equation  (VFP$\omega$)  exists in ${\cal B}^{M}_{\lambda_0,K,T} \cap
\widetilde{{\cal B}}^{M}_{\lambda_0,K,T}$.
\end{thm}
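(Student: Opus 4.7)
The plan is to exploit Corollary \ref{stability} for well-definedness and use a Banach fixed point argument with the contraction condition \eqref{contraction}, once I have established the correct Lipschitz-type estimate in the analytic setting.

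First, since the assumptions include those of Corollary \ref{stability}, the mapping $\Phi$ does send ${\cal B}^{M}_{\lambda_0,K,T} \cap \widetilde{{\cal B}}^{M}_{\lambda_0,K,T}$ into itself, so it is well defined. Regarding the constant-in-time function $g_0(x,u)$, the monotonicity of $\lambda\mapsto \|\psi\|_{\Hh,\lambda}$ and $\lambda\mapsto \|\psi\|_{\tHh,\lambda}$ (evident from their series representations, since all coefficients are non-negative) together with $\lambda(t)\leq \lambda_0$ yield $\sup_{t\in [0,T]}\|g_0\|_{\Hh,\lambda(t)}\leq \|g_0\|_{\Hh,\lambda_0}\leq M$ and $\int_0^T \|g_0\|_{\tHh,\lambda(t)}\,dt\leq T\|g_0\|_{\tHh,\lambda_0}\leq M$, so $g_0\in {\cal B}^{M}_{\lambda_0,K,T}\cap\widetilde{{\cal B}}^{M}_{\lambda_0,K,T}$.

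The heart of the proof is the contraction estimate. Given $\varphi_1,\varphi_2$ in the space, set $g_i:=\Phi(\varphi_i)$, $Q_i(t,x):=-\int_{\er}\frac{u^2}{\omega(u)}\varphi_i(t,x,u)\,du$, $\delta g:=g_1-g_2$, $\delta\varphi:=\varphi_1-\varphi_2$, $\delta Q:=Q_1-Q_2$. Subtracting the two linear equations \eqref{eq:FPw} satisfied by $g_1$ and $g_2$ gives a linear kinetic Fokker-Planck equation for $\delta g$ with zero initial datum, in which the unknown appears through $\partial_u\delta g$, $\delta g\,\partial_xQ_1\partial_u\ln\omega$, $\delta g\, h$ (as in the analysis of Theorem \ref{thm:Closure}), together with a forcing term of the form $\partial_x\delta Q\,\partial_u g_2-g_2\,\partial_x\delta Q\,\partial_u\ln\omega$ depending on the fixed function $g_2$. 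Applying $\partial_x^k\partial_u^l$ and the maximum principle from Theorem \ref{thm:WellposedClassicLinearVFP} exactly as in the derivation of \eqref{estim:AnalyticalEstim1d}, then summing against $\frac{\lambda^{k+l}}{k!l!}$ and differentiating $a$ times in $\lambda$, one obtains via Lemma \ref{lem:normprop} the inequality
\begin{equation*}
\frac{d}{dt}\|\delta g(t)\|_{\Hh,\lambda(t);A}\leq C_1(t)\,\|\delta g(t)\|_{\tHh,\lambda(t);A}+C_2(t)\,\|\delta g(t)\|_{\Hh,\lambda(t);A}+R(t),
\end{equation*}
where $C_1(t)\leq \lambda(t)+1+\lambda'(t)+4\gamma_0+16\|Q_1(t)\|_{\Hh,\lambda(t)}$ is non-positive by the choice of $K$, $C_2(t)\leq \gamma_1+M\gamma_0$ (using $\|Q_1\|_{\tHh}\leq M$ and Lemma \ref{lemma:NormProductThreeFunctions}, \ref{lemma:NormProductTwoFunctions}), and the source $R(t)$ is controlled by the analytical norms of $\delta Q$ and $g_2$:
\begin{equation*}
R(t)\leq \bigl(\|g_2(t)\|_{\Hh,\lambda(t)}(1+\gamma_0)\bigr)\,\|\delta\varphi(t)\|_{\tHh,\lambda(t)}\leq M(1+\gamma_0)\,\|\delta\varphi(t)\|_{\lambda(t),1},
\end{equation*}
using that $\|\delta Q(t)\|_{\tHh,\lambda}\leq \|\delta\varphi(t)\|_{\tHh,\lambda}\leq \|\delta\varphi(t)\|_{\lambda,1}$ thanks to the normalization $\int u^2/\omega\,du=1$ in \Hw.

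Integrating, letting $A\to\infty$, and applying Grönwall's lemma produces
\begin{equation*}
\max_{t\in[0,T]}\|\delta g(t)\|_{\Hh,\lambda(t)}+\int_0^T\|\delta g(t)\|_{\tHh,\lambda(t)}\,dt\leq M(1+\gamma_0)\exp\bigl((M\gamma_0+\gamma_1)T\bigr)\,\int_0^T\|\delta\varphi(t)\|_{\lambda(t),1}\,dt,
\end{equation*}
and since $\|\cdot\|_{\lambda(t),0}\leq \|\cdot\|_{\Hh,\lambda(t)}$ and $\|\cdot\|_{\lambda(t),1}\leq \|\cdot\|_{\tHh,\lambda(t)}$, the contraction constant for the proposed metric is exactly $M(1+\gamma_0)\exp((M\gamma_0+\gamma_1)T)<1$ by hypothesis \eqref{contraction}. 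Hence $\Phi$ is a contraction on the complete metric space ${\cal B}^{M}_{\lambda_0,K,T} \cap \widetilde{{\cal B}}^{M}_{\lambda_0,K,T}$ (completeness follows from the fact that $\{\psi\,:\,\|\psi\|_{\Hh,\lambda}\leq M\}$ is closed under uniform convergence of all derivatives, which is implied by finiteness of the weighted series), and Banach's fixed point theorem yields a unique fixed point $g$, which by construction satisfies $Q=P=-\int u^2/\omega\cdot g\,du$ and is therefore a smooth solution of \eqref{eq:VFPw}. The main technical obstacle is correctly accounting in the Lipschitz estimate for the source term $\partial_x\delta Q\,\partial_u g_2$, since it mixes a derivative of the ``input'' $\delta\varphi$ with a derivative of the ``output'' $g_2$; the lemmas \ref{lemma:NormProductThreeFunctions}–\ref{lemma:NormProductTwoFunctions} are precisely what allow this term to be absorbed into the norm $\int_0^T\|\delta\varphi(t)\|_{\lambda(t),1}\,dt$ with constant $M(1+\gamma_0)$ rather than something that would destroy the contraction condition.
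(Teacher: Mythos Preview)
Your overall strategy is right, but the execution mixes two different levels of the analytic norms and produces constants that do not follow from the computation you describe. You carry out the difference estimate in the full $\Hh$/$\tHh$ norms (``summing over $a$'' and invoking Lemmas \ref{lemma:NormProductThreeFunctions}--\ref{lemma:NormProductTwoFunctions}), yet you claim $C_2(t)\leq \gamma_1+M\gamma_0$ and $R(t)\leq M(1+\gamma_0)\|\delta\varphi(t)\|_{\lambda(t),1}$. If you really sum over $a$ as in Proposition \ref{prop:TimeEvolutionNormSolution}, the coefficient of $\|\delta g\|_{\Hh}$ is $\gamma_1+16\gamma_0+(\gamma_0+16)\|Q_1(t)\|_{\tHh,\lambda(t)}$, and the source term picks up a contribution $\|g_2(t)\|_{\tHh,\lambda(t)}\|\delta Q(t)\|_{\Hh,\lambda(t)}$; neither $\|Q_1(t)\|_{\tHh,\lambda(t)}$ nor $\|g_2(t)\|_{\tHh,\lambda(t)}$ is bounded by $M$ pointwise in $t$ (only their time integrals are), so the Gr\"onwall step would not give the contraction factor you state. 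In the same spirit, your inequality $\|\delta\varphi(t)\|_{\tHh,\lambda}\leq \|\delta\varphi(t)\|_{\lambda,1}$ is the wrong way around: the $a=1$ term alone already equals $\|\cdot\|_{\lambda,1}$, so in fact $\|\cdot\|_{\lambda,1}\leq \|\cdot\|_{\tHh,\lambda}$.

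The paper avoids all of this by running the contraction argument \emph{only at the level $a=0$}, i.e.\ estimating $\frac{d}{dt}\|\bar\Phi(t)\|_{\lambda(t),0}$ directly. At that level the product inequality is simply Lemma \ref{lem:normprop}(iii), and the coefficients that appear are $\|f_1(t)\|_{\lambda(t),1}$, $\|\Phi(f_2)(t)\|_{\lambda(t),1}$, $\|\ln\omega\|_{\lambda(t),1}$, $\|h\|_{\lambda(t),0}$. The crucial trick is then to bound $\|f_1(t)\|_{\lambda(t),1}\leq \|f_1(t)\|_{\Hh,\lambda(t)}\leq M$ and likewise for $\Phi(f_2)$, using membership in ${\cal B}^M_{\lambda_0,K,T}$. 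This is what produces precisely $C_2=M\gamma_0+\gamma_1$ and the source constant $M(1+\gamma_0)$, and after Gr\"onwall and the choice of $K$ (here only $K-\lambda_0-M-\gamma_0>1$ is needed) one obtains
\[
\max_{t\in[0,T]}\|\bar\Phi(t)\|_{\lambda(t),0}+\int_0^T\|\bar\Phi(t)\|_{\lambda(t),1}\,dt
\leq M(1+\gamma_0)e^{(M\gamma_0+\gamma_1)T}\int_0^T\|\bar f(t)\|_{\lambda(t),1}\,dt,
\]
which is the contraction in the stated metric. Lemmas \ref{lemma:NormProductThreeFunctions}--\ref{lemma:NormProductTwoFunctions} are not used here; they are needed for the stability estimate of Theorem \ref{thm:Closure}, not for the Lipschitz estimate.
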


\begin{proof}
Given $f_i\in {\cal B}^{M}_{\lambda_0,K,T} \cap
\widetilde{{\cal B}}^{M}_{\lambda_0,K,T}$, $i=1,2$, we set
$P_{i}(t,x):=\int_{\er}\frac{u^2}{\omega(u)}f_{i}(t,x,u)\,du$ for $i=1,2$.  The difference $\Phi(f_{1})-\Phi(f_{2})$ satisfies
\begin{equation*}
\begin{aligned}
&\partial_{t}\left(\Phi(f_{1})-\Phi(f_{2})\right)+\left(u \partial_{x}\left(\Phi(f_{1})-\Phi(f_{2})\right)\right)-\left[\left(
\partial_{x}P_{1}-\partial_{u}\ln(\omega)\right)\partial_{u}\left(\Phi(f_{1})-\Phi(f_{2})\right)\right]
-\frac{1}{2}\partial^{2}_{u}\left(\Phi(f_{1})-\Phi(f_{2})\right)\\
&=\partial_{u}\Phi(f_{2})\left(\partial_{x}P_{1}-\partial_{x}P_{2}\right)
+\Phi(f_{2})\partial_{u}\ln(\omega)\left(\partial_{x}P_{1}-\partial_{x}P_{2}\right)
+\left(\partial_{u}\ln(\omega)\partial_{x}P_{1}+h\right)\left(\Phi(f_{1})-\Phi(f_{2})\right).
\end{aligned}
\end{equation*}
Writing $\bar{\Phi}:=\Phi(f_{1})-\Phi(f_{2})$ and $\bar{P}:=P_{1}-P_{2}$, we get
\begin{align*}
&\partial_{t}\bar{\Phi}+\left(u \partial_{x}\bar{\Phi}\right)-\left(\left(
 \partial_{x}P_{1}-\partial_{u}\ln(\omega)\right)\partial_{u}\hat{\Phi}\right)
-\frac{1}{2}\partial^{2}_{u}\bar{\Phi}\\
&= \left(\Phi(f_{2})\partial_{u}\ln(\omega)+\partial_{u}\Phi(f_{2})\right)\partial_{x}\bar{P}
+ \left(\partial_{x}P_{1}\partial_{u}\ln(\omega)+h\right)\bar{\Phi}.
\end{align*}
Then, by similar computations as in the proof of Theorem \ref{thm:Closure}, we successively obtain:\\
$\bullet$ by applying the operator $\partial^{k}_{x}\partial^{l}_{u}$,
\begin{align*}
&\partial_{t}(\partial^{k}_{x}\partial^{l}_{u}\bar{\Phi})+ u \partial_ x \left( \partial^{k}_{x}\partial^{l}_{u}\bar{\Phi}\right)-
\left( \partial_{x}P_{1} - \partial_{u}\ln(\omega)\right)\partial_u \left(\partial^{k}_{x}\partial^{l}_{u}\bar{\Phi}\right)
-\frac{1}{2}\partial^{2}_{u}(\partial^{k}_{x}\partial^{l}_{u}\bar{\Phi})\\
&=-l\partial^{k+1}_{x}\partial^{l-1}_{u}\bar{\Phi}+ \1_{\{k\geq 1\}}\sum_{m=0}^{k-1}C^{m}_{k}
 \left(\partial^{k-m+1}_{x}P_{1}\right)\partial^{m}_{x}\partial^{l+1}_{u}\bar{\Phi} -  \1_{\{l\geq 1\}} \sum_{n=0}^{l-1}C^{n}_{l}
 \left(\partial^{l-n+1}_{u}\ln(\omega)\right)\partial^{k}_{x}\partial^{n+1}_{u}\bar{\Phi} \\
&\quad +\sum_{n=0}^{l} \sum_{m=0}^{k} C^{m}_{k}C^{n}_{l}
 \partial^{k-m}_{x}\partial^{n}_{u}\Phi(f_{2}) \partial^{l-n+1}_{u}\ln \omega(u) \partial^{m+1}_{x}\bar{P}
+ \sum_{m=0}^{k}C^{m}_{k}\left(\partial^{k-m}_{x}\partial^{l+1}_{u}\Phi(f_{2})\right)\left(\partial^{m+1}_{x}\bar{P}\right)\\
&\quad+ \sum_{n=0}^{l} \sum_{m=0}^{k} C^{m}_{k}C^{n}_{l}  \left( \partial^{k-m+1}_{x}P_{1}\right)
\left(\partial^{l-n+1}_{u}\ln(\omega)\right)\left(\partial^{m}_{x}\partial^{n}_{u}\bar{\Phi}\right)  +\sum_{n=0}^{l}C^{n}_{l}
\left(\partial^{k}_{x}\partial^{n}_{u}\Phi\right)\partial^{l-n}_{u}h ;
\end{align*}
\noindent
$\bullet$ by a maximum principle, and the fact that   for all $m\in \NN$:  $\| \partial^m P_i\|_{\infty}\leq  \| \partial^m f_i\|_{\infty}$,  $i=1,2$, and $\| \partial^m \bar{P}\|_{\infty}\leq  \| \partial^m \bar{f}\|_{\infty}$  for $\bar{f}:=f_1- f_2$, we get
\begin{equation*}
\begin{aligned}
&\frac{d}{dt}\Vert\partial^{k}_{x}\partial^{l}_{u}\bar{\Phi}(t)\Vert_{\infty}\\
&\leq l\Vert\partial^{k+1}_{x}\partial^{l-1}_{u}\bar{\Phi}(t)\Vert_{\infty}+ \1_{\{k\geq 1\}} \sum_{m=0}^{k-1}C^{m}_{k}\Vert \partial^{k-m+1}_{x}f_{1}(t)\Vert_{\infty}\Vert \partial^{m}_{x}\partial^{l+1}_{u}\bar{\Phi}(t)\Vert_{\infty}\\
&\quad + \1_{\{l\geq 1\}} \sum_{n=0}^{l-1}C^{n}_{l}\Vert\partial^{l-n+1}_{u}\ln(\omega)\Vert_{\infty}\Vert\partial^{k}_{x}\partial^{n+1}_{u}\bar{\Phi}(t)\Vert_{\infty}\\
&\quad + \sum_{n=0}^{l}\sum_{m=0}^{k}C^{m}_{k}C^{n}_{l}\Vert\partial^{l-n+1}_{u}\ln(\omega)\Vert_{\infty}
\Vert\partial^{k-m}_{x}\partial^{n}_{u}\Phi(f_{2})(t)\Vert_{\infty}\Vert\partial^{m+1}_{x}\bar{f}(t)\Vert_{\infty}+ \sum_{m=0}^{k}C^{m}_{k}\Vert \partial^{k-m}_{x}\partial^{l+1}_{u}\Phi(f_{2})(t)\Vert_{\infty}\Vert\partial^{m+1}_{x}\bar{f}(t)\Vert_{\infty}\\
&\quad +\sum_{n=0}^{l}\sum_{m=0}^{k}C^{m}_{k}C^{n}_{l}\Vert\partial^{l-n+1}_{u}\ln(\omega)\Vert_{\infty} \Vert\partial^{k-m+1}_{x}f_{1}(t)\Vert_{\infty}
\Vert\partial^{m}_{x}\partial^{n}_{u}\bar{\Phi}(t)\Vert_{\infty}+\sum_{n=0}^{l}C^{n}_{l}\Vert\partial^{k}_{x}\partial^{n}_{u}\bar{\Phi}(t)\Vert_{\infty}\Vert\partial^{l-n}_{u}h\Vert_{\infty}.
\end{aligned}
\end{equation*}
$\bullet$ Replicating the computations in the proof of Lemma \ref{lem:AnalyticalEstim1d} for $a=0$,    $A=+\infty$, we then obtain
\begin{equation}
\begin{aligned}
\label{proofst2:FixedPoint}
\frac{d}{dt}\Vert\bar{\Phi}(t)\Vert_{\lambda,0}&\leq \lambda\Vert\bar{ \Phi}(t)\Vert_{\lambda,1}+
\Vert\bar{\Phi}(t)\Vert_{\lambda,1}\left(\Vert f_{1}(t)\Vert_{\lambda,1}+\Vert \ln(\omega)\Vert_{\lambda,1}\right)\\
&\quad+\Vert\bar{\Phi}(t)\Vert_{\lambda,0}\left(\Vert f_{1}(t)\Vert_{\lambda,1}\Vert \ln(\omega)\Vert_{\lambda,1}+\Vert h\Vert_{\lambda,0}\right)\\
&\quad+\Vert \bar{f}(t)\Vert_{\lambda,1}\left(\Vert\Phi(f_{2})(t)\Vert_{\lambda,1}+\Vert\Phi(f_{2})(t)\Vert_{\lambda,0}
\Vert \ln(\omega)\Vert_{\lambda,1}\right).
\end{aligned}
\end{equation}
Hence,
\begin{equation*}
\begin{aligned}
\frac{d}{dt}\Vert\bar{\Phi}(t)\Vert_{\lambda(t),0}&\leq \left(\lambda'(t)+\lambda(t)+\Vert f_{1}(t)\Vert_{\lambda,1}+\Vert \ln(\omega)\Vert_{\lambda(t),1}\right)\Vert \bar{\Phi}(t)\Vert_{\lambda(t),1}\\
&\quad+\Vert\bar{\Phi}(t)\Vert_{\lambda(t),0}\left(\Vert f_{1}(t)\Vert_{\lambda(t),1}\Vert \ln(\omega)\Vert_{\lambda(t),1}+\Vert h\Vert_{\lambda(t),0}\right)\\
&\quad+\Vert \bar{f}(t)\Vert_{\lambda(t),1}\left(\Vert\Phi(f_{2})(t)\Vert_{\lambda(t),1}+\Vert\Phi(f_{2})(t)\Vert_{\lambda(t),0}\Vert \ln(\omega)\Vert_{\lambda(t),1}\right).
\end{aligned}
\end{equation*}
Since, by our assumptions,
\begin{equation*}
\max_{t\in[0,T]}\Vert f_{1}(t)\Vert_{\lambda(t),1}\left(\leq \max_{t\in[0,T]}\Vert f_{1}(t)\Vert_{\Hh,\lambda(t)}\right)\leq M,~\mbox{ and }\max_{t\in[0,T]}\Vert\Phi(f_{2}(t))\Vert_{\lambda(t),1}\leq M,
\end{equation*}
we deduce that for all $t\in [0,T]$,
\begin{equation*}
\begin{aligned}
\label{proofst3:FixedPoint}
\frac{d}{dt}\Vert\bar{\Phi}(t)\Vert_{\lambda(t),0}&\leq \left(\lambda_0- K +M  +\gamma_0 \right)\Vert \bar{\Phi}(t)\Vert_{\lambda(t),1}+M(1+\gamma_0 )\Vert\bar{\Phi}(t)\Vert_{\lambda(t),0} +\left( M \gamma_0+ \gamma_ 1 \right)\Vert \bar{f}(t)\Vert_{\lambda(t),1}
\end{aligned}
\end{equation*}
thanks also to the upper-bounds
$\Vert\ln(\omega)\Vert_{\tHh,\lambda(t)}\leq \gamma_{0}=\Vert\ln(\omega)\Vert_{\tHh,\lambda_{0}}$, $\Vert h\Vert_{\Hh,\lambda(\theta),}\leq \gamma_{1}=\Vert h\Vert_{\Hh,\lambda_{0}}$.
It follows then by  Gronwall's inequality that
\begin{align*}
\Vert\bar{\Phi}(t)\Vert_{\lambda(t),0}\leq\exp\left\{T
\left( M \gamma_0+ \gamma_ 1 \right)\right\} \int_{0}^{t}  \left(\Vert\bar{\Phi}(\theta)\Vert_{\lambda(\theta),1}
(\lambda_0- K +M  +\gamma_0)+
\Vert \bar{f}(\theta)\Vert_{\lambda(\theta),1}M(1+\gamma_0 )
\right)d\theta.
\end{align*}
Observe that the current assumptions of Theorem \ref{thm:Closure}  ensure that we can choose $K\in \left(0,\frac{\lambda_0}{T}-1\right)$  such that
$$ K- \lambda_0 - M-\gamma_0 >1.$$
We thus get from the previous that for each $t\in [0,T]$,
\begin{align*}
\Vert\bar{\Phi}(t)\Vert_{\lambda(t),0} +\int_{0}^{t}\Vert \bar{\Phi}(\theta)\Vert_{\lambda(\theta),1}\,d\theta
&\leq  \Vert\bar{\Phi}(t)\Vert_{\lambda(t),0} +\exp\left\{T
\left( M \gamma_0+ \gamma_ 1 \right)\right\}
 \int_{0}^{t}\Vert \bar{\Phi}(\theta)\Vert_{\lambda(\theta),1}\,d\theta\\
& \leq M(1+\gamma_{0})\exp\left\{\left(M\gamma_{0}+\gamma_{1}\right)T\right\}\int_{0}^{T}
\Vert \bar{f}(t)\Vert_{\lambda(t),1}dt.
\end{align*}
In particular,
\begin{align*}
\max_{t\in [0,T]} \Vert\Phi(f_1)(t)-\Phi(f_2)(t) \Vert_{\lambda(t),0}
& \leq M(1+\gamma_{0})\exp\left\{\left(M\gamma_{0}+\gamma_{1}\right)T\right\}\int_{0}^{T}
\Vert f_1(t)-f_2(t)\Vert_{\lambda(t),1}dt.\\
\int_{0}^{T}\Vert \Phi(f_1)(\theta)-\Phi(f_2)(\theta) \Vert_{\lambda(\theta),1}\,d\theta
& \leq M(1+\gamma_{0})\exp\left\{\left(M\gamma_{0}+\gamma_{1}\right)T\right\}\int_{0}^{T}
\Vert  f_1(t)-f_2(t) \Vert_{\lambda(t),1}dt.
\end{align*}
The contractivity property is thus granted by \eqref{contraction}.
\end{proof}
\begin{proof}[Proof of Theorem \ref{thm:ExistenceNonlinear}]
Under the  assumptions on $\lambda_0,M$ and $T$,  Theorem \ref{thm:FixedPointMethod}  holds and,  moreover,  the assumptions on $f_0$ imply that
 $g_0 \in {\cal B}^{M}_{\lambda_0,K,T} \cap
\widetilde{{\cal B}}^{M}_{\lambda_0,K,T} $. Therefore,
 by   Banach's fixed point theorem  the sequence $\Phi^n(g_0)$ converges to a function $g\in {\cal B}^{M}_{\lambda_0,K,T} \cap
\widetilde{{\cal B}}^{M}_{\lambda_0,K,T}  $ which is a solution of (VFP$\omega$).
\end{proof}

\begin{remark}\label{rem:periodicity}
If $g_0(x,u)$ is $1$-periodic in $x$,  uniqueness of classical solutions to the linear equation  \eqref{eq:FPw} implies that  $\Phi^n(g_0)$  too is $1$-periodic in $x$  for each $n\in \NN$. Consequently,  so is the limit $g$.
\end{remark}
\section{The kinetic potential case}\label{sec:Kinet1d}
In this section we extend the previous results to the situation $\beta> 0$ and $\alpha=0$ (corresponding to the standard kinetic energy potential) or $\alpha=1$ (corresponding to the turbulent kinetic energy). We notice that the same proofs can be applied  also to the case $\beta<0$ by replacing in all estimates $\beta$ by $|\beta|$.

We consider the nonlinear Vlasov-Fokker-Planck equation with additional kinetic potential
\renewcommand{\theequation}{{VFP$\omega$K}}
\begin{equation}\label{eq:VFPwK}
\left\{
\begin{aligned}
&\partial_{t}g+ u\partial_{x}g - \left[
\partial_{x}P + \beta (u-\alpha V) -\partial_{u}\ln(\omega)\right]\partial_{u}g  -\frac{\sigma^2}{2}\partial^{2}_{u}g=g\left[ \partial_{x}P- \alpha\beta V\partial_{u}\ln(\omega)\right]-g\hat{h}\quad\mbox{ on }(0,T]\times\er^2,\\
&P(t,x)=-\int_{\er}\frac{u^{2}}{\omega(u)}g(t,x,u)\,du 	, \quad V(t,x)=\int_{\er}\frac{u}{\omega(u)}g(t,x,u)du\\
&g(0,x,u)=g_{0}(x,u)\mbox{ on }\er^2,
\end{aligned}\right.
\end{equation}
\renewcommand{\theequation}{\thesection.\arabic{equation}}
where
\begin{equation*}
\hat{h}(u):=\frac{\partial^{2}_{u}\omega(u)}{2\omega(u)}-|\partial_{u}\ln(\omega(u))|^{2} -\beta- \beta u  \partial_{u}( \ln \omega(u))   \, .
\end{equation*}
Through the relation $g(t,x,u)=\omega(u)f(t,x,u)$, equation \eqref{eq:VFPwK} is seen to be  equivalent to
\renewcommand{\theequation}{{VFPK}}
\begin{equation}\label{eq:VFPK}
\left\{
\begin{aligned}
&\partial_{t}f+u\partial_{x}f-\left(  \partial_{x}P +\beta(u-\alpha V) \right) \partial_{u}f - \beta f -\frac{\sigma^2}{2}\partial^{2}_{u}f=0\mbox{ on }(0,T]\times\er^2,\\
&P(t,x)=-\int_{\er} u^{2}f (t,x,u)du,\quad V(t,x)=\int_{\er} uf (t,x,u)du\\
&f(0,x,u)=f_{0}(x,u)\mbox{ on }\er^2
\end{aligned}\right.
\end{equation}
\renewcommand{\theequation}{\thesection.\arabic{equation}}
  (and to equation \eqref{eq:VFP_1D} if $f_0$  and the searched solution are periodic in $x$).
We next prove
\begin{thm}\label{thm:ExistenceNonlinearEspCond}
Let $M, T$ be positive constants and $\omega:\er\to (0,+\infty)$ be a function of class ${\cal C}^{\infty}$  satisfying \Hw\, for some $\lambda_0>0$  and moreover that $u  \partial_{u}( \ln \omega(u))\in \Hh(\lambda_{0})$. Define the finite constants
$$\gamma_{0}:= \Vert\ln(\omega)\Vert_{\tHh,\lambda_{0}}, \quad
\hat{\gamma}_{1}:= \Vert\hat{ h} \Vert_{\Hh,\lambda_{0}}  \quad  \mbox{and   } \, C_{\omega}:=  \int_{\er}\frac{|u| }{\omega(u)}du,$$
and assume that
\begin{itemize}
\item[a)]$T<\frac{(1+\beta) \lambda_0}{ 1 +4\gamma_0 +(1+\beta) (1 + \lambda_0 ) }$,
\item[b)]$M \leq \frac{(1+\beta) (K - \lambda_{0})- 4\gamma_{0}- 1}{16+\alpha\beta}$ for some $K\in( \frac{1  +4\gamma_0}{ 1+\beta} + \lambda_0  , \frac{\lambda_0}{T}  -1)\,(\not = \emptyset) $   and
\item[c)]$M(1+\gamma_{0}) (1+ T C_{\omega}  \alpha\beta ) \exp\left\{
\left( M(1+C_{\omega}  \alpha\beta)   \gamma_0+ \hat{\gamma}_1 \right)
T\right\} <1$.
\end{itemize}
Assume moreover that $f_0:\er^2 \to \er$ is a function of class ${\cal C}^{\infty}$ and that $g_0(x,u):=\omega(u)f_0(x,u)$ satisfies
\begin{itemize}
\item[d)]$\max\{ \Vert g_{0}\Vert_{\Hh,\lambda_{0}} , T\Vert g_{0}\Vert_{\tHh,\lambda_{0}}\} \leq M  $ and
\item[e)]$\Vert g_{0}\Vert_{\Hh,\lambda_{0}} \exp\left( T (\hat{\gamma}_{1}+16 \gamma_{0}+ \alpha \beta\gamma_0 C_{\omega} M )\right)<M\exp\left( -(16+\gamma_{0}) M   \right) $.
\end{itemize}
Then, equation \eqref{eq:VFPwK} has a unique smooth solution  $g\in{\cal B}^{M}_{\lambda_0,K,T}  \cap \widetilde{{\cal B}}^{M}_{\lambda_0,K,T} $. In particular, under the previous assumptions, a solution $f\in {\cal C}^{1,\infty}$ to \eqref{eq:VFPK} with initial condition $f_0$ exists.
\end{thm}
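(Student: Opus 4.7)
The proof follows the three-step blueprint of Theorem~\ref{thm:ExistenceNonlinear}. Given $\varphi\in {\cal B}^{M}_{\lambda_0,K,T}\cap\widetilde{{\cal B}}^{M}_{\lambda_0,K,T}$, we first define
\begin{equation*}
 P(\varphi)(t,x):=-\int_\er \tfrac{u^{2}}{\omega(u)}\varphi(t,x,u)\,du,\qquad V(\varphi)(t,x):=\int_\er \tfrac{u}{\omega(u)}\varphi(t,x,u)\,du,
\end{equation*}
and let $\Phi(\varphi)=g$ be the classical solution of the linearization of \eqref{eq:VFPwK} obtained by replacing $(P,V)$ by $(P(\varphi),V(\varphi))$; its existence is granted by (a mild extension of) Theorem~\ref{thm:WellposedClassicLinearVFP}, since the extra drift $\beta(u-\alpha V(\varphi)(t,x))$ is smooth and linear in $u$, and the zero-order coefficients are bounded. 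The plan is then to propagate analyticity and to run a Banach contraction exactly as in Sections~\ref{sec:LinearApproximation}--\ref{sec:ConstructionSolutionNonlinear}, treating the new kinetic terms as perturbations controlled by the extra constants $C_{\omega}$ and $\hat\gamma_{1}$.

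For the closure estimate I would apply $\partial_{x}^{k}\partial_{u}^{l}$ to the linearized equation and exploit the commutator identities
\begin{equation*}
\partial_{x}^{k}\partial_{u}^{l}(\beta u\,\partial_{u}g)=\beta u\,\partial_{x}^{k}\partial_{u}^{l+1}g+\beta l\,\partial_{x}^{k}\partial_{u}^{l}g,
\end{equation*}
\begin{equation*}
\partial_{x}^{k}\partial_{u}^{l}\bigl(\alpha\beta V(t,x)\,\partial_{u}g\bigr)=\alpha\beta\sum_{m=0}^{k} C^{m}_{k}\,(\partial_{x}^{m}V)\,(\partial_{x}^{k-m}\partial_{u}^{l+1}g).
\end{equation*}
The transport piece $\beta u\,\partial_{u}(\partial_{x}^{k}\partial_{u}^{l}g)$ is neutral for the maximum principle used in Theorem~\ref{thm:WellposedClassicLinearVFP}, leaving only the bounded correction $\beta l\,\|\partial_{x}^{k}\partial_{u}^{l}g\|_{\infty}$; the remaining $V$- and $\hat h$-contributions are handled as in Lemma~\ref{lem:AnalyticalEstim1d} using the elementary moment estimate $\|V(\varphi)(t)\|_{\lambda,a}\le C_{\omega}\|\varphi(t)\|_{\lambda,a}$, together with the finiteness $\hat\gamma_{1}=\|\hat h\|_{\Hh,\lambda_{0}}<\infty$ guaranteed by the added assumption $u\partial_{u}\ln\omega\in\Hh(\lambda_{0})$. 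Summing over $k,l$ with the weights $\lambda^{k+l}/(k!l!)$, differentiating $a$ times in $\lambda$ and invoking Lemmas~\ref{lem:normprop}, \ref{lemma:NormProductThreeFunctions}, \ref{lemma:NormProductTwoFunctions} verbatim as in Proposition~\ref{prop:TimeEvolutionNormSolution}, one obtains an inequality of the form
\begin{equation*}
\tfrac{d}{dt}\|g(t)\|_{\Hh,\lambda(t);A}\le\mathcal{A}(t)\,\|g(t)\|_{\tHh,\lambda(t);A}+\mathcal{B}(t)\,\|g(t)\|_{\Hh,\lambda(t);A},
\end{equation*}
with $\mathcal{A}(t)\le\lambda(t)+\lambda'(t)+1+4\gamma_{0}+(16+\alpha\beta C_{\omega})M$ and $\mathcal{B}(t)\le\hat\gamma_{1}+16\gamma_{0}+(16+\gamma_{0})\|P(\varphi)(t)\|_{\tHh,\lambda(t)}+\alpha\beta\gamma_{0}C_{\omega}\|\varphi(t)\|_{\tHh,\lambda(t)}$. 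Conditions (a)--(b) force $\mathcal{A}(t)\le-1$, so that a Gronwall integration combined with (e) yields $g\in{\cal B}^{M}_{\lambda_0,K,T}\cap\widetilde{{\cal B}}^{M}_{\lambda_0,K,T}$, the analogue of Corollary~\ref{stability}.

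For the contraction step, given $f_{1},f_{2}\in{\cal B}^{M}\cap\widetilde{{\cal B}}^{M}$, write $\bar{f}=f_{1}-f_{2}$, $\bar{V}=V(f_{1})-V(f_{2})$ and $\bar\Phi=\Phi(f_{1})-\Phi(f_{2})$. Then $\bar\Phi$ obeys the linear equation of the proof of Theorem~\ref{thm:FixedPointMethod} augmented by the additional source $-\alpha\beta\bar{V}\,\partial_{u}\Phi(f_{2})-\alpha\beta\bar{V}\,\Phi(f_{2})\,\partial_{u}\ln\omega$. Using $\|\bar{V}(t)\|_{\lambda,a}\le C_{\omega}\|\bar{f}(t)\|_{\lambda,a}$ and repeating the Gronwall chain of that proof will deliver the Lipschitz constant $M(1+\gamma_{0})(1+TC_{\omega}\alpha\beta)\exp\{(M(1+C_{\omega}\alpha\beta)\gamma_{0}+\hat\gamma_{1})T\}$, strictly less than $1$ by (c); Banach's fixed point theorem then produces the desired solution $g$, and the identity $f=g/\omega$ transfers the result to \eqref{eq:VFPK}. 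The hard part is the unbounded transport $\beta u\,\partial_{u}$: its commutator with $\partial_{x}^{k}\partial_{u}^{l}$ splits into the genuine transport $\beta u\,\partial_{u}(\partial_{x}^{k}\partial_{u}^{l}g)$ that the maximum principle neutralizes and a bounded correction $\beta l\,\partial_{x}^{k}\partial_{u}^{l}g$ which, after weighting and $\lambda$-differentiation, combines cleanly with the $a\|g\|_{\lambda,a;A}$ term of Lemma~\ref{lem:AnalyticalEstim1d}; beyond this, the coupling through $V$ is tamed by the moment bound $C_{\omega}<\infty$ built into \Hw, and the enlarged potential $\hat h$ by the added assumption $u\partial_{u}\ln\omega\in\Hh(\lambda_{0})$.
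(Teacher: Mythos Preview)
Your approach coincides with the paper's: linearize \eqref{eq:VFPwK} in $(P,V)$, prove a closure estimate via the maximum principle and the analytic-norm machinery, and then contract. The identification of the commutator $\partial_x^k\partial_u^l(\beta u\,\partial_u g)=\beta u\,\partial_u(\partial_x^k\partial_u^l g)+\beta l\,\partial_x^k\partial_u^l g$, the neutrality of the first piece for the maximum principle, and the extra source terms in the contraction step are all exactly as in the paper.

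There is, however, a bookkeeping gap in your bound for $\mathcal{A}(t)$. You write $\mathcal{A}(t)\le \lambda(t)+\lambda'(t)+1+4\gamma_0+(16+\alpha\beta C_\omega)M$ and say the correction $\beta l\,\partial_x^k\partial_u^l g$ ``combines cleanly with the $a\|g\|_{\lambda,a;A}$ term''. It does more than that: summing $\beta l\,\|\partial_x^k\partial_u^l g\|_\infty$ with the weights $\lambda^{k+l}/(k!l!)$ gives $\beta\lambda\|\partial_u g\|_{\lambda,0;A}$, and after $a$ derivatives in $\lambda$ you get $\beta\bigl(\lambda\|\partial_u g\|_{\lambda,a;A}+a\|\partial_u g\|_{\lambda,a-1;A}\bigr)\le \beta\bigl(\lambda\|g\|_{\lambda,a+1;A}+a\|g\|_{\lambda,a;A}\bigr)$. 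Thus both the $\lambda$--coefficient and the $a$--coefficient pick up a factor $\beta$, so the leading part of $\mathcal{A}(t)$ is $(1+\beta)\bigl(\lambda(t)+1+\lambda'(t)\bigr)$ rather than $\lambda(t)+1+\lambda'(t)$. This is precisely why conditions (a)--(b) carry the $(1+\beta)$ factors; with your stated $\mathcal{A}(t)$ those hypotheses do \emph{not} yield $\mathcal{A}(t)\le -1$ when $\beta>0$. Once you insert the $(1+\beta)$ factor, (b) rearranges to $(1+\beta)(K-\lambda_0)\ge (16+\alpha\beta)M+4\gamma_0+1$, exactly what is needed, and (a) guarantees the interval for $K$ is nonempty. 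Apart from this missing factor (and the harmless discrepancy between $\alpha\beta$ and $\alpha\beta C_\omega$ in front of $M$, which the paper also juggles), your proof is the paper's proof.
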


It is checked in  Appendix \ref{sec:WeightAnalycity}  that the function $u\mapsto \hat{h}(u)$ belongs to $\Hh(\lambda_{0})$ for every  $\lambda_0\in (0,\frac{1}{4})$ when $\omega(u): = c (1+u^2)^{\frac{s}{2}}$ (so that  $u  \partial_{u}( \ln \omega(u))\in \Hh(\lambda_{0})$ as required).
\begin{corollary}\label{coro:UniformMassEspCond}
Let  $f$ be the  solution to \eqref{eq:VFPK} given above  and assume that \Hu{0} holds.
Then, $f(t,x,u)$ satisfies \Hu{t} for all $t$ in $[0,T]$.
In particular, under the assumptions   of  Theorem \ref{thm:ExistenceNonlinearEspCond} and \Hu{0} a solution to \eqref{eq:CVFP}  for $\beta >0$  exists.
\end{corollary}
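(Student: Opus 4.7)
The plan mirrors the proof of Corollary \ref{coro:UniformMass}, with adjustments for the additional kinetic drift terms. First, periodicity in $x$ is immediate: the initial datum $g_0 = \omega f_0$ is $1$-periodic in $x$, and by uniqueness of classical solutions the linearized Fokker-Planck equation whose fixed point yields $g$ preserves $1$-periodicity, so every iterate $\Phi^n(g_0)$ is $1$-periodic and so is the limit, exactly as in Remark \ref{rem:periodicity}. The existence of the required $u$-moments of $f$ and their smoothness in $x$ follow from $g(t,\cdot,\cdot) \in \Hh(\lambda(t))$ and the decay condition on $\omega$ in \Hw, which together with $\int_\er u^2/\omega\,du = 1$ and $C_\omega < \infty$ provide the integrability assumed by Lemma \ref{lem:relatedconditions}.

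Applying Lemma \ref{lem:relatedconditions} under \Hu{0}, together with the Poisson identity $\partial_x^2 P = -\partial_x^2 \int_\er u^2 f\,du$ valid by the very definition of $P$ in dimension one, I would obtain the closed system for $\bar\rho := \rho - 1$ and $\partial_x V$:
\begin{equation*}
\partial_t \bar\rho = -\partial_x V, \qquad
\partial_t(\partial_x V) = -\partial_x\bigl(\bar\rho\,(\partial_x P - \beta\alpha V)\bigr) - (1-\alpha)\,\beta\,\partial_x V,
\end{equation*}
with zero initial data (both $\bar\rho(0,\cdot)$ and $\partial_x V(0,\cdot)$ vanish by \Hu{0}). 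The case $\alpha = 1$ recovers the first identity of Lemma \ref{lem:relatedconditions}(d) and the case $\alpha = 0$ the second.

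Setting $A(t,\lambda) := \Vert \bar\rho(t)\Vert_\lambda$ and $B(t,\lambda) := \Vert \partial_x V(t)\Vert_\lambda$, I would apply Lemma \ref{lem:normprop}(iii) to estimate the products and Lemma \ref{lem:normprop}(ii) to identify $\partial_\lambda A = \Vert \partial_x\bar\rho \Vert_\lambda$, deriving a differential inequality system of the form
\begin{equation*}
\partial_t A \leq B, \qquad \partial_t B \leq c_1\,\partial_\lambda A + c_2\,A + c_3\, B,
\end{equation*}
whose coefficients are uniformly bounded on $[0,T]\times[0,\lambda(T)]$ by quantities of order $M$ with multiplicative factors depending on $\beta$ and $C_\omega$. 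The key inputs are $\Vert P(t)\Vert_{\Hh,\lambda(t)} \leq M$ and $\Vert V(t)\Vert_{\Hh,\lambda(t)} \leq C_\omega M$, which follow from $g\in {\cal B}^M_{\lambda_0,K,T}$ together with $\int u^2/\omega\,du = 1$ and $\int |u|/\omega\,du = C_\omega$, as well as the inequality $\Vert \partial_x^j \psi\Vert_\lambda \leq (j!)^2 \Vert \psi\Vert_{\Hh,\lambda}$ that produces the factor $4$ when bounding $\partial_x^2 P$.

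Finally, I would introduce $\mathcal{Y}(t,\lambda) := A(t,\lambda) + b\,B(t,\lambda)$ for a carefully chosen $b > 0$, collapsing the system into a single transport-type inequality $\partial_t \mathcal{Y} \leq b_1 \mathcal{Y} + b_2 \partial_\lambda \mathcal{Y}$; evaluation along the characteristic $\gamma(t) := \lambda(T) - b_2 t$ turns this into a standard Gronwall inequality, whose zero initial datum forces $\mathcal{Y} \equiv 0$ on $[0,\lambda(T)/b_2)$, hence $\bar\rho \equiv \partial_x V \equiv 0$ on that interval. The main technical obstacle is calibrating $b$ (essentially of order $\lambda(T)/(MT)$ up to multiplicative factors involving $\beta$ and $C_\omega$) so that $\lambda(T)/b_2 \geq T$, thereby extending the identity to the full time interval $[0,T]$; the smallness conditions (a)-(b) of Theorem \ref{thm:ExistenceNonlinearEspCond} and the finiteness of $C_\omega$ guaranteed by \Hw\ are precisely what make this calibration possible. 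Once \Hu{t} is established for all $t$, the identity $P(t,x) = -\int_\er u^2 f\,du$ in dimension one automatically upgrades the resulting $f$ to a solution of \eqref{eq:CVFP} for $\beta > 0$.
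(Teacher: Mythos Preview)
Your proposal is correct and follows essentially the same route as the paper. The paper likewise treats the cases $\alpha=1$ and $\alpha=0$ via Lemma~\ref{lem:relatedconditions}(d), sets $A=\Vert\bar\rho\Vert_\lambda$, $B=\Vert\partial_x V\Vert_\lambda$, and reduces to the Gronwall-along-characteristics argument of Corollary~\ref{coro:UniformMass}; the only cosmetic difference is that the paper writes the extra term for $\alpha=1$ as $(\beta B)\,A$ (bounding $B$ a priori via $\Vert\partial_x V\Vert_\lambda\le C_\omega M$) rather than your $c_3 B$ (bounding $A$ a priori), but either grouping closes the inequality in the same way.
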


 \begin{remark}\label{Tsmallbeta}
Let  $f_0$   be a function of class $   {\cal C}^{\infty}$,  $C_0,\bar{\lambda}>0$ , $n,m \in \NN$ be numbers satisfying  condition \eqref{condf_0d1}  for  every $k,l\geq 0$ and assume that, moreover, for some  $\lambda_0<\min\{\bar{\lambda},\frac{1}{4}\}$ one has
$$C_0< \kappa_0'(\bar{\lambda},s):=\frac{1}{2\kappa(s)e^{\bar{\lambda}} \mu (\bar{\lambda}, m+n+1)}\frac{\ln 2}{(16+  \Vert\ln(\omega)\Vert_{\tHh,\lambda_{0}} ) } \frac{1}{(1+ C_{\omega}\alpha \beta)}$$
for $\omega$ as in Lemma   \ref{analyticw} i). Choosing  $M$  as in Remark \ref{rem:Tsmall}, one similarly checks  that
\begin{multline*}\kappa_1'(C_0,\bar{\lambda},s):=  \min \bigg\{ \frac{(1+\beta) \lambda_0}{ 1 +4\gamma_0 +(1+\beta) (1 + \lambda_0 ) + (16+\alpha\beta)M} ,  \frac{2 \bar{\lambda} \mu(\overline{\lambda},m+n+1)}{ \mu(\overline{\lambda},m+n+2)} ,1,  \\
   -\frac{\ln ( M  (1+\gamma_0)(1+ C_{\omega}\alpha \beta)  ) }{ M (1+ C_{\omega}\alpha \beta)    \gamma_0 +\hat{\gamma}_1},
\frac{\ln  2 - M (16+\gamma_0)}{\hat{\gamma}_1+16\gamma_0 + \alpha \beta C_{\omega} M }   \bigg\}>0,
\end{multline*}
 and  that the assumptions  of Theorem \ref{thm:ExistenceNonlinearEspCond} hold for  $T<\kappa_1'(C_0,\bar{\lambda},s)$ and $K=  \frac{1  +4\gamma_0+ M(16+\alpha \beta)}{ 1+\beta} + \lambda_0  $. \end{remark}

Most of the computations required  in the proofs are the same as in the previous section, so we only provide  details about the additional terms that the case $\beta>0$ requires to deal with.

\begin{proof}[Proof of Theorem \ref{thm:ExistenceNonlinearEspCond}]
Consider the linear equation  obtained by respectively replacing in \eqref{eq:VFPwK} the functions $P$ and  $V$  by fixed given functions $Q, H: [0, T]\times \er \to \er$:
\renewcommand{\theequation}{{FP$\omega$K}}
\begin{equation}\label{eq:FPwK}
\left\{
\begin{aligned}
&\partial_{t}g(t,x,u)+ u \partial_{x}g
-\left[\partial_{x}Q+\beta (u-\alpha H)-\partial_{u}\ln(\omega)\right]\partial_{u}g -\frac{\sigma^2}{2}\partial^{2}_{u}g=
g\left[ \partial_{x}Q- \alpha\beta H\partial_{u}\ln(\omega)\right]-g\hat{h},\\
&\quad\quad\mbox{ on }(0,T]\times\er^2,\\
&g(0,x,u)=g_{0}(x,u)\mbox{ on }\er^2,
\end{aligned}\right.
\end{equation}
\renewcommand{\theequation}{\thesection.\arabic{equation}}

First we notice that
$$ \partial^{k}_{x}\partial^{l}_{u}(u\partial_{u}g(t,x,u))=\sum_{n=0}^{l}C^{n}_{l}(\partial^{n}_{u}u)(\partial_{u}\partial^{k}_{x}\partial^{l-n}_{u}g(t,x,u))=u\partial^{k}_{x}\partial^{l+1}_{u}g(t,x,u)+l\partial^{k}_{x}\partial^{l}_{u}g(t,x,u).$$
Therefore,  application of the  differential operator $\partial^{k}_{x}\partial^{l}_{u}$ to  the linear equation \eqref{eq:FPwK} yields the identity

\begin{align*}
&\partial_{t} \partial^{k}_{x}\partial^{l}_{u}g+u\partial_{x} ( \partial^{k}_{x}\partial^{l}_{u}g)- \left(\partial_{x}Q(t,x)-\partial_{u}\ln \omega
+ \beta  (u-\alpha V) \right) \partial_{u} (\partial^{k}_{x}\partial^{l}_{u}g)-\frac{\sigma^2}{2}\partial^{2}_{u}(\partial^{k}_{x}\partial^{l}_{u}g)  \\
&= \1_{\{l\geq 1\}}\beta l\partial^{k}_{x}\partial^{l}_{u}g-l\partial^{k+1}_{x}\partial^{l-1}_{u}g+  \1_{\{k\geq 1\}} \sum_{m=0}^{k-1}C^{m}_{k} \partial^{k-m}_{x}(\partial_{x}Q- \alpha\beta H)
\partial_{u} (\partial^{m}_{x}\partial^{l}_{u}g)
- \1_{\{l\geq 1\}} \sum_{n=0}^{l-1}C^{n}_{l}\left(\partial^{l-n+1}_{u}\ln\omega\partial^{n+1}_{u}\partial^{k}_{x}g\right)\\
&\quad+\sum_{n=0}^{l}\sum_{m=0}^{k}C^{n}_{l}C^{m}_{k}\left(\partial^{k-m}_{x} \partial_x Q -\alpha\beta H\right) \partial^{l-n+1}_{u}\ln\omega
 \partial^{m}_{x}\partial^{n}_{u}g
+\sum_{n=0}^{l}C^{n}_{l} \partial^{k}_{x}\partial^{n}_{u}g\partial^{l-n}_{u}\hat{h}.
\end{align*}

By the maximum principle we deduce that
for all  $A\in\NN$ and $a\in\{0,\ldots,A\}$, a smooth solution $g$ to equation \eqref{eq:FPwK} must  satisfy
\begin{align*}
\frac{d}{dt}\Vert g(t) \Vert_{\lambda,a;A}&\leq\lambda(1+\beta)\Vert g(t)\Vert_{\lambda,a+1;A}
+a(1+\beta) \Vert g(t)\Vert_{\lambda,a;A}\\
&\quad+\frac{d^{a}}{d\lambda^{a}}\bigg(\Vert g(t)\Vert_{\lambda,1;A}\left(\Vert Q(t)\Vert_{\lambda,1;A}+
 \alpha \beta \Vert H (t)\Vert_{\lambda,0;A} +\Vert \ln(\omega)\Vert_{\lambda,1;A}
\right)\bigg)\\
&\quad+\frac{d^{a}}{d\lambda^{a}}\bigg(\Vert g(t)\Vert_{\lambda,0;A}\left[\Vert \hat{h}\Vert_{\lambda,0;A}
+(\Vert Q(t)\Vert_{\lambda,1;A}+\alpha \beta \Vert H (t)\Vert_{\lambda,0;A})
\Vert \ln(\omega)\Vert_{\lambda,1;A}\right]\bigg).
\end{align*}

By similar arguments as in the proofs of Lemmas \ref{lemma:NormProductTwoFunctions}   and \ref{lemma:NormProductThreeFunctions}, in  Appendix \ref{sec:LemmasProof}
we also establish
\begin{lemma}\label{lemma:NormProductTwoFunctionsEspCond}
 \begin{itemize}
\item[(i)]  Suppose that  for some  $\bar{\lambda}>0$  we have  $f\in \Hh(\bar{\lambda})$ and $v\in \tHh(\bar{\lambda})$. Then,  for all $\lambda\in [0,\bar{\lambda})$  one has

\begin{equation*}
\sum_{a\in\NN}\frac{1}{(a!)^{2}}\frac{d^{a}}{d\lambda^{a}}\left(\Vert f\Vert_{\lambda,0}\Vert v \Vert_{\lambda,1}\right)\leq
 \Vert f\Vert_{\Hh,\lambda}\Vert v \Vert_{\tHh,\lambda}.
\end{equation*}
\item[(ii)] Suppose that  for some $\bar{\lambda}>0$, $f,w \in \Hh(\bar{\lambda})$ and $v\in \tHh(\bar{\lambda})$. Then,   for all $\lambda\in [0,\bar{\lambda})$ one has
\begin{equation*}
\sum_{a\in\NN}\frac{1}{(a!)^{2}}\frac{d^{a}}{d\lambda^{a}}\left(\Vert f\Vert_{\lambda,0}\Vert w\Vert_{\lambda,0}\Vert v \Vert_{\lambda,1}\right)
\leq \Vert f\Vert_{\Hh,\lambda}\Vert w\Vert_{\Hh,\lambda}\Vert v\Vert_{\tHh,\lambda}.
\end{equation*}
\end{itemize}
\end{lemma}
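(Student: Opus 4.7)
\bigskip

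\noindent\textbf{Proof proposal for Lemma \ref{lemma:NormProductTwoFunctionsEspCond}.}

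The plan is to proceed in parallel with the proofs of Lemmas \ref{lemma:NormProductThreeFunctions} and \ref{lemma:NormProductTwoFunctions}: expand the $a$-th $\lambda$-derivative of the product by Leibniz's formula, interchange the order of summation so that the index $a$ disappears, and reduce the remaining inequality to an elementary bound on multinomial coefficients. The main (and only) obstacle is making the reindexing precise; once this is done the estimate falls out mechanically. Throughout, I use the basic identity $\frac{d^b}{d\lambda^b}\|\varphi\|_{\lambda,c}=\|\varphi\|_{\lambda,c+b}$, valid for any analytic $\varphi$ and any $b,c\in\NN$.

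For part (i), Leibniz's rule gives
\[
\frac{d^{a}}{d\lambda^{a}}\bigl(\|f\|_{\lambda,0}\,\|v\|_{\lambda,1}\bigr)=\sum_{r=0}^{a}\binom{a}{r}\,\|f\|_{\lambda,r}\,\|v\|_{\lambda,a-r+1}.
\]
Dividing by $(a!)^{2}$, summing over $a\geq 0$, and substituting $p=r$, $q=a-r+1$ (so that $q\geq 1$ and $a=p+q-1$), the left-hand side of (i) becomes
\[
\sum_{p\geq 0,\,q\geq 1}\frac{1}{p!\,(q-1)!\,(p+q-1)!}\,\|f\|_{\lambda,p}\,\|v\|_{\lambda,q}.
\]
On the other hand,
\[
\|f\|_{\Hh,\lambda}\,\|v\|_{\tHh,\lambda}=\sum_{p\geq 0,\,q\geq 1}\frac{q^{2}}{(p!)^{2}(q!)^{2}}\,\|f\|_{\lambda,p}\,\|v\|_{\lambda,q},
\]
so the desired inequality reduces to the termwise bound $\frac{p!\,q!}{(p+q-1)!}\leq q$ for every $p\geq 0$, $q\geq 1$. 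This in turn is immediate from $\binom{p+q-1}{p}\geq 1$, i.e.\ $(p+q-1)!\geq p!\,(q-1)!$, after multiplying by $q$.

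For part (ii) the argument is completely analogous, using the trinomial Leibniz expansion
\[
\frac{d^{a}}{d\lambda^{a}}\bigl(\|f\|_{\lambda,0}\,\|w\|_{\lambda,0}\,\|v\|_{\lambda,1}\bigr)=\sum_{\substack{r,s,t\geq 0\\ r+s+t=a}}\frac{a!}{r!\,s!\,t!}\,\|f\|_{\lambda,r}\,\|w\|_{\lambda,s}\,\|v\|_{\lambda,t+1}.
\]
Again dividing by $(a!)^{2}$, summing over $a\geq 0$, and setting $p=r$, $q=s$, $k=t+1\geq 1$ (so $a=p+q+k-1$), one rewrites the left-hand side of (ii) as
\[
\sum_{\substack{p,q\geq 0\\ k\geq 1}}\frac{1}{p!\,q!\,(k-1)!\,(p+q+k-1)!}\,\|f\|_{\lambda,p}\,\|w\|_{\lambda,q}\,\|v\|_{\lambda,k},
\]
while $\|f\|_{\Hh,\lambda}\,\|w\|_{\Hh,\lambda}\,\|v\|_{\tHh,\lambda}$ is the same sum with coefficient $\frac{k^{2}}{(p!)^{2}(q!)^{2}(k!)^{2}}$. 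The required termwise inequality is therefore $\frac{p!\,q!\,k!}{(p+q+k-1)!}\leq k$, which is again immediate from the fact that the multinomial coefficient $\binom{p+q+k-1}{p,\,q,\,k-1}=\frac{(p+q+k-1)!}{p!\,q!\,(k-1)!}$ is at least $1$. Finally, the finiteness assumptions $f,w\in\Hh(\bar\lambda)$ and $v\in\tHh(\bar\lambda)$ together with $\lambda<\bar\lambda$ justify the interchange of summations (Fubini–Tonelli, all terms nonnegative), and the proof is complete.
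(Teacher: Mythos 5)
Your proof is correct and follows essentially the same route as the paper's: a Leibniz expansion of the $\lambda$-derivatives, an interchange/reindexing of the sums, and a termwise reduction to the bound $\frac{p!\,q!\,(k-1)!}{(p+q+k-1)!}\leq 1$, which is exactly the paper's inequality $\frac{a!\,r!\,q!}{(a+r+q)!}\leq 1$ after relabeling. Your use of the trinomial Leibniz formula in part (ii), instead of the paper's iterated binomial expansion borrowed from Lemma \ref{lemma:NormProductThreeFunctions}, is only a cosmetic streamlining.
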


Truncated version of these estimates,  combined with the already obtained ones yield:
\begin{prop}\label{prop:TimeEvolutionNormSolutionEspCond}
For each $A\in\NN$, the ${\cal C}^{1,\infty}$ function  $g$ solution to \eqref{eq:FPwK} satisfies the estimate
\begin{multline*}
\frac{d}{dt}\Vert g(t)\Vert_{\Hh,\lambda(t);A}\leq \left((1+\beta)[\lambda(t)+1+\lambda'(t)]+4\gamma_{0}+16\Vert Q(t)\Vert_{\Hh,\lambda(t)} + \alpha \beta  \Vert H(t)\Vert_{\Hh,\lambda(t)}  \right)
\Vert g(t)\Vert_{\tHh,\lambda(t);A} \\
+\left(\gamma_{1}+16\gamma_{0} +\left(\gamma_{0}+16\right)\Vert Q(t)\Vert_{\tHh,\lambda(t)}+ \alpha \beta \gamma_0 \Vert H(t)\Vert_{\Hh,\lambda(t)}
\right)\Vert g(t)\Vert_{\Hh,\lambda(t);A},
\end{multline*}
where $\gamma_{0}:=\Vert\ln(\omega)\Vert_{\tHh,\lambda_{0}}$ and
$\hat{\gamma}_{1}:=\Vert \hat{h}\Vert_{\Hh,\lambda_{0}}$.
\end{prop}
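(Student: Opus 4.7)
My approach would follow the blueprint of the proof of Proposition \ref{prop:TimeEvolutionNormSolution} and adapt it to the presence of the extra terms $\beta(u-\alpha H)\partial_u g$ on the left-hand side and $-\alpha\beta H \partial_u(\ln\omega)\, g$ on the right-hand side of \eqref{eq:FPwK}, together with the replacement of $h$ by $\hat h$. First I would apply $\partial_x^k \partial_u^l$ to \eqref{eq:FPwK} and identify the genuinely new contributions. The term $\beta u \partial_u g$ yields $\beta u \partial_x^k \partial_u^{l+1} g + \beta l \partial_x^k \partial_u^l g$, where the first piece is absorbed into the drift operator and enters the maximum principle harmlessly. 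The term $\alpha\beta H \partial_u g$ expands by Leibniz in $x$ alone, producing $\alpha\beta \sum_{m=0}^k C_k^m (\partial_x^{k-m} H)(\partial_x^m \partial_u^{l+1} g)$, which is treated analogously to $\partial_x Q \partial_u g$ but with one fewer $x$-derivative falling on $H$. Finally, the source $\alpha\beta H \partial_u(\ln\omega)\, g$ becomes the triple Leibniz expansion $\alpha\beta \sum_{m,n} C_k^m C_l^n (\partial_x^{k-m} H)(\partial_u^{l-n+1}\ln\omega)(\partial_x^m \partial_u^n g)$.

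Next, I would apply the maximum principle of Theorem \ref{thm:WellposedClassicLinearVFP} to the linear kinetic Fokker-Planck equation satisfied by $\partial_x^k \partial_u^l g$, multiply the resulting inequality by $\tfrac{d^a}{d\lambda^a}\tfrac{\lambda^{k+l}}{k!l!}$, and sum over $k,l\in\mathcal A$ with $k+l\geq a$, exactly as in the proof of Lemma \ref{lem:AnalyticalEstim1d}. The only essentially new computation is the one triggered by $\beta l \|\partial_x^k \partial_u^l g\|_\infty$, which after summation gives $\beta\lambda \|\partial_u g\|_{\lambda,0;A}$ and, upon $a$-fold $\lambda$-differentiation, produces $\beta\lambda \|g\|_{\lambda,a+1;A} + a\beta \|g\|_{\lambda,a;A}$; combined with the identical expression generated by $u\partial_x g$, this is the source of the overall $(1+\beta)$ factor in the statement.

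Dividing by $(a!)^2$ and summing over $a\in\mathcal A$, I would then bound each term via the already-available lemmas. The products $\|g\|_{\lambda,1}\|Q\|_{\lambda,1}$, $\|g\|_{\lambda,1}\|\ln\omega\|_{\lambda,1}$ and $\|g\|_{\lambda,0}\|Q\|_{\lambda,1}\|\ln\omega\|_{\lambda,1}$ are handled by Lemmas \ref{lemma:NormProductTwoFunctions} and \ref{lemma:NormProductThreeFunctions} exactly as in Proposition \ref{prop:TimeEvolutionNormSolution}. The new product $\alpha\beta\|g\|_{\lambda,1}\|H\|_{\lambda,0}$ is treated by a symmetric application of Lemma \ref{lemma:NormProductTwoFunctionsEspCond}(i), giving the bound $\alpha\beta\|g\|_{\tHh,\lambda}\|H\|_{\Hh,\lambda}$ which feeds into the coefficient of $\|g\|_{\tHh,\lambda(t);A}$; and the triple product $\alpha\beta\|g\|_{\lambda,0}\|H\|_{\lambda,0}\|\ln\omega\|_{\lambda,1}$ is treated by Lemma \ref{lemma:NormProductTwoFunctionsEspCond}(ii), yielding $\alpha\beta\gamma_0\|g\|_{\Hh,\lambda}\|H\|_{\Hh,\lambda}$ which feeds into the coefficient of $\|g\|_{\Hh,\lambda(t);A}$. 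Replacing the uniform bound $\|h\|_{\Hh,\lambda(t)}\leq\gamma_1$ by the corresponding $\|\hat h\|_{\Hh,\lambda(t)}\leq\hat\gamma_1$ closes the estimate.

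The main obstacle is essentially bookkeeping: one must organize the Leibniz expansions and index shifts so that each new $H$-dependent contribution distributes cleanly between the $\|g\|_{\tHh,\lambda(t);A}$- and $\|g\|_{\Hh,\lambda(t);A}$-coefficients, matching the precise constants $\alpha\beta\|H\|_{\Hh,\lambda}$ and $\alpha\beta\gamma_0\|H\|_{\Hh,\lambda}$ claimed in the proposition. Since all the underlying one-dimensional analytic estimates have already been developed for the $\beta=0$ case, no genuinely new analytic idea is required; the argument is an almost mechanical extension of Proposition \ref{prop:TimeEvolutionNormSolution}, with the $(1+\beta)$ factor arising from the extra Ornstein–Uhlenbeck drift and the $\alpha\beta\|H\|_{\Hh,\lambda}$-terms from the conditional-expectation coupling.
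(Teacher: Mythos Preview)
Your proposal is correct and follows essentially the same approach as the paper: the paper itself does not write out a detailed proof of this proposition but simply states that the truncated versions of Lemma~\ref{lemma:NormProductTwoFunctionsEspCond}, combined with the estimates already obtained in the $\beta=0$ case (Proposition~\ref{prop:TimeEvolutionNormSolution}), yield the result. Your account of how the $(1+\beta)$ factor arises from $\beta l\,\partial_x^k\partial_u^l g$, and your routing of the $H$-dependent products through Lemma~\ref{lemma:NormProductTwoFunctionsEspCond}~(i) and~(ii), match exactly what the paper intends; you also correctly note that the constant should be $\hat\gamma_1$ rather than $\gamma_1$ (the statement in the paper contains this typo, as is clear from its subsequent use).
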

Applying  Gronwall's lemma and using the fact that
 $$
\lambda(t)+1+\lambda'(t)+ 4\gamma_{0}+16\Vert P(t)\Vert_{\Hh,\lambda(t)}+ \alpha  \beta  \Vert V(t)\Vert_{\Hh,\lambda(t)}\leq (1+\beta)[ \lambda_{0}-K] +4\gamma_{0}+(16+\alpha \beta)M_1  $$
we then obtain that, for all $t\in[0,T]$ and $A\in \NN$,
\begin{equation}
\begin{aligned}
\label{proofst1:ClosureEspCond}
& \Vert g(t)\Vert_{\Hh,\lambda(t);A} \leq \Vert g_{0}\Vert_{\Hh,\lambda_{0}}\exp\left\{ T (\hat{\gamma}_{1}+16 \gamma_{0} + \alpha \beta\gamma_0 M_1)+(16+\gamma_{0})M_2  \right\}\\
& \quad + \exp\left\{ T (\hat{\gamma}_{1}+16 \gamma_{0}+  \alpha  \beta\gamma_0 M_1 )+(16+\gamma_{0})M_2  \right\}
((1+\beta)[\lambda_{0}-K] +4\gamma_{0}+(16+\alpha  \beta) M_1) \int_{0}^{t} \Vert g(s)\Vert_{\tHh,\lambda(s);A}ds.\\
\end{aligned}
\end{equation}
 From  assumptions a)  and b)  of Theorem \ref{thm:ExistenceNonlinearEspCond} we can choose    $K>0$  such that $K< \frac{\lambda_0}{T}-1$   and
$$(1+\beta)( K - \lambda_{0}) - 4\gamma_{0}- (16+\alpha \beta) M_{1} \geq 1$$
in  which case  we obtain
\begin{equation*}
\begin{split}
 \Vert g(t)\Vert_{\Hh,\lambda(t);A} +   \int_{0}^{t} \Vert g(s)\Vert_{\tHh,\lambda(s);A}ds
 &\leq  \Vert g_{0}\Vert_{\Hh,\lambda_{0}} \exp\left\{ T (\hat{\gamma}_{1}+16 \gamma_{0}+ \alpha \beta\gamma_0 M_1 )+(16+\gamma_{0})M_2  \right\},
\end{split}
\end{equation*} and then
\begin{equation*}
\begin{split}
 \Vert g(t)\Vert_{\Hh,\lambda(t)} +   \int_{0}^{t} \Vert g(s)\Vert_{\tHh,\lambda(s)}ds
 &\leq  \Vert g_{0}\Vert_{\Hh,\lambda_{0}} \exp\left\{ T (\hat{\gamma}_{1}+16 \gamma_{0}+ \alpha \beta\gamma_0 M_1 )+(16+\gamma_{0})M_2  \right\}.
\end{split}
\end{equation*}

Therefore,  since for any function  $\varphi:[0,T]\times \er^2 \to \er$ of class ${\cal C}^{1,\infty}$   and every $t\in [0,T]$ we have
$$\left\|\int_{\er}\frac{u^2}{\omega(u)} \varphi(t,\cdot,u)du\right\|_{\lambda,a} \leq \left\|    \varphi(t,\cdot ,\cdot)\right\|_{\lambda,a}~\mbox{  and  }~ \left\|\int_{\er}\frac{|u|}{\omega(u)} \varphi(t,\cdot ,u)du\right\| _{\lambda,a}\leq C_{\omega}\left\|\varphi(t,\cdot ,\cdot ) \right\|_{\lambda,a}$$
for all $\lambda\geq 0$ and $a\in \NN$,  the mapping $\Phi$ associating with  a function $\varphi$ the solution $\Phi(\varphi)$ of equation \eqref{eq:FPwK} with the data
$$Q(t,x):=-\int_{\er}\frac{u^2}{\omega(u)} \varphi(t,x,u)du ~\mbox{ and  }~H(t,x):=\int_{\er}\frac{u}{\omega(u)}\varphi(t,x,u)du$$
satisfies the inclusion
 $$\Phi\left( {\cal B}^{M_1}_{\lambda_0,K,T} \cap
\widetilde{{\cal B}}^{M_2}_{\lambda_0,K,T} \right)\subseteq {\cal B}^{\hat{M}}_{\lambda_0,K,T} \cap
\widetilde{{\cal B}}^{\hat{M}}_{\lambda_0,K,T}$$
if $M_1, T,\lambda_0>0$ are as previously, $M_2>0$ is arbitrary and
$$\hat{M}=  \Vert g_{0}\Vert_{\Hh,\lambda_{0}} \exp\left\{ T (\hat{\gamma}_{1}+16 \gamma_{0}+ \alpha \beta\gamma_0 C_{\omega} M_1 )+(16+\gamma_{0}) M_2 \right\}.$$
In particular, one has  $\Phi\left( {\cal B}^{M}_{\lambda_0,K,T} \cap
\widetilde{{\cal B}}^{M}_{\lambda_0,K,T}\right)  \subseteq  {\cal B}^{M}_{\lambda_0,K,T} \cap
\widetilde{{\cal B}}^{M}_{\lambda_0,K,T} $  if in addition to conditions a) and b) of Theorem \ref{thm:ExistenceNonlinearEspCond}, the constants  $M>0$ and $T>0$  satisfy condition d).
Now, writing  $\bar{\Phi}:=\Phi(f_{1})-\Phi(f_{2})$,  $\bar{P}:=P_{1}-P_{2}$  and $\bar{V}:=V_1-V_2$ where
$$P_i(t,x):=-\int_{\er}\frac{u^2}{\omega(u)} f_i(t,\cdot ,u)\, du  \mbox{ and  } V_i(t,x):=   \int_{\er}\frac{u}{\omega(u)} f_i(t,\cdot,u)du \quad   i=1,2 ,$$
we have
\begin{align*}
&\partial_{t}(\partial^{k}_{x}\partial^{l}_{u}\bar{\Phi})+ u \partial_ x \left( \partial^{k}_{x}\partial^{l}_{u}\bar{\Phi}\right)-
\left( \partial_{x}Q_{1} - \partial_{u}\ln \omega (u) + \beta(u-\alpha V_1(t,x))\right)\partial_u \left(\partial^{k}_{x}\partial^{l}_{u}\bar{\Phi}\right)
-\frac{\sigma^2}{2}\partial^{2}_{u}(\partial^{k}_{x}\partial^{l}_{u}\bar{\Phi})\\
&=-l\partial^{k+1}_{x}\partial^{l-1}_{u}\bar{\Phi}+ \1_{\{k\geq 1\}}\sum_{m=0}^{k-1}C^{m}_{k}
 \partial^{k-m}_{x}\left(  \partial_x Q_{1}(t,x) - \alpha \beta V_1(t,x) \right)\partial^{m}_{x}\partial^{l+1}_{u}\bar{\Phi} -  \1_{\{l\geq 1\}} \sum_{n=0}^{l-1}C^{n}_{l}
 \left(\partial^{l-n+1}_{u}\ln(\omega)\right)\partial^{k}_{x}\partial^{n+1}_{u}\bar{\Phi} \\
&\quad +\sum_{n=0}^{l} \sum_{m=0}^{k} C^{m}_{k}C^{n}_{l}
\left(\partial^{k-m}_{x}\partial^{n}_{u}\Phi(f_{2})\right) \left(\partial^{l-n+1}_{u}\ln(\omega)\right)\partial^{m}_{x}\left(\partial_x \bar{P}-\alpha \beta \bar{V}(t,x) \right)\\
&\quad + \sum_{m=0}^{k}C^{m}_{k}\left(\partial^{k-m}_{x}\partial^{l+1}_{u}\Phi(f_{2})\right) \partial^{m}_{x}\left(\partial_x \bar{P}-\alpha \beta \bar{V}(t,x) \right) )\\
&\quad + \sum_{n=0}^{l} \sum_{m=0}^{k} C^{m}_{k}C^{n}_{l}
\partial^{k-m}_{x}\left(\partial_x P_1 -\alpha \beta V_1(t,x) \right)
 \partial^{l-n+1}_{u}\ln \omega(u)
 \partial^{m}_{x}\partial^{n}_{u}\bar{\Phi} +\sum_{n=0}^{l}C^{n}_{l}
\left(\partial^{k}_{x}\partial^{n}_{u}\bar{\Phi}\right)\partial^{l-n}_{u}\hat{h}(u) ;
\end{align*}
From this and the  maximum  principle we deduce that
$\frac{d}{dt}\Vert\partial^{k}_{x}\partial^{l}_{u}\bar{\Phi}(t)\Vert_{\infty}$ is bounded above by
\begin{align*}
& l\Vert\partial^{k+1}_{x}\partial^{l-1}_{u}\bar{\Phi}(t)\Vert_{\infty}+ \1_{\{k\geq 1\}} \sum_{m=0}^{k-1}C^{m}_{k} ( \Vert \partial^{k-m+1}_{x}f_{1}(t)\Vert_{\infty} + C_{\omega} \alpha\beta \Vert \partial^{k-m}_{x}f_{1}(t)\Vert_{\infty} ) \Vert \partial^{m}_{x}\partial^{l+1}_{u}\bar{\Phi}(t)\Vert_{\infty}\\
&\quad + \1_{\{l\geq 1\}} \sum_{n=0}^{l-1}C^{n}_{l}\Vert\partial^{l-n+1}_{u}\ln(\omega)\Vert_{\infty}\Vert\partial^{k}_{x}\partial^{n+1}_{u}\bar{\Phi}(t)\Vert_{\infty}\\
&\quad + \sum_{n=0}^{l}\sum_{m=0}^{k}C^{m}_{k}C^{n}_{l}\Vert\partial^{l-n+1}_{u}\ln(\omega)\Vert_{\infty}
\Vert\partial^{k-m}_{x}\partial^{n}_{u}\Phi(f_{2})(t)\Vert_{\infty}
( \Vert \partial^{m+1}_{x}\bar{f}(t)\Vert_{\infty} + C_{\omega}  \alpha\beta \Vert \partial^{m}_{x}\bar{f}(t)\Vert_{\infty} )\\
&\quad + \sum_{m=0}^{k}C^{m}_{k}\Vert \partial^{k-m}_{x}\partial^{l+1}_{u}\Phi(f_{2})(t)\Vert_{\infty}
( \Vert \partial^{m+1}_{x}\bar{f}(t)\Vert_{\infty} + C_{\omega}  \alpha\beta \Vert \partial^{m}_{x}\bar{f}(t)\Vert_{\infty} )\\
&\quad +\sum_{n=0}^{l}\sum_{m=0}^{k}C^{m}_{k}C^{n}_{l}\Vert\partial^{l-n+1}_{u}\ln(\omega)\Vert_{\infty}
( \Vert \partial^{k-m+1}_{x}f_1 (t)\Vert_{\infty} + C_{\omega} \alpha\beta \Vert \partial^{k-m}_{x}f_1 (t)\Vert_{\infty} )
\Vert\partial^{m}_{x}\partial^{n}_{u}\bar{\Phi}(t)\Vert_{\infty}\\
&\quad +\sum_{n=0}^{l}C^{n}_{l}\Vert\partial^{k}_{x}\partial^{n}_{u}\bar{\Phi}(t)\Vert_{\infty}\Vert\partial^{l-n}_{u}\hat{h}(u)\Vert_{\infty}.
\end{align*}
This yields
\begin{equation}
\begin{aligned}
\label{proofst2:FixedPointEspCond}
\frac{d}{dt}\Vert\bar{\Phi}(t)\Vert_{\lambda(t),0}&\leq
\left(\lambda(t)+\lambda'(t)+ \Vert f_{1}(t)\Vert_{\lambda(t),1}+ C_{\omega}  \alpha\beta \Vert f_{1}(t)\Vert_{\lambda(t),0} + \Vert \ln(\omega)\Vert_{\lambda(t),1}\right)\Vert\bar{\Phi}(t)\Vert_{\lambda(t),1} \\
&\quad+\Vert\bar{\Phi}(t)\Vert_{\lambda(t),0}\left[ \left(  \Vert f_{1}(t)\Vert_{\lambda(t),1}+ C_{\omega}  \alpha\beta \Vert f_{1}(t)\Vert_{\lambda(t),0} \right) \Vert \ln(\omega)\Vert_{\lambda(t),1}+\Vert \hat{h}\Vert_{\lambda(t),0}\right]\\
&\quad+\left( \Vert \bar{f}(t)\Vert_{\lambda(t),1}+ C_{\omega}  \alpha\beta \Vert \bar{f}(t)\Vert_{\lambda(t),0} \right) \left(\Vert\Phi(f_{2})(t)\Vert_{\lambda(t),1}+\Vert\Phi(f_{2})(t)\Vert_{\lambda(t),0}\Vert \ln(\omega)\Vert_{\lambda(t),1}\right)
\end{aligned}
\end{equation}
and therefore, for all $t\in [0,T]$,
\begin{equation}
\begin{aligned}\label{proofst3:FixedPointEspCond}
\frac{d}{dt}\Vert\bar{\Phi}(t)\Vert_{\lambda(t),0}&\leq \left( (1+\beta)( \lambda_0- K)  +(1+C_{\omega}  \alpha\beta) M  +\gamma_0 \right)\Vert \bar{\Phi}(t)\Vert_{\lambda(t),1}\\
&\quad+\Vert\bar{\Phi}(t)\Vert_{\lambda(t),0}\left( M(1+C_{\omega}  \alpha\beta)   \gamma_0+ \hat{\gamma}_1 \right)\\
&\quad+\left( \Vert \bar{f}(t)\Vert_{\lambda(t),1}+ C_{\omega}  \alpha\beta \Vert \bar{f}(t)\Vert_{\lambda(t),0} \right) M(1+\gamma_0 ).
\end{aligned}
\end{equation}
Since our assumptions allow us to choose $K\in (0,\frac{\lambda_0}{T}-1)$  such that
\begin{equation*}
(1+\beta)( K- \lambda_0)  - (1+C_{\omega}  \alpha\beta) M  - \gamma_0 >1,
\end{equation*}
we get from the previous after applying Gronwall's lemma that
\begin{equation*}
\begin{split}
\Vert\bar{\Phi}(t)\Vert_{\lambda(t),0} +\int_{0}^{t}\Vert \bar{\Phi}(s)\Vert_{\lambda(t),1}\,ds
&\leq  \Vert\bar{\Phi}(t)\Vert_{\lambda(t),0} +\exp\left\{T\left( M(1+C_{\omega}  \alpha\beta)   \gamma_0+ \hat{\gamma}_1 \right)
\right\}
 \int_{0}^{t}\Vert \bar{\Phi}(s)\Vert_{\lambda(s),1}\,ds\\
& \leq M(1+\gamma_{0})\exp\left\{
\left( M(1+C_{\omega}  \alpha\beta)   \gamma_0+ \hat{\gamma}_1 \right)
T\right\}\int_{0}^{T}\Vert \bar{f}(t)\Vert_{\lambda(t),1}+ C_{\omega}  \alpha\beta \Vert \bar{f}(t)\Vert_{\lambda(t),0} dt.\\
\end{split}
\end{equation*}
This implies that $\Phi: {\cal B}^{M}_{\lambda_0,K,T} \cap
\widetilde{{\cal B}}^{M}_{\lambda_0,K,T} \to  {\cal B}^{M}_{\lambda_0,K,T} \cap
\widetilde{{\cal B}}^{M}_{\lambda_0,K,T} $  is a contraction  for the norm
$$\max\left\{\max_{t\in [0, T]} \Vert \psi(t)\Vert_{\lambda(t),0} \, , \int_{0}^{T}\Vert \psi(t)\Vert_{\lambda(t),1}\,dt \right\}$$
under condition c) of Theorem \ref{thm:ExistenceNonlinearEspCond}, and condition d) allows us to conclude the existence of a solution starting from  $g_0$.
\end{proof}

\begin{proof}[Proof of Corollary \ref{coro:UniformMassEspCond}]
By  Lemma \ref{lem:relatedconditions}  we now obtain in the case $\alpha=1$ that, for all $(t,x) \in (0,T]\times \er$,
\begin{equation*}
\left\{
\begin{aligned}
\partial_{t}\overline{\rho}(t,x)&=- \partial_{x} V(t,x),\\
\partial_{t}(\partial_{x}V(t,x))&=- \partial_{x}(\overline{\rho}(t,x)\partial_{x}P(t,x))+\beta \partial_{x}  \left( V(t,x) \overline{\rho}(t,x)\right) ,
\end{aligned}
\right.
\end{equation*}
where $\overline{\rho}(t,x):=\rho(t,x)-1=\int_{\er} f(t,x,u)du -1 $,  $V(t,x):=\int_{\er}u f(t,x,u)\,du$ and  \\ $\partial_x  P(t,x)= - \partial_x  \int_{\er}u^{2}f(t,x,u)\,du$. Hence,
for all $\lambda>0$,
\begin{equation}
\label{proofst1:LocalPropertiesEspCond}
\left\{
\begin{aligned}
\partial_{t}\Vert\overline{\rho}(t)\Vert_{\lambda}&\leq \Vert  \partial_{x} V(t)\Vert_{\lambda},\\
\partial_{t}\Vert \partial_{x} V(t)\Vert_{\lambda}&\leq   \Vert\partial_{x}P(t)\Vert_{\lambda}\Vert\partial_{x}\overline{\rho}(t)\Vert_{\lambda}+
\Vert\partial^{2}_{x}P(t)\Vert_{\lambda}\Vert \overline{\rho}(t)\Vert_{\lambda} + \beta\left( \Vert\partial_{x} V(t)\Vert_{\lambda}\Vert \overline{\rho}(t)\Vert_{\lambda}+
\Vert  V (t)\Vert_{\lambda}\Vert \partial_{x} \overline{\rho}(t)\Vert_{\lambda}\right) .
\end{aligned}
\right.
\end{equation}

With $A(t,\lambda):=\Vert\overline{\rho}(t)\Vert_{\lambda}$ and $B(t,\lambda):=\Vert \partial_{x} V(t)\Vert_{\lambda}$ we have
\begin{equation*}
\label{proofst2:LocalPropertiesEspCond}
\left\{
\begin{aligned}
\partial_{t}A(t,\lambda)&\leq B(t,\lambda),\\
\partial_{t}B(t,\lambda)&\leq ( \Vert\partial_{x}P(t)\Vert_{\lambda}+\beta\Vert  V (t)\Vert_{\lambda})\partial_{\lambda} A(t,\lambda)+
(\Vert\partial^{2}_{x}P(t)\Vert_{\lambda}+\beta  B(t,\lambda)) A(t,\lambda).
\end{aligned}
\right.
\end{equation*}
From these inequalities,  since the terms in parentheses are bounded, the conclusion is obtained by similar arguments as in the case $\beta=0$.
If now $\alpha=0$, we obtain the equations, for all $(t,x) \in (0,T]\times \er$,
\begin{equation*}
\left\{
\begin{aligned}
\partial_{t}\overline{\rho}(t,x)&=- \partial_{x} V(t,x),\\
\partial_{t}(\partial_{x}V(t,x))&=- \partial_{x}(\overline{\rho}(t,x)\partial_{x}P(t,x))+ \beta
\partial_{x}    V(t,x)
\end{aligned}
\right.
\end{equation*}
whit the same notation as before. This yields
\begin{equation*}
\left\{
\begin{aligned}
\partial_{t}A(t,\lambda)&\leq B(t,\lambda),\\
\partial_{t}B(t,\lambda)&\leq  \Vert\partial_{x}P(t)\Vert_{\lambda} \partial_{\lambda} A(t,\lambda)+
 \Vert\partial^{2}_{x}P(t)\Vert_{\lambda} A(t,\lambda) +\beta  B(t,\lambda).
\end{aligned}
\right.
\end{equation*}
Since the remainder of the proof in the case $\beta=0$ relies on the inequality satisfied by the sum  $\mathcal{Y}(t,\lambda):=  A(t,\lambda)+bB(t,\lambda)$, by suitably modifying the constants  therein one can conclude in a similar way.
\end{proof}


\section{Local solutions for the  incompressible Langevin SDE}\label{sec:SDE}
We finally briefly state the main consequence of the previous results for  the SDE \eqref{eq:generic_incompressible_lagrangian}.

\begin{corollary}
 Let $T>0$ be a time horizon  and $p_0:\TT\times\er \to \RR_+ $  a probability density such that
\begin{itemize}
\item[$\bullet$]   $  \int_{\er}p_0(x,u)du=1$ for all $x\in\TT$,
\item[$\bullet$] $ \partial_{x}\int_{\er}u p_0(x,u)du=0$  for all $x\in\TT$,
\item[$\bullet$]  $p_0$ (or equivalently its periodic extension to $\RR^2$)  and the constant $T>0$ satisfy  the assumptions of Theorem \ref{thm:ExistenceNonlinear} (resp.  Theorem \ref{thm:ExistenceNonlinearEspCond})).
\end{itemize}
Then there exists in $[0,T]$ a solution to the stochastic differential equation  \eqref{eq:generic_incompressible_lagrangian} in the case $\sigma\neq 0$ and $\beta=0$ (resp. $\beta\in \RR$).
\end{corollary}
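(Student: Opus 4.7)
The strategy is to first deploy the analytic PDE theory of Sections~\ref{1d_beta=0}--\ref{sec:Kinet1d} to produce a smooth density $f$, and then recast the SDE~\eqref{eq:generic_incompressible_lagrangian} as an ordinary (non-McKean) Itô equation with smooth coefficients $\partial_x P$ and $V$ frozen from $f$. The only genuinely probabilistic step, and the main obstacle, will be identifying the law of the resulting diffusion with $f$ itself.

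\emph{Step 1: build the PDE solution.} First I would apply Theorem~\ref{thm:ExistenceNonlinear} if $\beta=0$, or Theorem~\ref{thm:ExistenceNonlinearEspCond} if $\beta\in\er$, to the initial datum $p_0$ (viewed as $1$-periodic in $x$ on $\er^2$). This produces a $\mathcal{C}^{1,\infty}$ classical solution $f$ of \eqref{eq:VFP_1D} on $[0,T]\times\TT\times\er$. Corollary~\ref{coro:UniformMass} (resp.~Corollary~\ref{coro:UniformMassEspCond}) then propagates \Hu{t} in time, so that $f(t,\cdot,\cdot)$ is $1$-periodic in $x$ and satisfies $\int_\er f(t,x,u)du=1$ and $\partial_x\int_\er u\,f(t,x,u)du=0$ for every $t\in[0,T]$. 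Non-negativity $f\geq 0$ is obtained by regarding $f$ as solving a \emph{linear} kinetic Fokker-Planck equation with the smooth data $\partial_x P$ and $V$ read off from $f$, and invoking the maximum principle of Theorem~\ref{thm:WellposedClassicLinearVFP}. Hence $f(t,\cdot,\cdot)$ is a genuine probability density on $\TT\times\er$.

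\emph{Step 2: solve the SDE with frozen smooth coefficients.} With
\[P(t,x):=-\int_\er u^2 f(t,x,u)\,du,\qquad V(t,x):=\int_\er u\,f(t,x,u)\,du,\]
the weighted-analytic estimates on $g=\omega f$ (together with $\int u^2/\omega(u)du<\infty$) guarantee that $\partial_x P$ and $V$ are $\mathcal{C}^{1,\infty}$ on $[0,T]\times\TT$ with all spatial derivatives uniformly bounded. The drift
\[b(t,x,u):=-\partial_x P(t,x)-\beta\bigl(u-\alpha V(t,x)\bigr)\]
is thus Lipschitz in $(x,u)\in\TT\times\er$ uniformly in $t\in[0,T]$, with at most linear growth in $u$. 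The standard Cauchy-Lipschitz theorem for Itô SDEs then furnishes, for any random variable $(X_0,U_0)$ of law $p_0(x,u)\,dx\,du$ and an independent Brownian motion $W$, a unique strong solution $(X,U)$ on $[0,T]$ of
\[X_t=\Bigl[X_0+\int_0^t U_s\,ds\Bigr],\qquad U_t=U_0+\sigma W_t+\int_0^t b(s,X_s,U_s)\,ds.\]

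\emph{Step 3: identify the law with $f$ (the hard part).} The remaining task is to show that the time marginal $\mu_t:=\mathrm{law}(X_t,U_t)$ on $\TT\times\er$ admits density $f(t,\cdot,\cdot)$. Itô's formula applied to $\phi\in\mathcal{C}^\infty_c(\TT\times\er)$ shows that $\mu_t$ is a (measure-valued) weak solution of the \emph{linear} kinetic Fokker-Planck equation
\[\partial_t\mu+u\partial_x\mu+\partial_u\bigl(b(t,x,u)\mu\bigr)=\tfrac{\sigma^2}{2}\partial_u^2\mu,\qquad \mu\big|_{t=0}=p_0\,dx\,du,\]
while $f$ is a smooth classical, hence weak, solution of the same Cauchy problem. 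Since $\sigma\neq 0$ the equation is hypoelliptic and its coefficients are smooth and globally Lipschitz, so uniqueness of weak solutions (obtained by duality against the smooth solutions of the backward adjoint equation supplied by Theorem~\ref{thm:WellposedClassicLinearVFP}) forces $\mu_t=f(t,x,u)\,dx\,du$. From this, $\int_\er f(t,x,u)du=1$ gives $\mathrm{law}(X_t)=dx$, while $\EE(U_t|X_t=x)=V(t,x)$ matches the drift of the SDE with that of \eqref{eq:generic_incompressible_lagrangian}, with pressure $P(t,x)=-\int_\er u^2 f(t,x,u)du$. This uniqueness argument is the only step that is not a direct consequence of the analytic results of the previous sections, and it is the main technical obstacle; if one wishes to bypass any reliance on hypoelliptic uniqueness, the same conclusion can be reached by constructing $(X,U)$ directly from the stochastic characteristics of the frozen linear Fokker-Planck operator, which are well defined because $\sigma\neq 0$ and $b$ is Lipschitz.
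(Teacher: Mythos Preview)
Your overall plan coincides with the paper's: build $f$ via the analytic theory, freeze the smooth coefficients $\partial_x P$ and $V$, solve the resulting linear SDE, and then identify its time marginals with $f$. Steps~1 and~2 are essentially identical to the paper's argument.

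The genuine difference is in Step~3. You propose weak uniqueness for the linear Fokker--Planck equation via duality: test both $\mu_t$ and $f(t,\cdot,\cdot)\,dx\,du$ against smooth bounded solutions of the backward adjoint equation (which Theorem~\ref{thm:WellposedClassicLinearVFP} indeed supplies through its Feynman--Kac construction), and conclude they agree. The paper instead first invokes H\"ormander's condition and Malliavin calculus (via \cite{Florchinger-90}) to show that $(Y_t,U_t)$ has a $W^{\infty,1}$ hence $\mathcal{C}^\infty$ density $p_t$, so that both $p$ and $f$ are \emph{classical} solutions of the same linear PDE; the maximum principle \eqref{estim:MaxPrincipleVFPLinear} applied to $p-f$ then forces $p\equiv f$. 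Your route is more elementary in that it avoids Malliavin calculus entirely; the paper's route, on the other hand, reduces everything to the pointwise estimate \eqref{estim:MaxPrincipleVFPLinear} already established in the appendix, at the price of importing the hypoelliptic regularity machinery.

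Two minor remarks. First, the non-negativity claim in Step~1 is not a consequence of the maximum principle as stated (which only controls $\|f(t)\|_\infty$); it follows instead from the Feynman--Kac representation in the proof of Theorem~\ref{thm:WellposedClassicLinearVFP}. In any case it is not needed: once $\mu_t=f(t,\cdot,\cdot)\,dx\,du$ is established, non-negativity of $f$ is automatic. Second, for your duality argument to go through cleanly you should note that $f=g/\omega$ and its $u$-derivatives decay polynomially at infinity (since $g\in\mathcal{H}(\lambda(t))$ has bounded derivatives and $\omega$ grows), so the integrations by parts against the bounded backward solution produce no boundary terms.
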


 \begin{proof} We deal with  the general case $\beta\in \RR$. Let $f$ be the solution to equation \eqref{eq:VFPK}   given by Theorem \ref{thm:ExistenceNonlinearEspCond} for $f_0$ equal to the periodic (in $x$) extension of $p_0$ to $\RR^2$. We know from Corollary \ref{coro:UniformMassEspCond} that $f$ is $1$- periodic, and so are also  the functions  $P_{f}(t,y):=-\int_{\er} u^{2}f (t,y,u)du$ and $ \Vv_{f}(t,y):=\int_{\er} uf (t,y,u)du$. In addition, since $P_{f}$ and $ \Vv_{f}$   have derivatives of all order in $y\in \RR$ which are bounded in $[0,T]\times  \RR$, the following  stochastic differential equation, where $W_t$ is a standard one dimensional Brownian motion independent of the random variable  $(Y_0,U_0)$,  has a pathwise unique solution  $(Y_t,U_t)$:
\begin{equation}\label{edsR}
\begin{aligned}
& dY_{t}=     \, U_t\,dt, \quad  d U_{t}= \sigma d W_{t} - \partial_{x}  P_{f}(t,Y_t)dt - \beta(U_t-\alpha \Vv_{f}(t,Y_t))dt,
\\
& law (Y_0,U_0) =  \,  f_0(y,u)\mathbf{1}_{ [0,1]}(y) dy\, du.
\end{aligned}
\end{equation}
Now, the  coefficients of \eqref{edsR}
satisfy  H\"{o}rmander's condition. Indeed, introducing the vector fields
\begin{align*}
V_{0}(x,u)&=u\partial_{x} -\left(\partial_{x}P_{f}(t,x) +\beta(u-\alpha \Vv_{f}(t,x))\right)\partial_{u},\\
V_{1}(x,u)&=\sigma \partial_{u}
\end{align*}
 the Lie bracket between $V_{0}$ and $V_{1}$ is given by
\begin{align*}
\left[V_{0},V_{1}\right](y,u)&=V_{0}\circ V_{1}(y,u)-V_{1}\circ V_{0}(y,u)\\
&=u\partial_{y}(\sigma\partial_{u})+\left(\partial_{y}P_{f} (t,y) +\beta(u-\alpha \Vv_{f}(t,y)\right)\partial_{u}(\sigma\partial_{u}) \\
& \qquad
-\sigma\partial_{u}(u\partial_{y})-\sigma\partial_{u}\left(\left(\partial_{y}P_{f} (t,y) +\beta(u-\alpha \Vv_{f}(t,y))\right)\partial_{u}\right)\\
&=-\sigma \partial_{y}+\sigma\partial_{u}\left(\partial_{y}P_{f} (t,y) +\beta(u-\alpha \Vv_{f}(t,y)\right)\partial_{u}\\
&=-\sigma \partial_{y}-\beta\sigma\partial_{u}.
\end{align*}
This shows that $Span\{V_{1},\left[V_{0},V_{1}\right]\}=\er^{2}$.
Thanks to the time regularity of the coefficients, one can use Malliavin calculus to (see \cite{Florchinger-90}) to show that $(Y_t,U_t)$ admits a $W^{\infty,1}(\er^{2})$--density $q_t$ with respect to Lebesgue measure for each $t\in [0,T]$. As a consequence the density $p_t(x,u)=\sum_{k\in \mathbb{Z}} q_t(x+k,u)$ of the random variable $(X_t,U_t)=(\left[Y_t\right] , U_t)$  in $\TT\times \RR$ is itself $W^{\infty,1}(\TT\times\er)$ so that, by classic Sobolev embeddings (see e.g. \cite{Brezis-11}) one gets  that $p_{t}\in\Cc^{\infty}(\TT\times\er)$. We further notice that $p$ is also a classical $\Cc^{1,\infty}$--solution to the linear PDE:
\begin{align*}
&\partial _{t}p +  u \partial_x p -\partial_{x}P_{f}\partial_{u}p  -\beta \left(u-  \alpha \Vv_{f} \right)\partial_{u}p- \frac{\sigma^2}{2} \partial^{2}_u p=\beta p\mbox{ on }(0,T)\times\TT\times\er,\\
&p_{t=0}=f_{0}\mbox{ on }\TT\times\er,
\end{align*}
where $P_{f}$ and $\Vv_{f}$ are considered as data. By smoothness of the functions $p$ and $f$ and the maximum principle \eqref{estim:MaxPrincipleVFPLinear} in Theorem \ref{thm:WellposedClassicLinearVFP} applied to the difference $p-f$ we deduce that $\Vert p_{t}-f(t)\Vert_{\infty}=0$ for all $t$. That is, the law   of $X_{t}=\left[Y_t\right] $ is uniform in $\TT$ for each $t$ and one has $\Vv_{f}(t,X_{t})=\EE(U_{t}|X_{t})$. This implies that $(X_{t},U_{t})$ solves the system of equations \eqref{eq:CVFP}.
\end{proof}

\appendix
\section{Appendix}

\subsection{A weight function of analytic type}\label{sec:WeightAnalycity}

  In this section, we study some properties of the weight function $\omega(u)= c(1+u^2)^{\frac{s}{2}}$
where $c, s>0$ are fixed  constants.  Notice that  if $s>3$ one has  $\int_{\er}\frac{u^{2}}{\omega(u)}du<+\infty$  and the growth conditions on $\omega$ and its derivatives required in \Hw \, are satisfied.
We will now show that the functions $u\mapsto \partial_{u}\ln(\omega(u))=\frac{s u}{(1+u^{2})}$, $u\mapsto h(u)= \frac{\partial^{2}_{u}\omega(u)}{2\omega(u)}-|\partial_{u}\ln(\omega(u))|^{2} = \frac{s  -(s+ s^{2})u^{2}}{2(1+u ^{2})^{2}}$ and  $u\mapsto \hat{h}(u) = h(u) - \beta(1+u\partial_{u}(\ln \omega(u)))$ satisfy
 $ \ln(\omega)\in \tHh(\lambda_{0})$ and $\hat{h},  h \in \Hh(\lambda_{0})$
 \Hw \,  for all $\lambda_0\in [0, 1/4)$. In particular this will prove part i) of Lemma \ref{analyticw}.

Let us first consider $\partial_{u}\ln(\omega)$. We are going to identify $\partial^{l}_{u}  \ln(\omega)$
for $l\geq 1$ with a function of the form $\frac{q_{l}(u)}{(1+u^2)^{l}}$ where $q_{l}$
 is a polynomial function of order $l$ satisfying  $q_1(u)=s u$ and, for all $l\geq 1$,
\begin{align*}
\frac{q_{l+1}(u)}{(1+u^2)^{l+1}}=
\partial_{u}\left(\frac{q_{l}(u)}{(1+u^2)^{l}}\right)=\frac{(1+u^2)\partial_{u}q_{l}(u)-2l u q_{l}(u)}{(1+u^2)^{l+1}},
\end{align*}
or,  equivalently, $q_{l+1}(u)=(1+u^2)\partial_{u}q_{l}(u)-2l u q_{l}(u)$.
We can now determine the coefficients $\{a^{(l)}_{n}\}_{0\leq n\leq l}$ such that  $q_{l}(u)= \sum_{n=0}^l a_n^{(l)}u^n$ observing that, for $l\geq 1$,
\begin{align*}
(1+u^2)\partial_{u}q_{l}(u)-2l u q_{l}(u)&=(1+u^2)\sum_{n=1}^{l}n a^{(l)}_{n}u^{n-1}-2l u\sum_{n=0}^{l} a^{(l)}_{n}u^{n}\\
&=\sum_{n=1}^{l}n a^{(l)}_{n}u^{n-1}+\sum_{n=1}^{l}n a^{(l)}_{n}u^{n+1}-2l \sum_{n=0}^{l} a^{(l)}_{n}u^{n+1}\\
&=\sum_{n=0}^{l-1}(n+1) a^{(l)}_{n+1}u^{n}+\sum_{n=2}^{l+1}(n-1) a^{(l)}_{n-1}u^{n}-2l \sum_{n=1}^{l+1} a^{(l)}_{n-1}u^{n}.
\end{align*}
Therefore, we have $a_0^{(1)}=0$, $a_1^{(1)}=s$,  $a_0^{(2)} = s , a_1^{(2)} = 0 , a_2^{(2)} =  -s$ and,  for $l\geq  2$,
\begin{equation}\label{recurrence}
\begin{aligned}
& a^{(l+1)}_{0}=a^{(l)}_{1},\quad a^{(l+1)}_{1}=2a^{(l)}_{2}-2l a^{(l)}_{0}, \\
& a^{(l+1)}_{n}=(n+1)a^{(l)}_{n+1}+(n-1)a^{(l)}_{n-1}-2l a^{(l)}_{n-1}, \mbox{ if }2\leq n\leq l-1,\\
&\mbox{ and }a^{(l+1)}_{l}=-(l+1)a^{(l)}_{l-1},\quad a_{(l+1)}^{l+1}=- l a^{(l)}_{l}.
\end{aligned}
\end{equation}
Setting $\displaystyle{ a^{(l)}:=\max_{n\in \{0,\dots , l\}}  a_n^{(l)}}$, we deduce the rough estimates: $ a^{(l+1)}\leq  4 (l+1) a^{(l)} $ for $l\geq 2$, and then:  $a^{(l)} \leq \frac{s}{4} 4^{l} l!$ for $l\geq 1$. Thus, for $l\geq 1$
\begin{equation}
|\partial^{l}_{u} (\partial_u (\ln(\omega) )|\leq \frac{\sum_{n=0}^{l+1}  a^{(l+1)}_{n} |u|^{n}}{(1+u^{2})^{l+1}}\leq  \frac{s}{4}  \sum_{n=0}^{l+1} \frac{ 4^{l+1} (l+1)! |u|^{n}}{{(1+u^{2})^{l+1}}}\leq  s 4^{l}(l+2)!.
\end{equation}
Consequently, by Lemma \ref{lem:NormCriterion} we have $\partial_u ( \ln(\omega) )\in {\cal H}(\lambda)$ for all $\lambda \in [0, 1/4)$, and from Lemma \ref{lem:normprop}-(i) we conclude that
$ \ln(\omega) \in \widetilde{{\cal H}}(\lambda)$ for all $\lambda \in [0, 1/4)$.

As for the function $h$, it is similarly checked in this case that for all $l\geq 0$
\begin{align*}
\partial^{l}_{u}h(u)=\frac{r_{l+2}(u)}{2(1+u^2)^{l+2}},
\end{align*}
where $r_{l}$ is a polynomial function of order $l$, defined for $l\geq 2$. The coefficients
 $\{b^{(l)}_{n}\}_{0\leq n\leq l}$  such that $r_l(u)=\sum_{n=0}^{l} b^{(l)}_{n} u^n$ satisfy  $b^{(2)}_{0}=s/2$, $b^{(2)}_{1}=0$, $b^{(2)}_{2}=-(s+s^2)/2$ and moreover,
for all $l\geq 2$, the  recurrence relations \eqref{recurrence}
with  $a^{(l)}_n$ replaced by $b^{(l)}_{n}$. It follows in a similar way as before that
\begin{align*}
|\partial^{l}_{u}h(u)|\leq \frac{s+s^2}{4} 4^l (l+3)!
\end{align*}
and we conclude as well by Lemma \ref{lem:NormCriterion} that $h\in{\cal H}(\lambda)$ for all  $\lambda \in [0, 1/4)$.

Finally,  we have $u\partial_{u}( \ln \omega(u))= s- \frac{s}{1+u^2}$ so that we only need to check that the function $ \frac{s}{1+u^2}$   belongs to  the space ${\cal H}(\lambda)$ for $\lambda \in [0, 1/4)$.
Plainly,  for each $l\geq 0$,   $\partial^{l}_{u}\left( \frac{s}{1+u^2}\right)  = \frac{j_{l}(u)}{(1+u^2)^{l+1}}$ for some polynomial  $j_{l}$  of order $l$. This yields a recurrence relation for the coefficients that  only differs from  \eqref{recurrence} in  that the factors $-2l $ are replaced by $-2(l+1)$. The conclusion thus follows as previously.

\medskip

\medskip

We end this technical section verifying claim  ii) of Lemma  \ref{analyticw}. Since $ \partial_u^j \omega =0$  for $j>s$ and $| \partial_u^j \omega|\leq \kappa(s) (1+u^2)^{\frac{s}{2}} $ for $j\leq s$, we have  $ \partial_u^l \partial_x ^k ( f_0(x,u) \omega(u)) = \sum_{j=0}^{s\wedge l }   \frac{l!}{(l-j)! j!}  \partial_u^j \omega(u)  \partial_u^{l-j}  (\partial_x ^k f_0(x,u) )$ and we deduce from the assumptions  that
$$ \Vert \partial^{l}_x   \partial^{k}_u g_0 \Vert_{\infty}\leq  \kappa(s) \frac{C_0(k+m)!}{\overline{\lambda}^{k+l}}    \sum_{j=0}^l   \frac{l! (l-j+n) !}{(l-j)! }  \frac{ {\overline{\lambda}^j} }{j!}\leq  \kappa(s) \frac{C_0(k+m)! (l+n)!}{\overline{\lambda}^{k+l}}  e^{ {\overline{\lambda}}}$$
using also the fact that $ \frac{l! (l-j+n) !}{(l-j)! } = l! (l-j +n)\cdots (l-j+1) \leq (l+n)!$ for all $j\leq l$.
The last assertion of Lemma  \ref{analyticw} is an immediate consequence of the previous and of the proof of Lemma  \ref{lem:NormCriterion}.

\subsection{A maximum principle for kinetic Fokker-Planck equations}\label{sec:WellPosedLinearVFP}
We next give for completeness a brief proof of the version of the maximum principle  that has been used throughout.
\begin{thm}\label{thm:WellposedClassicLinearVFP}  Let $d\geq 1$ and $\sigma \in \er$. Consider  bounded functions $f_{0}:\er^d\times\er^d\to\er $ and  $F, c: [0,T]\times \er^d \times \er^d\to \er $     and  a function $ \phi: [0,T]\times \er^d \times \er^d\to \er^d$  that grows linearly in $(x,u)$ uniformly in $t\in [0, T]$. Assume moreover that all these functions are of class $\Cc^{0,\infty}$  and have bounded derivatives of all order.  Then, there exists a unique solution $f$  of class  $\Cc^{1,\infty}$ to the linear Fokker-Planck equation  in $Q_T:=[0,T]\times\er^{2d}$ :
\begin{equation*}
\left\{
\begin{aligned}
&\partial_{t}f(t,x,u)+ u\cdot  \nabla_{x}f(t,x,u) -\phi(t,x,u)\cdot \nabla_{u}f(t,x,u)-
\frac{\sigma^2}{2}\triangle_{u}f(t,x,u)+c(t,x,u)f(t,x,u)=F(t,x,u),\\
&f(0,x,u)=f_{0}(x,u)\mbox{ on }\er^{2d}
\end{aligned}
\right.
\end{equation*}
which is bounded in that domain.  Moreover,  the function $t\mapsto  \Vert f(t)\Vert_{\infty}$ is absolutely continuous, and  for almost every $t\in [0,T]$ one has
\begin{equation}
\label{estim:MaxPrincipleVFPLinear}
\frac{d}{dt}\Vert f(t)\Vert_{\infty}\leq \Vert c(t)\Vert_{\infty}\Vert f(t)\Vert_{\infty}+\Vert F(t)\Vert_{\infty}.
\end{equation}
\end{thm}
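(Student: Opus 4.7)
My plan is to prove existence, uniqueness and the maximum principle estimate simultaneously via a Feynman--Kac representation along stochastic characteristics. For each fixed $t\in(0,T]$, the relevant SDE on $[0,t]$ is
\begin{equation*}
dX_s=-U_s\,ds,\qquad dU_s=\phi(t-s,X_s,U_s)\,ds+\sigma\,dW_s,\qquad (X_0,U_0)=(x,u),
\end{equation*}
which admits a unique strong solution $(X_s^{x,u},U_s^{x,u})$ thanks to the linear growth and $\mathcal{C}^{\infty}$-regularity of $\phi$. A direct computation using It\^o's formula and the PDE shows that, for any bounded $\mathcal{C}^{1,\infty}$ solution $f$, the process $M_s:=f(t-s,X_s,U_s)E_s$, with $E_s:=\exp(-\int_0^s c(t-r,X_r,U_r)\,dr)$, satisfies $dM_s=-E_s F(t-s,X_s,U_s)\,ds+\sigma E_s\nabla_u f\cdot dW_s$ (the key cancellation uses $-\partial_t f - U\cdot\nabla_x f+\phi\cdot\nabla_u f+\tfrac{\sigma^2}{2}\Delta_u f=cf-F$, obtained by rearranging the equation). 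Localizing by the first exit time from a ball of radius $R$ and letting $R\to\infty$ (using the boundedness of $f$ and $E_s$), taking expectations yields the Feynman--Kac identity
\begin{equation*}
f(t,x,u)=\EE\!\left[f_0(X_t,U_t)E_t\right]+\EE\!\left[\int_0^t F(t-s,X_s,U_s)E_s\,ds\right].
\end{equation*}
This identity immediately gives uniqueness in the bounded classical class.

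For existence, I would define $f$ by the right-hand side above. Boundedness is immediate, and smoothness in $(x,u)$ follows from the $\mathcal{C}^{\infty}$-differentiability of the stochastic flow $(x,u)\mapsto(X_s^{x,u},U_s^{x,u})$, which holds because $\phi$ has bounded derivatives of all orders; combined with the same regularity for $F$ and $c$, this yields $f\in\mathcal{C}^{0,\infty}$ with each derivative bounded on $[0,T]\times\er^{2d}$. The $\mathcal{C}^1$-regularity in $t$ is then read directly off the PDE. Alternatively, one may invoke classical hypoellipticity results for the associated Kolmogorov operator: the H\"ormander bracket condition is satisfied essentially as verified in Section~\ref{sec:SDE} for the analogous kinetic SDE, and fundamental-solution theory then delivers the required classical solution.

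For the maximum principle estimate, taking absolute values in the Feynman--Kac formula and passing to suprema in $(x,u)$ gives
\begin{equation*}
\|f(t)\|_{\infty}\leq\|f_0\|_{\infty}\exp\!\left(\int_0^t\|c(s)\|_{\infty}\,ds\right)+\int_0^t\|F(\tau)\|_{\infty}\exp\!\left(\int_\tau^t\|c(r)\|_{\infty}\,dr\right)d\tau.
\end{equation*}
Applying the same bound on each subinterval $[t_0,t]\subset[0,T]$ with initial data $f(t_0)$ shows that $t\mapsto\|f(t)\|_{\infty}$ is Lipschitz, hence absolutely continuous, and differentiating the resulting integral inequality at a.e.\ $t$ produces exactly \eqref{estim:MaxPrincipleVFPLinear}. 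The principal technical obstacle is the smoothness claim for the Feynman--Kac candidate: a rigorous proof requires either Malliavin-calculus control of the flow (in the spirit of \cite{Florchinger-90}) or an appeal to the full hypoellipticity theory for degenerate parabolic operators of kinetic type; both are standard but not immediate.
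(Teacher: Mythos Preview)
Your approach via the Feynman--Kac representation along stochastic characteristics is essentially the same as the paper's: both construct the candidate solution as an expectation along the time-reversed SDE, derive uniqueness from the same It\^o computation, and obtain regularity from smoothness of the stochastic flow (the paper phrases this as the Jacobian of the flow satisfying a linear matrix ODE with bounded coefficients, rather than invoking Malliavin calculus or hypoellipticity). The Feynman--Kac identity and the resulting Gronwall-type upper bound on $\|f(t)\|_\infty$ are identical.

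There is however a genuine gap in your absolute-continuity argument. Applying the Feynman--Kac bound on $[t_0,t]$ only yields the one-sided estimate $\|f(t)\|_\infty-\|f(t_0)\|_\infty\le C(t-t_0)$ for $t_0<t$; it does not by itself control $\|f(t_0)\|_\infty-\|f(t)\|_\infty$, so Lipschitz continuity does not follow. (Note that $\partial_t f$ is not globally bounded, since the term $u\cdot\nabla_x f$ in the equation is unbounded, so one cannot simply read off Lipschitz continuity from the PDE either.) The paper closes this gap by observing that $\bar f(t,x,u):=f(T-t,x,u)$ is a classical solution of an equation of the same type, with source $\hat F=-F-\sigma^2\Delta_u\bar f$, which is bounded because $f$ has bounded second derivatives; applying the one-sided bound to $\bar f$ then yields the missing lower estimate $\|f(\theta')\|_\infty-\|f(\theta)\|_\infty\ge -\int_\theta^{\theta'}(\|c\|_\infty\|f\|_\infty+\|F\|_\infty+\sigma^2\|\Delta_u f\|_\infty)\,dr$, and the two together give absolute continuity. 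With that established, your final differentiation step is correct.
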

\begin{proof}  In the case $\sigma \neq 0$,
existence of a bounded solution of class  $\Cc^{1,\infty}$ can be obtained by probabilistic methods,  considering the unique pathwise solution $(X_s^{t,x,u},U_s^{t,x,u})_{t\leq s\leq T} $ to the stochastic differential equation in $\er^d\times \er^d$:
\begin{equation*}
\begin{split}
X_{s}^{t,x,u} & =x -\int_{t}^{s} U_{r}^{t,x,u}dr\\
U_{s}^{t,x,u} & =u +\int_{t}^s \phi(T-r,X_r^{t,x,u},U_r^{t,x,u}) dr+\sigma (W_s- W_t)\\
\end{split}
\end{equation*}
where $W$ is a standard $d$-dimensional Brownian motion defined in some filtered probability space.  Following Friedman \cite{Friedman-06} p.124, one shows that
\begin{align*}
(t,x,u)\mapsto &f(t,x,u):=  \EE\left[f_0(X^{T-t,x,u}_{T},U^{T-t;x,u}_{T})\exp\left\{\int_{T-t}^T c(T-\theta,X^{T-t,x,u}_{\theta},U^{T-t,x,u}_{\theta})\,d\theta\right\}\right]\\
&+\EE\left[\int_{T-t}^{T}F(T-\theta,X^{T-t,x,u}_{\theta},U^{T-t,x,u}_{\theta})\exp\left\{\int_{T-t}^{\theta} c(T-s,X^{T-t,x,u}_{s},U^{T-t,x,u}_{s})ds\right\}d\theta\right]
\end{align*}
is a solution to the Fokker-Planck equation.  Moreover, using It\^o' s formula one shows that any bounded solution has the previous Feynman-Kac representation and is therefore unique  because of  uniqueness in law for the previous SDE. Furthermore, the Jacobian matrix of the flow $(x,u)\mapsto(X_{s}^{t,x,u},U_{s}^{t,x,u})$ satisfies a linear matrix  ODE with bounded coefficient given for each $r\in [0,T]$ by the  Jacobian matrix of the function  $(x,u)\mapsto  (u, \phi(T-r,x,u))$. This implies (by Gronwall's lemma) that $(x,u)\mapsto(X_{s}^{t,x,u},U_{s}^{t,x,u})$ has bounded derivatives of first order, and then of all order by applying inductively a similar argument. Taking derivatives under the expectation sign in the above representation and using moreover  the regularity of
$\rho_0,\phi,c,F$, one then
 deduces   that  $f(t,x,u)$  has bounded derivatives of all order in $(x,u)$.  Now  set $\bar{f}(t,x,u):=f(T-t,x,u)$ and apply  It\^o's formula to get
$$\bar{f}(s, X_{s}^{t,x,u},U_{s}^{t,x,u}) =\bar{f}(t,x,u)+\int_{t}^s  ( c f-F)(T-r, X_{r}^{t,x,u},
U_{r}^{t,x,u} )  dr  + \sigma \int_{t}^s  \nabla  \bar{f}(r, X_{r}^{t,x,u}, U_{r}^{t,x,u} )dW_r$$
for all $t\leq s\leq T$. Taking expectations we deduce that
$$\mathbb{E} f(T-s, X_{s}^{t,x,u},U_{s}^{t,x,u}) = f(T-t,x,u) +\int_{t}^{s} \mathbb{E}
\left[ ( cf -F) (T-r, X_{r}^{t,x,u},
U_{r}^{t,x,u} )  )\right]dr,$$
which implies (taking $\theta=T-s\leq \theta' =T-t$) that
\begin{equation}\label{abscont1}
 \Vert f(\theta' )\Vert_{\infty} - \Vert f(\theta)\Vert_{\infty} \leq    \int_{\theta}^{\theta'}  \Vert c(r)\Vert_{\infty}\Vert f(r)\Vert_{\infty}+\Vert F(r)\Vert_{\infty}\,  dr
 \end{equation}
for all $0\leq \theta  \leq \theta' \leq T$. Notice now that $\bar{f}$ defined above  is a classic solution of the equation
\begin{equation*}
\left\{
\begin{aligned}
&\partial_{t}\bar{f}(t,x,u)- u\cdot  \nabla_{x}\bar{f}(t,x,u) +\bar{\phi}(t,x,u)\cdot \nabla_{u} \bar{f}(t,x,u)-
\frac{\sigma^2}{2}\triangle_{u} \bar{f}(t,x,u)- \bar{c}(t,x,u) \bar{f}(t,x,u)=  \hat{F}(t,x,u),\\
&\bar{f}(0,x,u)=f(T,x,u)\mbox{ on }\er^{2d}
\end{aligned}
\right.
\end{equation*}
where $\bar{\phi}(t,x,u)=\phi(T-t,x,u)$, $\bar{c}(t,x,u)=c(T-t,x,u)$ and  $ \hat{F}(t,x,u)=-F(T-t,x,u) -\sigma^2\triangle_{u} \bar{f}(t,x,u)  $ is a bounded function. We can thus apply  inequality  \eqref{abscont1} to the function $\bar{f}$ and deduce that
\begin{equation*}
 -  \int_{\theta}^{\theta'}  \Vert c(r)\Vert_{\infty}\Vert f(r)\Vert_{\infty}+\Vert F(r)\Vert_{\infty} + \sigma^2 \Vert \triangle_u f(r)\Vert_{\infty} \,  dr \leq  \Vert f(\theta' )\Vert_{\infty} - \Vert f(\theta)\Vert_{\infty}
   \end{equation*}
for all $0\leq \theta  \leq \theta' \leq T$. This inequality  and  \eqref{abscont1} imply that  $t\mapsto  \Vert f(t)\Vert_{\infty}$ is absolutely continuous, and the  upper bound   \eqref{abscont1} then yields the asserted  bound  \eqref{estim:MaxPrincipleVFPLinear}  on the   a.e. derivative.

\medskip

 Finally,  the same arguments go through when considering  the corresponding ordinary differential equation  obtained when taking the limit $\sigma\rightarrow 0$.

\end{proof}

\subsection{Proofs of Lemmas \ref{lemma:NormProductThreeFunctions}, \ref{lemma:NormProductTwoFunctions} and \ref{lemma:NormProductTwoFunctionsEspCond}}\label{sec:LemmasProof}

We provide here  the proofs of  Lemmas \ref{lemma:NormProductThreeFunctions}, \ref{lemma:NormProductTwoFunctions} and \ref{lemma:NormProductTwoFunctionsEspCond} following  arguments of \cite{JabNou-11}. Their truncated versions used in the proof of Proposition \ref{prop:TimeEvolutionNormSolution}  are obtained in  a similar  way, namely replacing in the next proofs the norms $\|\cdot\|_{\lambda,a}$ by their truncated versions  $\|\cdot\|_{\lambda,a, A}$ for each $A\in\NN$, and the sums over $\NN$ by sums over the set $\{0,\dots,A\}$.

\begin{proof}[Proof of Lemma \ref{lemma:NormProductThreeFunctions}] By definition, we have
\begin{align*}
\sum_{a\in\NN}\frac{1}{(a!)^{2}}\frac{d^{a}}{d\lambda^{a}}\left(\Vert f\Vert_{\lambda,0}\Vert  v\Vert_{\lambda,1}\Vert w\Vert_{\lambda,1}\right)=
\sum_{a\in\NN}\frac{1}{(a!)^{2}}\sum_{r=0}^{a}C^{r}_{a}
\frac{d^{r}}{d\lambda^{r}}\Vert f\Vert_{\lambda,0}\frac{d^{a-r}}{d\lambda^{a-r}}\left(\Vert v \Vert_{\lambda,1}\Vert w\Vert_{\lambda,1} \right).
\end{align*}
Then we see that
\begin{align*}
&\sum_{a\in\NN}\frac{1}{(a!)^{2}}\sum_{r=0}^{a}C^{r}_{a}\frac{d^{r}}{d\lambda^{r}}\Vert f\Vert_{\lambda,0}
\frac{d^{a-r}}{d\lambda^{a-r}}\left(\Vert v \Vert_{\lambda,1}\Vert  w\Vert_{\lambda,1}\right)\\
&=\sum_{r\in\NN}\Vert f\Vert_{\lambda,r}\sum_{a=r}^{+\infty}\frac{C^{r}_{a}}{(a!)^{2}}
\frac{d^{a-r}}{d\lambda^{a-r}}\left(\Vert  v \Vert_{\lambda,1}\Vert  w\Vert_{\lambda,1}\right)\,\,\,\,(\mbox{since}\,\frac{d^{p}}{d\lambda^{p}}\Vert\psi\Vert_{\lambda,0}=\Vert\psi\Vert_{\lambda,p}\,\mbox{by definition})\\
&=\sum_{r\in\NN}\Vert f\Vert_{\lambda,r}\sum_{a=r}^{+\infty}\frac{C^{r}_{a}}{(a!)^{2}}
\sum_{q=0}^{a-r}C^{q}_{a-r}\left(\Vert v\Vert_{\lambda,q+1}\Vert  w\Vert_{\lambda,a-q-r+1}\right)\\
&=\sum_{r\in\NN}\Vert f\Vert_{\lambda,r}\sum_{a=0}^{+\infty}
\sum_{q=0}^{a}\frac{C^{r}_{a+r}C^{q}_{a}}{((a+r)!)^{2}}\left(\Vert v \Vert_{\lambda,q+1}\Vert  w\Vert_{\lambda,a-q+1}\right)\,\,\,\,(\mbox{by a change of variables})\\
&=\sum_{r\in\NN}\Vert f\Vert_{\lambda,r}
\sum_{q=0}^{+\infty}\left(\Vert v \Vert_{\lambda,q+1}\sum_{a=q}^{+\infty}\Vert  w\Vert_{\lambda,a-q+1}\frac{C^{r}_{a+r}C^{q}_{a}}{((a+r)!)^{2}}\right)\\
&=\sum_{r\in\NN}\Vert f\Vert_{\lambda,r}
\sum_{q=0}^{+\infty}\left(\Vert v \Vert_{\lambda,q+1}\sum_{a=0}^{+\infty}\Vert  w\Vert_{\lambda,a+1}\frac{C^{r}_{a+r+q}C^{q}_{a+q}}{((a+r+q)!)^{2}}\right).
\end{align*}
Thus,
\begin{align*}
&\sum_{a\in\NN}\frac{1}{(a!)^{2}}\frac{d^{a}}{d\lambda^{a}}\left(\Vert f\Vert_{\lambda,0}\Vert  v \Vert_{\lambda,1}\Vert w\Vert_{\lambda,1}\right)\\
&=\sum_{r\in\NN}\frac{\Vert f\Vert_{\lambda,r}}{(r!)^{2}}
\sum_{q=0}^{+\infty}\frac{(q+1)^{2}}{((q+1)!)^{2}}\Vert v \Vert_{\lambda,q+1}\sum_{a=0}^{+\infty}
\frac{(a+1)^{2}}{((a+1)!)^{2}}\Vert  w\Vert_{\lambda,a+1}\frac{(r!)^{2}((q+1)!)^{2}((a+1)!)^{2}C^{r}_{a+r+q}C^{q}_{a+q}}{(a+1)^{2}(q+1)^{2}((a+r+q)!)^{2}},
\end{align*}
where
\begin{equation*}
\frac{(r!)^{2}((q+1)!)^{2}((a+1)!)^{2}C^{r}_{a+r+q}C^{q}_{a+q}}{(a+1)^{2}(q+1)^{2}((a+r+q)!)^{2}}
=\frac{a! q! r!}{(a+q+r)!}.
\end{equation*}
The claim follows since $ \frac{a! q! r!}{(a+q+r)!}\leq 1,\,\,\forall\,a,q,r\in\NN.$

\end{proof}
\begin{proof}[Proof of Lemma  \ref{lemma:NormProductTwoFunctions}] One has
\begin{align*}
\sum_{a\in\NN}\frac{1}{(a!)^{2}}\frac{d^{a}}{d\lambda^{a}}\left(\Vert f\Vert_{\lambda,1}\Vert v \Vert_{\lambda,1}\right)&=
\sum_{a\in\NN}\frac{1}{(a!)^{2}}\left(\sum_{r=0}^{a}
C^{r}_{a}\frac{d^{r}}{d\lambda^{r}}\Vert f\Vert_{\lambda,1}\frac{d^{a-r}}{d\lambda^{a-r}} \Vert v \Vert_{\lambda,1}\right)\\
&=\sum_{a\in\NN}\frac{1}{(a!)^{2}}\left(\sum_{r=0}^{a}
C^{r}_{a}\Vert f\Vert_{\lambda,r+1} \Vert v \Vert_{\lambda,a-r+1}\right)\,\,\,\,(\mbox{since}\,\frac{d^{p}}{d\lambda^{p}}\Vert\psi\Vert_{\lambda,0}=\Vert\psi\Vert_{\lambda,p}\,\mbox{by definition})\\
&=\sum_{r\in\NN}\Vert f\Vert_{\lambda,r+1}\left(\sum_{a=r}^{+\infty}
\frac{C^{r}_{a}}{(a!)^{2}}\Vert v \Vert_{\lambda,a-r+1}\right)\\
&=\sum_{r\in\NN}\Vert f\Vert_{\lambda,r+1}\left(\sum_{a=0}^{+\infty}
\frac{C^{r}_{a+r}}{((a+r)!)^{2}}\Vert v \Vert_{\lambda,a+1}\right)\\
&=\sum_{r\in\NN}\frac{\Vert f\Vert_{\lambda,r+1}}{((r+1)!)^{2}}\sum_{a\in\NN}\frac{\Vert v \Vert_{\lambda,a+1}}{((a+1)!)^{2}}
\left(\frac{C^{r}_{a+r}((a+1)!)^{2}((r+1)!)^{2}}{((a+r)!)^{2}}\right),
\end{align*}
where
\begin{align*}
\frac{C^{r}_{a+r}((a+1)!)^{2}((r+1)!)^{2}}{((a+r)!)^{2}}&=\frac{(a+1)(r+1)(a+1)!(r+1)!}{(a+r)!}.
\end{align*}
As in \cite{JabNou-11},  we observe that when $a\geq 2$  and $r\geq 2$ one has
\begin{equation*}
\frac{(r+1)(a+1)(a+1)!(r+1)!}{(a+r)!}\leq 24.
\end{equation*}
Indeed, for $r\geq 2$ and $a\geq 3$,
\begin{align*}
\frac{(r+1)(a+1)(a+1)!(r+1)!}{(a+r)!}&=\frac{1\times \cdots \times (r+1) \times (r+1)}{1\times \cdots \times r \times (r+1) \times (r+2) }
\times1 \times 2  \times 3  \times 4 \times\frac{5 \times \cdots  \times a  \times (a+1) \times (a+1)}{  (r+3) \cdots \times (a+r-1) \times (a+r)}\,\
\\
&\leq 4! \times \frac{5  \times \cdots \times (a+1) \times (a+1)}
{(r+3) \times \cdots \times (a+r-1) \times (a+r)}\\
%
&\leq 24,
\end{align*}
where $a\geq 3$ was used in the first expansion  and $r\geq 2$ in the  second inequality.
If $r\geq 2$ and $a=2$ then
\begin{equation*}
\frac{(r+1)(a+1)(a+1)!(r+1)!}{(a+r)!}=18\times\frac{(r+1)(r+1)!}{(r+2)!}\leq 18.
\end{equation*}
If $a\leq 1$  or $r\leq 1$  we have to separate the corresponding  terms in the estimation.
Then, we get
\begin{align*}
&\sum_{a\in\NN}\frac{1}{(a!)^{2}}\sum_{r=0}^{a}C^{r}_{a}
\Vert f\Vert_{\lambda,r+1}\Vert v \Vert_{\lambda,a-r+1}= \sum_{r\in\NN}\frac{\Vert f\Vert_{\lambda,r+1}}{((r+1)!)^{2}}\sum_{a=0}^{+\infty}\frac{\Vert v \Vert_{\lambda,a+1}}{((a+1)!)^{2}}
\left( \frac{(r+1)(a+1)(a+1)!(r+1)!}{(a+r)!}\right)\\
& = \Vert  v \Vert_{\lambda,1}\sum_{r\in\NN}\frac{\Vert f\Vert_{\lambda,r+1}}{((r+1)!)^{2}}(r+1)^2 +
 4   \Vert v \Vert_{\lambda,2}\sum_{r\in \NN}\frac{\Vert f\Vert_{\lambda,r+1}}{((r+1)!)^{2}}( r+1)\\
&\quad  +\Vert f\Vert_{\lambda,1}\sum_{a\geq 2}\frac{\Vert v \Vert_{\lambda,a+1}}{((a+1)!)^{2}}
(a+1)^2 +
 4 \Vert f\Vert_{\lambda,2}\sum_{a\geq 2}\frac{\Vert v \Vert_{\lambda,a+1}}{((a+1)!)^{2}} (a+1) \\
&\quad+24 \sum_{r\geq 2}\frac{\Vert f\Vert_{\lambda,r+1}}{((r+1)!)^{2}}
\sum_{a\geq 2}\frac{\Vert v \Vert_{\lambda,a+1}}{((a+1)!)^{2}}\\
&\leq  \left(\Vert  v \Vert_{\lambda,1}+4 \Vert  v \Vert_{\lambda,2}\right)\Vert f\Vert_{\tHh,\lambda}+\left(\Vert f\Vert_{\lambda,1}+4\Vert f\Vert_{\lambda,2}\right)  \Vert v \Vert_{\tHh,\lambda }
 +24 \sum_{r\geq 2}\frac{\Vert f\Vert_{\lambda,r+1}}{((r+1)!)^{2}}
\sum_{a\geq 2}\frac{\Vert v \Vert_{\lambda,a+1}}{((a+1)!)^{2}}\\
\end{align*}
Since  $
\sum_{a\geq 3}\frac{\Vert v \Vert_{\lambda,a}}{(a!)^{2}}\leq
\Vert v\Vert_{\tHh,\lambda}\wedge \Vert v\Vert_{\Hh,\lambda} $ and  $ \sum_{r\geq 3}\frac{\Vert f \Vert_{\lambda,r}}{(r!)^{2}}\leq  \Vert f \Vert_{\tHh,\lambda}  \wedge \Vert f\Vert_{\Hh,\lambda}  $, the latter expression can be  bounded above by
\begin{multline*}
 \left(\Vert  v \Vert_{\lambda,1}+4 \Vert  v \Vert_{\lambda,2} + 12 \sum_{a\geq 3}\frac{\Vert v \Vert_{\lambda,a}}{(a!)^{2}}
\right)\Vert f\Vert_{\tHh,\lambda}+\left(\Vert f\Vert_{\lambda,1}+4\Vert f\Vert_{\lambda,2}  +12 \sum_{r\geq 3}\frac{\Vert f\Vert_{\lambda,r}}{(r!)^{2}}  \right)  \Vert v \Vert_{\tHh,\lambda } \\
\leq 16 ( \Vert f \Vert_{\tHh,\lambda}\Vert  v \Vert_{\Hh,\lambda} + \Vert  f \Vert_{\Hh,\lambda}\Vert  v \Vert_{\tHh,\lambda})
\end{multline*}
and with the bound $\Vert  w \Vert_{\lambda,1}+4 \Vert  w \Vert_{\lambda,2} + 12 \sum_{a\geq 3}\frac{\Vert w \Vert_{\lambda,a}}{(a!)^{2}}\leq 16 \Vert  w \Vert_{\Hh,\lambda}$  for $w=f,v$
we establish (i). Using the bound $\Vert  v\Vert_{\lambda,1}+4 \Vert  v\Vert_{\lambda,2} + 12 \sum_{a\geq 3}\frac{\Vert v\Vert_{\lambda,a}}{(a!)^{2}}\leq 4\Vert  v\Vert_{\tHh,\lambda}$ we alternatively obtain (ii).
\end{proof}
\begin{proof}[Proof of Lemma \ref{lemma:NormProductTwoFunctionsEspCond}]
Replicating the first computation in the proof of Lemma \ref{lemma:NormProductTwoFunctions},
we obtain
\begin{equation*}
\sum_{a\in\NN}\frac{1}{(a!)^{2}}\frac{d^{a}}{d\lambda^{a}}\left(\Vert f\Vert_{\lambda,0}\Vert v \Vert_{\lambda,1}\right)=
\sum_{r\in\NN}\frac{\Vert f\Vert_{\lambda,r}}{((r)!)^{2}}\sum_{a\in\NN}\frac{(a+1)^{2}}{((a+1)!)^{2}}\Vert v \Vert_{\lambda,a+1}
\left(\frac{C^{r}_{a+r}((a+1)!)^{2}((r)!)^{2}}{((a+r)!)^{2}}\right).
\end{equation*}
Since, for all $a,r\in \NN$,
\begin{equation*}
\frac{C^{r}_{a+r}((a+1)!)^{2}((r)!)^{2}}{((a+r)!)^{2}}=\frac{a!r!}{(a+r)!}\leq 1,
\end{equation*}
we deduce $(i)$. In the same way, $(ii)$ is obtained by replicating the first computation in the proof of Lemma \ref{lemma:NormProductThreeFunctions}
in order to get
\begin{equation*}
\sum_{a\in\NN}\frac{1}{(a!)^{2}}\frac{d^{a}}{d\lambda^{a}}\left(\Vert f\Vert_{\lambda,0}\Vert  v\Vert_{\lambda,1}\Vert w\Vert_{\lambda,0}\right)
=\sum_{r\in\NN}\frac{\Vert f\Vert_{\lambda,r}}{(r!)^{2}}
\sum_{q=0}^{+\infty}\frac{(q+1)^{2}}{((q+1)!)^{2}}\Vert v \Vert_{\lambda,q+1}\sum_{a=0}^{+\infty}
\frac{\Vert  w\Vert_{\lambda,a}}{(a!)^{2}}
\frac{(a!)^{2}(r!)^{2}((q+1)!)^{2}C^{r}_{a+r+q}C^{q}_{a+q}}{(q+1)^{2}((a+r+q)!)^{2}},
\end{equation*}
and by observing that
\begin{equation*}
\frac{(a!)^{2}(r!)^{2}((q+1)!)^{2}C^{r}_{a+r+q}C^{q}_{a+q}}{(q+1)^{2}((a+r+q)!)^{2}}=\frac{a!r!q!}{(a+r+q)!} \leq 1.
\end{equation*}
\end{proof}


{\bf Acknowledgements:}  We would like to thank the anonymous referees for  carefully reading a previous version of this work, for  helpful suggestions  in order to improve its presentation and for pointing out  a gap in one of the proofs.
\bibliographystyle{plain}

\end{document}